\newtheorem*{observation}{Observation.}
\newtheorem*{CSPdichotomy}{CSP dichotomy conjecture.}
\newtheorem*{algdichotomy}{Algebraic CSP dichotomy conjecture.}
\newcommand{\up}{\textup}
\newcommand{\CSP}{\operatorname{CSP}}
\newcommand{\G}{\mathbb{G}}
\newcommand{\A}{\mathbb{A}}
\newcommand{\B}{\mathbb{B}}
\renewcommand{\H}{\mathbb{H}} 
\newcommand{\Pp}{\mathbb{P}}
\newcommand{\Q}{\mathbb{Q}}
\newcommand{\hgt}[1]{\operatorname{hgt}(#1)}
\DeclareMathOperator{\Pol}{Pol}
\DeclareMathOperator{\lvl}{lvl}
\begin{document}

\author{Jakub Bul{\' i}n\inst{1} \and Dejan Deli\'c\inst{2}
    \and Marcel Jackson\inst{3} \and Todd Niven\inst{3}
    \\
    \vspace{3mm}
    \email{jakub.bulin@gmail.com, ddelic@ryerson.ca,
    m.g.jackson@latrobe.edu.au, toddniven@gmail.com}}

\institute{$^1$ Faculty of Mathematics and Physics,\\ Charles University in Prague, Czech Republic\\
           $^2$ Department of Mathematics, Ryerson University,  Canada\\
           $^3$ Department of Mathematics, La Trobe University, Australia}
\title{On the reduction of the CSP dichotomy conjecture to digraphs}
\maketitle

\begin{abstract} 
  It is well known that the constraint satisfaction problem over
  general relational structures can be reduced in polynomial time to
  digraphs.  We present a simple variant of such a reduction and use
  it to show that the algebraic dichotomy conjecture is equivalent to
  its restriction to digraphs and that the polynomial reduction can be
  made in logspace.  We also show that our reduction preserves the
  bounded width property, i.e., solvability by local consistency
  methods.  We discuss further algorithmic properties that are
  preserved and related open problems.

\let\thefootnote\relax\footnote{The
    first author was supported by the grant projects GA\v CR
    201/09/H012, GA UK 67410, SVV-2013-267317; the second author
    gratefully acknowledges support by the Natural Sciences and
    Engineering Research Council of Canada in the form of a Discovery
    Grant; the third and fourth were supported by ARC Discovery
    Project DP1094578; the first and fourth authors were also
    supported by the Fields Institute.}
\end{abstract}

\section{Introduction}
A fundamental problem in constraint programming is to understand the
computational complexity of constraint satisfaction problems
(CSPs). While it is well known that the class of all constraint
problems is NP-complete, there are many subclasses of problems for
which there are efficient solving methods. One way to restrict the
instances is to only allow a fixed set of constraint relations, often
referred to as a \emph{constraint language}~\cite{b-j-k} or
\emph{fixed template}. Classifying the computational complexity of
fixed template CSPs has been a major focus in the theoretical study of
constraint satisfaction. In particular it is of interest to know which
templates produce polynomial time solvable problems to help provide
more efficient solution techniques.

The study of fixed template CSPs dates back to the 1970's with the
work of Montanari~\cite{Montanari} and Schaefer~\cite{sch}. A standout
result from this era is that of Schaefer who showed that the CSPs
arising from constraint languages over 2-element domains satisfy a
\emph{dichotomy}. The decision problem for fixed template CSPs over
finite domains belong to the class NP, and Schaefer showed that in the
2-element domain case, a constraint language is either solvable in
polynomial time or NP-complete. Dichotomies cannot be expected for
decision problems in general, since (under the assumption that
P$\neq$NP) there are many problems in NP that are neither solvable in
polynomial time, nor NP-complete \cite{lad}. Another important
dichotomy was proved by Hell and Ne{\v s}et{\v
  r}il~\cite{helnes:1}. They showed that if a fixed template is a
finite simple graph (the vertices make up the domain and the edges
make up the only allowable constraints), then the corresponding CSP is
either polynomial time solvable or NP-complete. The decision problem
for a graph constraint language can be rephrased as graph homomorphism
problem (a graph homomorphism is a function from the vertices of one
graph to another such that the edges are preserved).  Specifically,
given a fixed graph $\mathcal H$ (the constraint language), an
instance is a graph $\mathcal G$ together with the question ``Is there
a graph homomorphism from $\mathcal G$ to $\mathcal H$?''. In this
sense, $3$-colorability corresponds to $\mathcal H$ being the complete
graph on $3$ vertices. The notion of graph homomorphism problems
naturally extends to directed graph (digraph) homomorphism problems
and to relational structure homomorphism problems.

These early examples of dichotomies, by Schaefer, Hell and Ne{\v
  s}et{\v r}il, form the basis of a larger project of classifying the
complexity of fixed template CSPs.  Of particular importance in this
project is to prove the so-called \emph{CSP Dichotomy Conjecture} of
Feder and Vardi~\cite{fedvar} dating back to 1993. It states that the
CSPs relating to a fixed constraint language over a finite domain are
either polynomial time solvable or NP-complete.  To date this
conjecture remains unanswered, but it has driven major advances in the
study of CSPs.
 
One such advance is the algebraic connection revealed by Jeavons,
Cohen and Gyssens~\cite{JCG97} and later refined by Bulatov, Jeavons
and Krokhin~\cite{b-j-k}. This connection associates with each finite
domain constraint language~$\mathbb{A}$ a finite algebraic structure
$\mathbf{A}$. The properties of this algebraic structure are deeply
linked with the computational complexity of the constraint
language. In particular, for a fixed constraint language~$\mathbb{A}$,
if there does not exist a particular kind of operation, known as a
Taylor polymorphism, then the class of problems determined
by~$\mathbb{A}$ is NP-complete. Bulatov, Jeavons and
Krokhin~\cite{b-j-k} go on to conjecture that all other constraint
languages over finite domains determine polynomial time CSPs (a
stronger form of the CSP Dichotomy Conjecture, since it describes
where the split between polynomial time and NP-completeness
lies). This conjecture is often referred to as the \emph{Algebraic CSP
  Dichotomy Conjecture}.  Many important results have been built upon
this algebraic connection.  Bulatov~\cite{bul3} extended
Schaefer's~\cite{sch} result on 2-element domains to prove the CSP
Dichotomy Conjecture for 3-element domains. Barto, Kozik and
Niven~\cite{b-k-n} extended Hell and Ne{\v s}et{\v r}il's
result~\cite{helnes:1} on simple graphs to constraint languages
consisting of a finite digraph with no sources and no sinks. Barto and
Kozik~\cite{b-k2} gave a complete algebraic description of the
constraint languages over finite domains that are solvable by local
consistency methods (these problems are said to be of \emph{bounded
  width}) and as a consequence it is decidable to determine whether a
constraint language can be solved by such methods.

In their seminal paper, Feder and Vardi~\cite{fedvar} not only
conjectured a dichotomy, they also reduced the problem of proving the
dichotomy conjecture to the particular case of digraph homomorphism
problems, and even to digraph homomorphism problems where the digraph
is balanced (here balanced means that its vertices can be partitioned
into levels).  Specifically, for every template $\mathbb{A}$ (a finite
relational structure of finite type) there is a balanced digraph
(digraphs are particular kinds of relational structures)
$\mathcal{D}(\mathbb A)$ such that the CSP over $\mathbb A$ is
polynomial time equivalent to that over $\mathcal{D}(\mathbb{A})$.

\section{The main results}\label{sec:mainresults}

In general, fixed template CSPs can be modelled as relational
structure homomorphism problems~\cite{fedvar}. For detailed formal
definitions of relational structures, homomorphisms and polymorphisms,
see Section~\ref{sec:definitions}. 

Let $\mathbb{A}$ be a finite structure with signature~$\mathcal{R}$
(the fixed template), then the \emph{constraint satisfaction problem
  for} $\mathbb{A}$ is the following decision problem.

\medskip
\noindent \textbf{Constraint satisfaction problem for $\mathbb A$.}\\
\fbox{\parbox{0.67\textwidth}{ $\mathbf{CSP}(\mathbb A)$ \hrule
    \medskip
    INSTANCE: A finite $\mathcal{R}$-structure $\mathbb{X}$.\\
    QUESTION: Is there a homomorphism from $\mathbb{X}$ to
    $\mathbb{A}$?}}  \medskip

\noindent The dichotomy conjecture~\cite{fedvar} can be stated as
follows:
\begin{CSPdichotomy}
  Let $\mathbb{A}$ be a finite relational structure. Then\linebreak
  $\CSP(\mathbb{A})$ is solvable in polynomial time or NP-complete.
\end{CSPdichotomy}
The dichotomy conjecture is equivalent to its restriction to
digraphs~\cite{fedvar}, and thus can be restated as follows:
\begin{CSPdichotomy}
  Let $\mathbb{H}$ be a finite digraph. Then $\CSP(\mathbb{H})$ is
  solvable in polynomial time or NP-complete.
\end{CSPdichotomy}
Every finite relational structure $\mathbb A$ has a unique \emph{core}
substructure $\mathbb A'$ (see Section~\ref{sec:rel struct} for the
precise definition) such that $\CSP(\mathbb{A})$ and
$\CSP(\mathbb{A}')$ are identical problems, i.e., the ``yes'' and
``no'' instances are precisely the same. The algebraic dichotomy
conjecture~\cite{b-j-k} is the following:
\begin{algdichotomy}
  Let $\mathbb{A}$ be a finite relational structure that is a core. If
  $\mathbb{A}$ has a Taylor polymorphism then $\CSP(\mathbb{A})$ is
  solvable in polynomial time, otherwise $\CSP(\mathbb{A})$ is
  NP-complete.
\end{algdichotomy} 
Indeed, perhaps the above conjecture should be called the
\emph{algebraic tractability conjecture} since it is known that if a
core $\mathbb A$ does not possess a Taylor polymorphism, then
$\CSP(\mathbb A)$ is NP-complete~\cite{b-j-k}.

Feder and Vardi~\cite{fedvar} proved that every fixed template CSP is
polynomial time equivalent to a digraph CSP. This article will provide
the following theorem, which replaces ``polynomial time'' with
``logspace'' and reduces the algebraic dichotomy conjecture to
digraphs.
\begin{theorem}\label{thm:1}
  Let $\mathbb{A}$ be a finite relational structure. There is a finite
  digraph~$\mathcal D(\mathbb A)$ such that
  \begin{enumerate}[(i)]
  \item $\CSP(\mathbb{A})$ and $\CSP(\mathcal D(\mathbb A))$ are
    logspace equivalent,
  \item $\mathbb{A}$ has a Taylor polymorphism if and only if
    $\mathcal D(\mathbb A)$ has a Taylor polymorphism, and
  \item $\mathbb{A}$ is a core if and only if $\mathcal D(\mathbb A)$
    is a core.
  \end{enumerate}
  Furthermore, if $\mathbb{A}$ is a core, then $\CSP(\mathbb{A})$ has
  bounded width if and only if $\CSP(\mathcal D(\mathbb A))$ has
  bounded width.
\end{theorem}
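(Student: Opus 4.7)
The construction of the digraph $\mathcal D(\A)$ follows the gadget-replacement strategy familiar from Feder and Vardi, but with a particularly rigid family of directed paths in place of each tuple. For each $k$-ary relation symbol $R\in\mathcal R$ and each tuple $\mathbf a=(a_1,\dots,a_k)\in R^{\A}$, one replaces the tuple by a balanced ``$R$-gadget'' consisting of one long base path together with $k$ auxiliary prongs attached at carefully chosen levels that encode both the name of $R$ and the coordinate index. The endpoints of the prongs are identified with the vertices $a_1,\dots,a_k$ of $\A$, and distinct relations and coordinates are distinguished solely by prong length and attachment level. The construction is purely local in $\A$, so can be carried out in logspace.

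The first step is to establish the mutual reductions needed for part (i). The map $\mathbb X\mapsto\mathcal D(\mathbb X)$, obtained by applying the same gadget construction to instances, is a logspace reduction from $\CSP(\A)$ to $\CSP(\mathcal D(\A))$: a direct verification shows $\mathbb X\to\A$ if and only if $\mathcal D(\mathbb X)\to\mathcal D(\A)$, since homomorphisms must respect levels (the gadgets being rigidly parameterized by them) and so each prong of $\mathcal D(\mathbb X)$ correctly simulates a constraint of $\mathbb X$. For the converse reduction, given an arbitrary digraph $\G$ one scans $\G$ in logspace for occurrences of the fixed gadget shapes and extracts from them an $\mathcal R$-structure; if $\G$ does not parse as a legal union of gadgets then the procedure outputs a canonical unsatisfiable instance, which is correct because $\mathcal D(\A)$ is itself a disjoint union of those gadgets and any digraph mapping into it must locally look like one.

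Parts (ii) and (iii), together with the bounded-width assertion, are proved in parallel by a level-wise lifting of operations. Any $n$-ary polymorphism $f$ of $\A$ extends to an $n$-ary operation $\widehat f$ of $\mathcal D(\A)$: on level-$0$ (prong endpoint) vertices $\widehat f$ agrees with $f$, and on internal gadget vertices $\widehat f$ is determined by the combinatorial type of the vertex, namely which gadget it belongs to and its level within that gadget. A routine check shows that $\widehat f$ preserves every edge of $\mathcal D(\A)$, that Taylor identities for $f$ transfer to $\widehat f$, and likewise for the weak near-unanimity identities used in the Barto--Kozik characterization of bounded width. Conversely, given a polymorphism $g$ of $\mathcal D(\A)$, one uses the balanced structure of $\mathcal D(\A)$ to normalize $g$ so that it preserves levels, and then restricts $g$ to level-$0$ vertices to obtain a polymorphism of $\A$ satisfying the same identities. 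Core preservation follows the same pattern applied to unary polymorphisms and retractions.

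The main obstacle I anticipate is the level-preserving normalization step on the digraph side. Because $\mathcal D(\A)$ is balanced but the gadgets attached to different tuples can overlap in intricate ways through their shared level-$0$ endpoints, a polymorphism $g$ of $\mathcal D(\A)$ need not preserve levels on the nose. Choosing the gadget parameters --- the lengths of the base paths and of the individual prongs --- so that levels of distinct gadget vertices are globally unambiguous and rigidly detectable, while keeping the whole construction compact enough for a logspace implementation, is the delicate design problem on which the theorem rests; all four conclusions of Theorem~\ref{thm:1} ultimately depend on getting this rigidity right.
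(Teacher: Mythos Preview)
Your proposal has two genuine gaps and one place where you are making life unnecessarily hard for yourself.

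\textbf{The reverse logspace reduction is wrongly argued.} You write that ``if $\G$ does not parse as a legal union of gadgets then the procedure outputs a canonical unsatisfiable instance, which is correct because $\mathcal D(\A)$ is itself a disjoint union of those gadgets and any digraph mapping into it must locally look like one.'' This is false: an instance of $\CSP(\mathcal D(\A))$ is an arbitrary digraph $\G$, and such a $\G$ can perfectly well map into $\mathcal D(\A)$ without being built out of copies of your gadgets. A single edge, or a short oriented path, or any balanced digraph of small height will typically map in. The paper treats exactly this direction as the hard one (Lemma~\ref{lem:reversereduction}) and devotes an entire appendix to it: one must first dispose of short components, then analyse the ``internal components'' between height~$0$ and height~$n$ vertices, extract from each the set of coordinate positions it can encode, and finally perform a logspace amalgamation via undirected reachability to build the $\mathcal R$-structure. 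None of this is recognising fixed gadget shapes.

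\textbf{The polymorphism lifting is not routine.} You call the verification that $\widehat f$ preserves edges and inherits the Taylor/WNU identities ``a routine check''. In the paper this is Lemma~\ref{lem:preserves_WNUs}, the technical heart of the argument. The difficulty is that on a tuple $\mathbf c\in\mathcal D(\A)^m$ whose entries lie in different connecting paths, there is no canonical ``combinatorial type'' to read off; one has to make a choice, and that choice must simultaneously respect edges (so one cannot just project to a fixed coordinate) and the WNU identities (so one cannot treat coordinates asymmetrically). The paper's solution splits into three cases with several subcases, uses the diagonal component $\Delta_m$ of $\mathcal D(\A)^m$, and on the delicate subcase builds an auxiliary homomorphism $\Phi$ from a product of zigzags to a single zigzag. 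Your description gives no hint of how this tension is resolved.

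\textbf{The normalisation step you worry about is unnecessary.} For the converse direction of (ii) and (iii) --- passing from a polymorphism of $\mathcal D(\A)$ back to one of $\A$ --- the paper avoids any level-normalisation by observing (Lemma~\ref{lem:forwardreduction}) that $\A$ is primitive-positively definable inside $\mathcal D(\A)$. Then Lemma~\ref{lem:ppdef_polymorphisms} says every polymorphism of $\mathcal D(\A)$ restricts to a polymorphism of $\A$ automatically, identities intact. This also replaces your instance-level reduction $\mathbb X\mapsto\mathcal D(\mathbb X)$ for one half of (i). The ``delicate design problem'' you anticipate simply does not arise.

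Finally, your gadget (``one long base path with $k$ prongs at chosen levels'') is not the paper's: the paper connects every $a\in A$ to every tuple $\mathbf r\in R$ by a path $\mathbb P_{(a,\mathbf r)}$ built from single edges and zigzags, the pattern of single edges encoding exactly the set $\{i:a=r_i\}$. That specific encoding is what makes both the pp-definition of $\A$ and the polymorphism extension go through; a vaguer construction would need its own verification from scratch.
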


\begin{proof}
  To prove (i), one reduction follows from
  Lemma~\ref{lem:forwardreduction} and
  Lemma~\ref{lem:logspace_reduction}. The other reduction is
  Lemma~\ref{lem:reversereduction}.

  To prove (ii) we employ Theorem \ref{thm:Taylor_iff_WNU}; it
  suffices to show that $\mathbb A$ has a WNU polymorphism if and only
  if $\mathcal D(\mathbb A)$ has a WNU polymorphism. The forward
  implication 
  (which is the crucial part of our proof) 
  is proved in Lemma \ref{lem:preserves_WNUs} and the
  converse follows from Lemma~\ref{lem:forwardreduction} and
  Lemma~\ref{lem:ppdef_polymorphisms}.
  Item (iii) is Corollary~\ref{cor:core}.

  The preservation of bounded width follows from Corollary
  \ref{cor:core}, Lemma~\ref{lem:preserves_WNUs} and Theorem
  \ref{thm:many_WNUs}.\hfill$\qed$
\end{proof}
See Remark~\ref{rem:number} in Section~\ref{sec:reduction} for the
size of $\mathcal D(\mathbb A)$. The ``Taylor polymorphism'' in
Theorem~\ref{thm:1}~(ii) can be replaced by many other polymorphism
properties, but space constraints do not allow us to elaborate here.

As a direct consequence of Theorem~\ref{thm:1} (ii) and (iii) above,
we can restate the algebraic dichotomy conjecture:
\begin{algdichotomy}
  Let $\mathbb{H}$ be a finite digraph that is a core. If $\mathbb{H}$
  has a Taylor polymorphism then $\CSP(\mathbb{H})$ is solvable in
  polynomial time, otherwise $\CSP(\mathbb{H})$ is NP-complete.
\end{algdichotomy}

\section{Background and definitions}\label{sec:definitions}

We approach fixed template constraint satisfaction problems from the
``homomorphism problem'' point of view. For background on the
homomorphism approach to CSPs, see \cite{fedvar}, and for background
on the algebraic approach to CSP, see \cite{b-j-k}.

A \emph{relational signature} $\mathcal{R}$ is a (in our case finite)
set of \emph{relation symbols} $R_i$, each with an associated arity
$k_i$. A (finite) \emph{relational structure} $\mathbb{A}$ \emph{over
  relational signature} $\mathcal{R}$ (called an
\emph{$\mathcal{R}$-structure}) is a finite set $A$ (the
\emph{domain}) together with a relation $R_i\subseteq A^{k_i}$, for
each relation symbol $R_i$ of arity $k_i$ in $\mathcal{R}$. A
\emph{CSP template} is a fixed finite $\mathcal{R}$-structure, for
some signature $\mathcal{R}$.

For simplicity we do not distinguish the relation with its associated
relation symbol, however to avoid ambiguity, sometimes we write
$R^\mathbb A$ to indicate that $R$ belongs to $\mathbb A$. We will
often refer to the domain of relational structure $\mathbb A$ simply
by $A$. When referring to a fixed relational structure, we may simply
specify it as $\mathbb A = (A; R_1,R_2,\dots,R_k)$. For technical
reasons we require that all the relations of a relational structure
are nonempty.

\subsection{Notation}\label{sec:notation}
For a positive integer $n$ we denote the set $\{1,2,\dots,n\}$ by
$[n]$. We write tuples using boldface notation, e.g. $\mathbf
a=(a_1,a_2,\dots,a_k)\in A^k$ and when ranging over tuples we use
superscript notation,
e.g. $(\mathbf{r}^1,\mathbf{r}^2,\dots,\mathbf{r}^l)\in R^l\subseteq
(A^k)^l$, where $\mathbf{r}^i=(r^i_1,r^i_2,\dots,r^i_k)$, for
$i=1,\dots,l$.

Let $R_i\subseteq A^{k_i}$ be relations of arity $k_i$, for
$i=1,\dots,n$. Let $k=\sum_{i=1}^nk_i$ and $l_i=\sum_{j<i}k_j$. We
write $R_1\times\dots\times R_n$ to mean the $k$-ary relation
\[
\{ (a_1,\dots, a_k)\in A^k \mid (a_{l_i+1},\dots,a_{l_i+k_i})\in R_i
\text{ for } i=1,\dots,n\}.
\]

An \emph{$n$-ary operation} on a set $A$ is simply a mapping
$f:A^n\rightarrow A$; the number $n$ is the \emph{arity} of $f$.  Let
$f$ be an $n$-ary operation on $A$ and let $k>0$. We define $f^{(k)}$
to be the $n$-ary operation obtained by applying $f$ coordinatewise on
$A^k$. That is, we define the $n$-ary operation $f^{(k)}$ on $A^k$ by
\[
f^{(k)}(\mathbf a^1,\dots,\mathbf
a^n)=(f(a^1_1,\dots,a^n_1),\dots,f(a^1_k,\dots,a^n_k)),
\]
for $\mathbf a^1,\dots, \mathbf a^n\in A^k$.

We will be particularly interested in so-called idempotent operations.
An $n$-ary operation $f$ is said to be \emph{idempotent} if it
satisfies the equation
  \[
  f(x,x,\dots,x)=x.
  \]

\subsection{Homomorphisms, cores and polymorphisms}\label{sec:homs}
We begin with the notion of a relational structure homomorphism.
\begin{definition}\label{def:hom}
  Let $\mathbb A$ and $\mathbb B$ be relational structures in the same
  signature~$\mathcal{R}$. A \emph{homomorphism} from $\mathbb A$ to
  $\mathbb B$ is a mapping $\varphi$ from $A$ to $B$ such that for
  each $n$-ary relation symbol $R$ in $\mathcal{R}$ and each $n$-tuple
  $\mathbf{a}\in A^n$, if $\mathbf{a}\in R^\mathbb A$, then
  $\varphi(\mathbf{a})\in R^\mathbb B$, where $\varphi$ is applied to
  $\mathbf a$ coordinatewise.
\end{definition}

We write $\varphi:\mathbb A\to\mathbb B$ to mean that $\varphi$ is a
homomorphism from $\mathbb A$ to $\mathbb B$, and $\mathbb A\to\mathbb
B$ to mean that there exists a homomorphism from $\mathbb A$ to
$\mathbb B$. 

An \emph{isomorphism} is a bijective homomorphism $\varphi$ such that
$\varphi^{-1}$ is a homomorphism. A homomorphism $\mathbb A\to\mathbb
A$ is called an \emph{endomorphism}. An isomorphism from $\mathbb A$
to $\mathbb A$ is an \emph{automorphism}. It is an easy fact that if
$\mathbb A$ is finite, then every surjective endomorphism is an
automorphism.

A finite relational structure $\mathbb A'$ is a \emph{core} if every
endomorphism $\mathbb A'\to\mathbb A'$ is surjective (and therefore an
automorphism). For every $\mathbb A$ there exists a relational
structure $\mathbb A'$ such that $\mathbb A\to\mathbb A'$ and $\mathbb
A'\to\mathbb A$ and $\mathbb A'$ is minimal with respect to these
properties. The structure $\mathbb A'$ is called the \emph{core of
  $\mathbb A$}. The core of $\mathbb A$ is unique (up to isomorphism)
and $\CSP(\mathbb A)$ and $\CSP(\mathbb A')$ are the same decision
problems. Equivalently, the core of $\mathbb A$ can be defined as a
minimal induced substructure that $\mathbb A$ retracts
onto. (See~\cite{helnes} for details on cores for graphs, cores for
relational structures are a natural generalisation.)

The notion of \emph{polymorphism} is central in the so called
algebraic approach to $\CSP$. Polymorphisms are a natural
generalization of endomorphisms to higher arity operations.

\begin{definition}
  Given an $\mathcal{R}$-structure $\mathbb{A}$, an $n$-ary
  \emph{polymorphism} of $\mathbb{A}$ is an $n$-ary operation $f$ on
  $A$ such that $f$ preserves the relations of $\mathbb A$. That is,
  if $\mathbf{a}^1,\dots,\mathbf{a}^n\in R$, for some $k$-ary relation
  $R$ in $\mathcal{R}$, then $f^{(k)}(\mathbf a^1,\dots,\mathbf
  a^n)\in R$.  
\end{definition}
Thus, an endomorphism is a $1$-ary polymorphism.

In this paper we will be interested in the following kind of
polymorphisms.
\begin{definition}
  A \emph{weak near-unanimity} \textup(WNU\textup) polymorphism is an $n$-ary
  idempotent polymorphism $\omega$, for some $n\geq 3$, that satisfies
  the following equations \textup(for all $x,y$\textup{):}
  \[
  \omega(x,\dots,x,y)=\omega(x,\dots,x,y,x)=\dots=\omega(y,x,\dots,x).
  \]
  We call the above \emph{WNU equations}.
\end{definition}
Note that since we assume that a WNU polymorphism $\omega$ is
idempotent it also satisfies the equation
\[
\omega(x,x,\dots,x)=x.
\]
Of particular interest, with respect to the algebraic dichotomy
conjecture, are Taylor polymorphisms. We will not need to explicitly
define Taylor polymorphisms (and only need consider WNU polymorphisms)
by the following theorem.

\begin{theorem}{\rm \cite{maroti-mckenzie}} \label{thm:Taylor_iff_WNU}
  A finite relational structure $\mathbb A$ has a Taylor polymorphism
  if and only if $\mathbb A$ has a WNU polymorphism.
\end{theorem}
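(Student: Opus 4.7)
The plan splits naturally by direction. The easier direction, WNU implies Taylor, is essentially unwinding definitions. Given an $n$-ary WNU polymorphism $\omega$ and a coordinate $i \in \{1,\dots,n\}$, pick two tuples from the chain of WNU equations whose $i$th coordinates differ --- for instance the tuple $(x,\dots,x,y,x,\dots,x)$ with $y$ in position $i$ versus any tuple with $y$ in a different position; the WNU identities equate their $\omega$-values, witnessing the Taylor condition at coordinate $i$. Hence $\omega$ is itself a Taylor polymorphism.

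For the substantive direction, Taylor implies WNU, my plan is to work inside the clone of polymorphisms of $\mathbb A$, and more specifically inside the free algebra $\mathbf F$ of this clone on two generators $x,y$. Since there are only finitely many binary functions on $A$, $\mathbf F$ is finite. For an idempotent polymorphism $s$ of arity $m$, define its WNU profile as
\[
\bigl(s(y,x,\dots,x),\ s(x,y,x,\dots,x),\ \dots,\ s(x,\dots,x,y)\bigr) \in \mathbf F^m
\]
and its defect as the number of distinct entries. Then $s$ satisfies the WNU equations exactly when its defect is $1$.

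Starting from the given Taylor polymorphism $t$, the strategy is to construct a sequence $t = s_0, s_1, s_2, \dots$ of idempotent polymorphisms of $\mathbb A$ with strictly decreasing defect, terminating in one of defect $1$. To pass from $s_k$ to $s_{k+1}$, pick two coordinates of $s_k$ whose profile entries differ, locate a Taylor identity $t(\mathbf u^j) = t(\mathbf v^j)$ with $u^j_j \neq v^j_j$ that can be used to equalise those two entries, and substitute copies of $s_k$ into $t$ in an appropriately chosen pattern to obtain $s_{k+1}$. Finiteness of $\mathbf F$ then ensures the process terminates.

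The main obstacle I anticipate is that this composition step must both collapse at least one pair of unequal profile entries and avoid separating any pair of equal ones: a naive substitution easily achieves the first but typically destroys the second. The crux is therefore a careful choice of substitution scheme so that the partition of coordinate positions induced by the profile only refines --- never splits --- as $k$ grows. This bookkeeping, together with a pigeonhole invocation guaranteed by $|\mathbf F| < \infty$, is where the real work of Mar\'oti and McKenzie's argument lies.
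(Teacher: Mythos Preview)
The paper does not prove this theorem. It is quoted as an external result from \cite{maroti-mckenzie} and used as a black box (in the proof of Theorem~\ref{thm:1}(ii), to reduce the Taylor condition to the WNU condition). There is therefore no ``paper's own proof'' to compare your proposal against.

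As to the proposal itself: your easy direction is fine --- a WNU term is visibly Taylor. For the hard direction, what you have written is a plan rather than a proof, and you say so yourself: the entire content of the Mar\'oti--McKenzie argument is precisely the step you label ``the main obstacle'' and then defer. The defect-decreasing scheme you sketch --- substitute copies of $s_k$ into the Taylor term $t$ so as to merge two unequal profile entries without splitting any equal ones --- does not obviously succeed; a generic substitution of $s_k$ into $t$ will change the arity and scramble the profile partition, and nothing you have written pins down a substitution pattern that provably coarsens it. The actual argument in \cite{maroti-mckenzie} is not a naive defect descent of this kind: it passes through tame congruence theory (omitting type~1) or, in later streamlined versions, through an explicit construction that controls the arity (producing a WNU of some specific large arity, e.g.\ a prime larger than $|A|$) rather than iteratively repairing an arbitrary term. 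If you want to give a self-contained proof here, you would need to either reproduce that machinery or cite the Barto--Kozik cyclic-term theorem, which gives the result in one stroke.
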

Weak near-unanimity polymorphisms can be also used to characterise
CSPs of bounded width (see \cite{b-k2} for a detailed explanation of
the bounded width algorithm).
\begin{theorem}{\rm \cite{b-k2,maroti-mckenzie}} \label{thm:many_WNUs}
  Let $\mathbb A$ be a finite relational structure that is a
  core. Then $\CSP(\mathbb A)$ is of bounded width if and only if
  $\mathbb A$ has WNU polymorphisms of all but finitely many arities.
\end{theorem}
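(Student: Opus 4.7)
The statement has two directions, and I would attack them separately, since they have very different flavours. The forward direction (bounded width implies WNUs of cofinitely many arities) is primarily a matter of assembling known algebraic characterisations, while the converse is essentially the Barto--Kozik bounded width theorem and is the deep content.

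For the forward direction, my plan is to proceed by the tame-congruence-theoretic characterisation of bounded width. Since $\mathbb A$ is a core, we may pass to its idempotent polymorphism algebra $\mathbf A$, and then invoke the Larose--Z\'adori/Bulatov characterisation: $\CSP(\mathbb A)$ has bounded width if and only if the variety $\mathcal V=\mathrm{HSP}(\mathbf A)$ omits the unary type \textbf{1} and the affine type \textbf{2}. Having established this, I would apply the Mar\'oti--McKenzie theorem, which states that a locally finite idempotent variety omitting types \textbf{1} and \textbf{2} has a WNU term of every sufficiently large arity (in fact of every arity $n\ge N$ for some $N$). Interpreting these terms in $\mathbf A$ yields WNU polymorphisms of $\mathbb A$ of cofinitely many arities, as required.

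For the converse, the plan is to follow the absorption-based argument of Barto--Kozik. First I would reformulate the claim: assuming a WNU polymorphism $\omega_n$ of $\mathbb A$ for every $n\ge N$, show that for some fixed $(k,\ell)$ depending only on $|A|$, the $(k,\ell)$-consistency algorithm solves $\CSP(\mathbb A)$. The argument would proceed by induction on the size of an instance $\mathbb X$ after running the consistency algorithm. In the inductive step, I would show that if every constraint in the propagated instance is nonempty, then one of two things must happen: either some projection of a constraint admits a proper absorbing subuniverse (in the sense that there is a term $t$ such that applying $t$ to any tuple with all but one entry in $B$ lands in $B$), in which case we can propagate this absorption globally, restrict the instance to absorbing subuniverses on each variable, and apply induction; or no such proper absorption exists, in which case a linkedness / loop-lemma argument, leveraging the WNUs of many arities, forces a solution directly.

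The main obstacle is squarely the absorption step in the converse direction. Proving that the presence of WNUs of cofinitely many arities yields enough absorbing subuniverses, and that these can be propagated through a $(k,\ell)$-consistent instance without destroying consistency, is the substantive contribution of Barto--Kozik and relies on careful analysis of linked and critical relations together with a nontrivial ``absorption theorem'' guaranteeing nontrivial absorption whenever proper subuniverses exist. Every other step--passing to the idempotent algebra, quoting the type-omitting characterisation, and applying Mar\'oti--McKenzie--is comparatively routine once the correct framework is in place.
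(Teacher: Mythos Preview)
The paper does not prove this theorem at all; it is quoted as a known result with citations to \cite{b-k2} (Barto--Kozik) and \cite{maroti-mckenzie} (Mar\'oti--McKenzie), and is used only as a black box in the proof of Theorem~\ref{thm:1}. So there is no ``paper's own proof'' to compare against.

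That said, your outline is a faithful sketch of how the cited works combine to give the result, and the division of labour you describe is accurate: the forward direction follows from the Larose--Z\'adori observation that bounded width forces omitting types \textbf{1} and \textbf{2} together with the Mar\'oti--McKenzie characterisation of that type-omitting condition by WNUs of almost all arities; the converse is precisely the content of the Barto--Kozik bounded width theorem, whose proof does proceed via absorption and analysis of consistent instances along the lines you indicate. One small adjustment: for the converse you need not reconstruct the absorption argument from WNUs directly---it is cleaner first to invoke Mar\'oti--McKenzie again (WNUs of almost all arities $\Rightarrow$ omitting types \textbf{1},\textbf{2}) and then apply Barto--Kozik in its stated form. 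Your identification of the absorption theorem as the genuinely deep step is correct.
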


\subsection{Primitive positive definability}\label{sec:rel struct}
A first order formula is called \emph{primitive positive} if it is an
existential conjunction of atomic formul\ae.  Since we only refer to
relational signatures, a primitive positive formula is simply an
existential conjunct of formul\ae\ of the form $x = y$ or
$(x_1,x_2,\dots,x_n) \in R$, where $R$ is a relation symbol of arity
$n$.

For example, if we have a binary relation symbol $E$ in our signature,
then the formula
\[
\psi(x,y) = (\exists z)((x,z)\in E\ \wedge\ (z,y)\in E),
\]
pp-defines a binary relation in which elements $a,b$ are related if
there is a directed path of length $2$ from $a$ to $b$ in $E$.

\begin{definition}
  A relational structure $\mathbb{B}$ is \emph{primitive positive
    definable} in $\mathbb{A}$ \textup(or~$\mathbb{A}$
  \emph{pp-defines} $\mathbb{B}$\textup) if
\begin{enumerate}[(i)]
\item the set $B$ is a subset of $A$ and is definable by a primitive
  positive formula interpreted in $\mathbb{A}$, and
\item each relation $R$ in the signature of $\mathbb{B}$ is definable
  on the set $B$ by a primitive positive formula interpreted in
  $\mathbb{A}$.
\end{enumerate}
\end{definition}

The following result relates the above definition to the complexity of
CSPs.
\begin{lemma}{\rm \cite{JCG97}}\label{lem:logspace_reduction}
  Let $\mathbb{A}$ be a finite relational structure that pp-defines
  $\mathbb{B}$. Then, $CSP(\mathbb{B})$ is polynomial time {\rm (}indeed,
  logspace{\rm )} reducible to $\CSP(\mathbb{A})$.
\end{lemma}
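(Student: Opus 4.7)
The plan is to reduce each instance $\mathbb{X}$ of $\CSP(\mathbb{B})$ to an instance $\mathbb{Y}$ of $\CSP(\mathbb{A})$ by \emph{unfolding} the pp-formulas that witness $\mathbb{A}$ pp-defining $\mathbb{B}$. Fix, once and for all, a pp-formula $\phi_B(x) = (\exists \mathbf{z})\,\psi_B(x,\mathbf{z})$ defining $B$ in $\mathbb{A}$, and, for each relation symbol $R$ in the signature of $\mathbb{B}$, a pp-formula $\phi_R(\mathbf{x}) = (\exists \mathbf{z})\,\psi_R(\mathbf{x},\mathbf{z})$ defining $R^{\mathbb{B}}$ in $\mathbb{A}$. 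After a preprocessing step that identifies variables connected by equalities, each $\psi$ becomes a conjunction of atoms of the form $(y_{i_1},\dots,y_{i_k}) \in S$ for relation symbols $S$ of $\mathbb{A}$.

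Given $\mathbb{X}$, I construct $\mathbb{Y}$ locally: its domain consists of $X$ together with, for every $v \in X$, a disjoint fresh copy of the existential variables of $\phi_B$, and for every tuple $\mathbf{v}$ appearing in some $R^{\mathbb{X}}$, a fresh copy of the existential variables of $\phi_R$. The relations of $\mathbb{Y}$ are obtained by instantiating the atoms of $\psi_B$ at each $v$ and the atoms of $\psi_R$ at each such $\mathbf{v}$. Correctness is then a short two-way argument: a homomorphism $h\colon\mathbb{X}\to\mathbb{B}$ extends to $\mathbb{Y}\to\mathbb{A}$ by picking, for each existential block, a witness supplied by $\mathbb{A}\models\phi_B(h(v))$ or $\mathbb{A}\models\phi_R(h(\mathbf{v}))$; conversely, any $g\colon\mathbb{Y}\to\mathbb{A}$ restricts to $g|_X\colon\mathbb{X}\to\mathbb{B}$, since the auxiliary witnesses inserted in $\mathbb{Y}$ certify that each $g(v)\in B$ and each $g(\mathbf{v})\in R^{\mathbb{B}}$.

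For the logspace bound, the crucial observation is that $\phi_B$ and the finite family $\{\phi_R\}$ are fixed parameters of the reduction, not part of its input; the normalisation of these formulas happens offline and is hard-coded into the reducer. On input $\mathbb{X}$, the reducer merely scans vertices and constraint tuples of $\mathbb{X}$ and, for each, outputs a constant-size template with variables renamed; if the fresh auxiliary variables are named by a pair consisting of the current vertex or tuple identifier together with a bounded local index, every intermediate value fits in $O(\log|\mathbb{X}|)$ bits. The only delicate point is this uniform logspace naming of auxiliary variables, and since the pp-definitions are fixed, I expect no real obstacle: once the naming scheme is set up, both correctness and the logspace bound follow by direct inspection.
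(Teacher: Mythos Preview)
The paper does not actually prove this lemma: it is stated as a known result with a citation to \cite{JCG97} and no argument is given. Your proposal is precisely the standard gadget reduction that lies behind that citation---unfold each fixed pp-formula, add fresh existential witnesses per vertex and per constraint tuple, and read off the correctness from the semantics of primitive positive formulas---and it is correct as written. The logspace analysis is also right: since $\phi_B$ and the $\phi_R$ are fixed data of the reduction, each vertex or tuple of $\mathbb{X}$ contributes a constant-size block to $\mathbb{Y}$, and your pair-based naming scheme for the auxiliary variables keeps the work tape to $O(\log|\mathbb{X}|)$ bits.
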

It so happens that, if $\mathbb{A}$ pp-defines $\mathbb{B}$, then
$\mathbb{B}$ inherits the polymorphisms of
$\mathbb{A}$. See~\cite{b-j-k} for a detailed explanation. 
\begin{lemma}{\rm \cite{b-j-k}}\label{lem:ppdef_polymorphisms}
  Let $\mathbb{A}$ be a finite relational structure that pp-defines
  $\mathbb{B}$. If $\varphi$ is a polymorphism of $\mathbb A$, then
  its restriction to $B$ is a polymorphism of $\mathbb B$.
\end{lemma}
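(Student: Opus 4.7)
The plan is to establish the standard \emph{pp-invariance} principle: every polymorphism of $\mathbb{A}$ preserves every relation pp-definable in $\mathbb{A}$. The lemma then follows by applying this principle to the pp-formula defining $B$ as a subset of $A$, and to each pp-formula defining a relation of $\mathbb{B}$.

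First I would prove the pp-invariance principle by structural induction on a primitive positive formula $\psi(x_1,\dots,x_k)$: the claim is that if $R = \{\mathbf{a}\in A^k \mid \mathbb{A}\models\psi(\mathbf{a})\}$ and $\varphi$ is an $n$-ary polymorphism of $\mathbb{A}$, then $\varphi^{(k)}$ maps $R^n$ into $R$. For atomic formulas $(x_{i_1},\dots,x_{i_m})\in S$ with $S$ a basic relation of $\mathbb{A}$, preservation is precisely the definition of polymorphism, and the atomic formula $x_i=x_j$ is preserved simply because $\varphi$ is a function. For a conjunction, after padding the conjuncts to a common tuple of free variables one observes that if $\varphi$ preserves two relations it preserves their intersection, since the defining condition on each input tuple survives coordinate by coordinate. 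The essential case is existential quantification: suppose $\psi(\mathbf{x}) = (\exists y)\,\psi'(\mathbf{x},y)$ defines $R$ via the relation $R'$ defined by $\psi'$. Given $\mathbf{a}^1,\dots,\mathbf{a}^n \in R$, choose witnesses $b_1,\dots,b_n\in A$ with $(\mathbf{a}^i,b_i)\in R'$ for each $i$; by the inductive hypothesis, $\varphi^{(k+1)}\bigl((\mathbf{a}^1,b_1),\dots,(\mathbf{a}^n,b_n)\bigr) \in R'$, and its last coordinate then witnesses that $\varphi^{(k)}(\mathbf{a}^1,\dots,\mathbf{a}^n)\in R$.

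Given the principle, the lemma is immediate. Applying it to the pp-formula that defines $B \subseteq A$ shows that $\varphi|_B$ maps $B^n$ into $B$, so it is a well-defined $n$-ary operation on $B$. Applying it to the pp-formula defining each relation $R^\mathbb{B}$ shows that $\varphi|_B$ preserves $R^\mathbb{B}$. Hence $\varphi|_B$ is a polymorphism of $\mathbb{B}$. The only step calling for care is the existential case, where the witnesses $b_i$ must be chosen separately per input tuple and then fused by a single coordinatewise application of $\varphi$; the rest is routine bookkeeping between the polymorphism arity $n$ and the relation arity $k$.
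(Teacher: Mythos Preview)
Your proof is correct and is the standard argument for this well-known fact. Note that the paper does not actually prove this lemma: it is stated with a citation to \cite{b-j-k} and no proof is given, so there is nothing to compare against beyond observing that your structural-induction argument is exactly the classical one.
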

In particular, if $\mathbb A$ pp-defines $\mathbb B$ and $\mathbb A$
has a WNU polymorphism $\omega$, then $\omega$ restricted to $B$ is a
WNU polymorphism of $\mathbb B$.

In the case that $\mathbb{A}$ pp-defines $\mathbb{B}$ and $\mathbb{B}$
pp-defines $\mathbb{A}$, we say that $\mathbb{A}$ and $\mathbb{B}$ are
\emph{pp-equivalent}. In this case, $\CSP(\mathbb{A})$ and
$\CSP(\mathbb{B})$ are essentially the same problems (they are
logspace equivalent) and $\mathbb{A}$ and $\mathbb{B}$ have the same
polymorphisms.

\begin{example}\label{example:1}
  Let $\mathbb{A}=(A;R_1,\dots,R_n)$, where each $R_i$ is $k_i$-ary,
  and define $R=R_1\times \dots \times R_n$. Then the structure
  $\mathbb{A}'=(A;R)$ is pp-equivalent to $\mathbb{A}$.  

  Indeed, let $k=\sum_{i=1}^nk_i$ be the arity of $R$ and
  $l_i=\sum_{j<i}k_j$ for $i=1,\dots,n$. The relation $R$ is
  pp-definable from $R_1,\dots,R_n$ using the formula
  \[
  \Psi(x_1,\dots,x_k)= \bigwedge_{i=1}^n
  (x_{l_i+1},\dots,x_{l_i+k_i})\in R_i.
  \]
  The relation $R_1$ can be defined from $R$ by the primitive positive
  formula
  \[
  \Psi(x_1,\dots,x_{k_1})=(\exists y_{k_1+1},\dots,\exists y_k)
  ((x_1,\dots,x_{k_1},y_{k_1+1},\dots,y_k)\in R)
  \]
  and the remaining $R_i$'s can be defined similarly.
\end{example}
Example \ref{example:1} shows that when proving Theorem \ref{thm:1} we can
restrict ourselves to relational structures with a single relation.

\subsection{Digraphs}
\noindent A \emph{directed graph}, or \emph{digraph}, is a relational
structure $\mathbb{G}$ with a single binary relation symbol $E$ as
its signature. We typically call the members of $G$ and $E^\mathbb{G}$
\emph{vertices} and \emph{edges}, respectively. We usually write $a\to
b$ to mean $(a,b)\in E^\mathbb{G}$, if there is no ambiguity.

A special case of relational structure homomorphism (see
Definition~\ref{def:hom}), is that of digraph homomorphism. That is,
given digraphs $\mathbb G$ and $\mathbb H$, a function $\varphi:G\to
H$ is a homomorphism if $(\varphi(a),\varphi(b))\in E^\mathbb{H}$
whenever $(a,b)\in E^\mathbb{G}$.

\begin{definition}
  For $i=1,\dots, n$, let $\mathbb{G}_i=(G_i,E_i)$ be digraphs. The
  \emph{direct product of} $\mathbb{G}_1,\dots,\mathbb{G}_n$, denoted
  by $\prod_{i=1}^n \mathbb{G}_i$, is the digraph with vertices
  $\prod_{i=1}^nG_i$ {\rm(}the cartesian product of the sets $G_i${\rm
    )} and edge relation
  \[
  \{(\mathbf{a},\mathbf{b})\in (\prod_{i=1}^nG_i)^2 \mid
  (a_i,b_i)\in E_i \text{ for } i=1\dots,n \}.
  \]
  If $\mathbb{G}_1=\dots = \mathbb{G}_n =\mathbb{G}$ then we write
  $\mathbb{G}^n$ to mean $\prod_{i=1}^n \mathbb{G}_i$.
\end{definition}

With the above definition in mind, an $n$-ary polymorphism on a
digraph $\mathbb{G}$ is simply a digraph homomorphism from
$\mathbb{G}^n$ to $\mathbb{G}$. 

\begin{definition}
  A digraph $\mathbb P$ is an \emph{oriented path} if it consists of a
  sequence of vertices $v_0,v_1,\dots,v_k$ such that precisely one of
  $(v_{i-1},v_i),(v_i,v_{i-1})$ is an edge, for each $i=1,\dots,k$.  We
  require oriented paths to have a direction; we denote the
  \emph{initial} vertex $v_0$ and the \emph{terminal} vertex $v_k$ by
  $\iota\mathbb P$ and $\tau\mathbb P$, respectively.
\end{definition}

Given a digraph $\mathbb G$ and an oriented path $\mathbb P$, we write
$a\stackrel{\mathbb{P}}{\longrightarrow} b$ to mean that we can walk
in $\mathbb G$ from $a$ following $\mathbb{P}$ to $b$, i.e., there
exists a homomorphism $\varphi:\mathbb P\to\mathbb G$ such that
$\varphi(\iota\mathbb P)=a$ and $\varphi(\tau\mathbb P)=b$. Note that
for every $\mathbb P$ there exists a primitive positive formula
$\psi(x,y)$ such that $a\stackrel{\mathbb{P}}{\longrightarrow} b$ if
and only if $\psi(a,b)$ is true in $\mathbb{G}$. If there exists an
oriented path $\mathbb P$ such that
$a\stackrel{\mathbb{P}}{\longrightarrow} b$, we say that $a$ and $b$
are \emph{connected}. If vertices $a$ and $b$ are connected, then the
\emph{distance} from $a$ to $b$ is the number of edges in the shortest
oriented path connecting them. Connectedness forms an equivalence
relation on $G$; its classes are called the \emph{connected
  components} of $\mathbb G$. We say that a digraph is connected if it
consists of a single connected component.

A connected digraph is \emph{balanced} if it admits a \emph{level
  function} $\lvl:G\to\mathbb N$, where
$\lvl(b)=\lvl(a)+1$ whenever $(a,b)$ is an edge, and
the minimum level is~$0$. The maximum level is called the
\emph{height} of the digraph. Oriented paths are natural examples of
balanced digraphs.

By a \emph{zigzag} we mean the oriented path
\mbox{$\bullet\boldsymbol\rightarrow\bullet\boldsymbol
  \leftarrow\bullet\boldsymbol\rightarrow\bullet$} and a \emph{single
  edge} is the path \mbox{$\bullet\boldsymbol\rightarrow\bullet$}. For
oriented paths $\mathbb P$ and $\mathbb P'$, the \emph{concatenation
  of $\mathbb P$ and $\mathbb P'$}, denoted by $\mathbb
P\dotplus\mathbb P'$, is the oriented path obtained by identifying
$\tau\mathbb P$ with $\iota\mathbb P'$.

Our digraph reduction as described in Section~\ref{sec:reduction}
relies on oriented paths obtained by concatenation of zigzags and
single edges. For example, the path in Figure~\ref{fig:path} is a
concatenation of a single edge followed by two zigzags and two more
single edges (for clarity, we organise its vertices into levels).
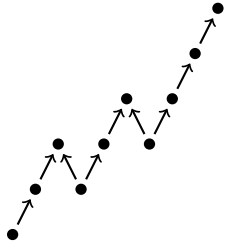
\begin{figure}[h]
\[
\begin{tikzpicture}[thick, on grid, node distance=.6cm and
  .3cm,dot/.style={circle,draw,outer sep=2.5pt,inner sep=0pt,minimum
    size=1.2mm,fill}]
\node [dot] (a) at (0,0) {};
\node [dot,above right=of a] (b)  {};
\node [dot,above right=of b] (c)  {};
\node [dot,below right=of c] (d)  {};
\node [dot,above right=of d] (e)  {};
\node [dot,above right=of e] (f)  {};
\node [dot,below right=of f] (g)  {};
\node [dot,above right=of g] (h)  {};
\node [dot,above right=of h] (i)  {};
\node [dot,above right=of i] (j)  {};
\draw[->] (a) -- (b);
\draw[->] (b) -- (c);
\draw[<-] (c) -- (d);
\draw[->] (d) -- (e);
\draw[->] (e) -- (f);
\draw[<-] (f) -- (g);
\draw[->] (g) -- (h);
\draw[->] (h) -- (i);
\draw[->] (i) -- (j);
\end{tikzpicture}
\]
\caption{A minimal oriented path}\label{fig:path}
\end{figure}

\section{The reduction to digraphs}\label{sec:reduction}

In this section we take an arbitrary finite relational structure
$\mathbb{A}$ and construct a balanced digraph $\mathcal D(\mathbb A)$
such that $\CSP(\mathbb{A})$ and $\CSP(\mathcal D(\mathbb A))$ are
logspace equivalent.

Let $\mathbb A=(A;R_1,\dots,R_n)$ be a finite relational structure,
where $R_i$ is of arity~$k_i$, for $i=1,\dots,n$. Let $k=\sum_{i=1}^n
k_i$ and let $R$ be the $k$-ary relation $R_1\times\dots\times
R_n$. For $\mathcal I\subseteq [k]$ define $\mathbb Q_{\mathcal I,l}$ to be a single
edge if $l\in\mathcal I$, and a zigzag if $l\in[k]\setminus\mathcal I$.

We define the oriented path
$\mathbb Q_\mathcal I$ (of height $k+2$) by
\[
\mathbb Q_\mathcal I\,=\, \mbox{$\bullet\boldsymbol\rightarrow\bullet$}
\dotplus\,\mathbb Q_{\mathcal I,1}\,\dotplus\,\mathbb
Q_{\mathcal I,2}\,\dotplus\,\dots\,\dotplus\,\mathbb Q_{\mathcal I,k}\,\dotplus\,
\mbox{$\bullet\boldsymbol\rightarrow\bullet$}
\]
Instead of $\mathbb Q_\emptyset,\mathbb Q_{\emptyset,l}$ we write just
$\mathbb Q,\mathbb Q_l$, respectively. For example, the oriented path
in Figure~\ref{fig:path} is $\mathbb Q_{\mathcal I}$ where $k=3$ and
$\mathcal I=\{3\}$. We will need the following observation.

\begin{observation}\label{obs:1}
  Let $\mathcal I,\mathcal J\subseteq [k]$. A homomorphism $\varphi:\mathbb
  Q_\mathcal I\to\mathbb Q_\mathcal J$ exists, if and only if $\mathcal I\subseteq\mathcal J$. In
  particular $\mathbb Q\to \mathbb Q_\mathcal I$ for all $\mathcal I\subseteq [k]$.
  Moreover, if $\varphi$ exists, it is unique and surjective.
\end{observation}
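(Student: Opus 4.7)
The plan is to show that any homomorphism $\varphi:\mathbb Q_\mathcal I\to\mathbb Q_\mathcal J$ is rigidly determined by level-preservation, and then walk through the path piece by piece to pin down the map and identify exactly when it exists.

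First I would observe that $\varphi$ preserves levels exactly: since both $\mathbb Q_\mathcal I$ and $\mathbb Q_\mathcal J$ are balanced of the same height $k+2$, and any homomorphism preserves the level difference along every edge (hence between any two connected vertices), the image $\varphi(\iota\mathbb Q_\mathcal I)$ must realize the minimum level $0$ of $\mathbb Q_\mathcal J$ and so equals the unique level-$0$ vertex $\iota\mathbb Q_\mathcal J$; similarly $\varphi(\tau\mathbb Q_\mathcal I)=\tau\mathbb Q_\mathcal J$, and $\lvl_\mathcal J(\varphi(v))=\lvl_\mathcal I(v)$ for every vertex $v$.

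Next I would label the junction vertices $u_0^\mathcal I=\iota\mathbb Q_\mathcal I,u_1^\mathcal I,\dots,u_{k+2}^\mathcal I=\tau\mathbb Q_\mathcal I$, where $u_i^\mathcal I$ lies at level $i$ and is the initial vertex of the $(i+1)$-th piece in the concatenation (initial edge, $\mathbb Q_{\mathcal I,1},\dots,\mathbb Q_{\mathcal I,k}$, final edge), with $u_{i+1}^\mathcal I$ its terminal; the vertices $u_i^\mathcal J$ are defined analogously. I would then prove by induction on $i$ that $\varphi(u_i^\mathcal I)=u_i^\mathcal J$ and that $\varphi$ on the piece between $u_i$ and $u_{i+1}$ is uniquely determined. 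The cases $i=0$ and $i=k+1$ are trivial, because both source and target pieces are single edges and the only forward neighbor of $u_i^\mathcal J$ is $u_{i+1}^\mathcal J$. For $1\le i\le k$ there are four subcases depending on whether $i\in\mathcal I$ and whether $i\in\mathcal J$.

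The key obstruction, and the main thing to verify, is the subcase $i\in\mathcal I\setminus\mathcal J$: the source piece is a single edge whose forward move from $u_i^\mathcal J$ is forced to land on the peak of the target zigzag (at level $i+1$), but the next source edge is a forward edge out of $u_{i+1}^\mathcal I$, and the peak has no outgoing edge to level $i+2$, so the homomorphism cannot be extended. This is the sole obstruction and establishes necessity of $\mathcal I\subseteq\mathcal J$. In each of the other three subcases the target pattern from $u_i^\mathcal J$ up to level $i+1$ admits a unique realization of the source edge pattern, and it yields $\varphi(u_{i+1}^\mathcal I)=u_{i+1}^\mathcal J$: both pieces single edges is immediate; both zigzags forces the identity mapping on the four vertices; and a source zigzag with a target single edge forces the zigzag to fold, sending its two level-$i$ vertices to $u_i^\mathcal J$ and its two level-$(i+1)$ vertices to $u_{i+1}^\mathcal J$. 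Together this gives uniqueness, existence precisely when $\mathcal I\subseteq\mathcal J$, and, inspecting each piece, surjectivity. The ``in particular'' clause is the special case $\mathcal I=\emptyset\subseteq\mathcal J$.
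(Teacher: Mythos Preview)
The paper does not actually supply a proof of this observation; it is stated and left to the reader. Your argument is the natural verification and is essentially correct.

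One small point worth tightening: in the ``both zigzags'' subcase you assert that the identity map on the four vertices is forced, but from $\varphi(u_i^{\mathcal I})=u_i^{\mathcal J}$ alone this is not yet true. Writing the target zigzag as $u_i^{\mathcal J}\to a\leftarrow b\to u_{i+1}^{\mathcal J}$, the source zigzag could also fold onto $\{u_i^{\mathcal J},a\}$ (sending both level-$i$ vertices to $u_i^{\mathcal J}$ and both level-$(i+1)$ vertices to $a$), or send $b'\mapsto b$ but $u_{i+1}^{\mathcal I}\mapsto a$. What rules these out is exactly the look-ahead you already used in the obstruction case: $u_{i+1}^{\mathcal I}$ has an outgoing edge to level $i+2$, whereas the peak $a$ has none, so $\varphi(u_{i+1}^{\mathcal I})$ is forced to be $u_{i+1}^{\mathcal J}$, the unique level-$(i+1)$ vertex with an outgoing edge. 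Once you state this (or, equivalently, observe once and for all that every junction vertex $u_j^{\mathcal I}$ with $1\le j\le k+1$ has both an incoming and an outgoing edge and hence must map to the unique such vertex $u_j^{\mathcal J}$ at its level), the induction goes through cleanly.
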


We are now ready to define the digraph $\mathcal D(\mathbb A)$.
\begin{definition}
  For every $e=(a,\mathbf r)\in A\times R$ we define $\mathbb{P}_e$ to
  be the path $\mathbb Q_{\{i:a=r_i\}}$.  The digraph $\mathcal
  D(\mathbb A)$ is obtained from the digraph $(A\cup R;A\times R)$ by
  replacing every $e=(a,\mathbf r)\in A\times R$ by the oriented path
  $\mathbb{P}_e$ \up(identifying $\iota\mathbb P_e$ with $a$ and
  $\tau\mathbb P_e$ with $\mathbf r$\up).
\end{definition}
(We often write $\mathbb P_{e,l}$ to mean $\mathbb Q_{\mathcal{I},l}$
where $\mathbb P_e=\mathbb Q_\mathcal{I}$.)

\begin{example}
  Consider the relational structure $\mathbb A=(\{ 0,1\}; R)$ where
  $R=\{ (0,1),(1,0)\}$, i.e., $\mathbb A$ is the
  $2$-cycle. Figure~\ref{fig:cycle} is a visual representation of
  $\mathcal D(\mathbb A)$.
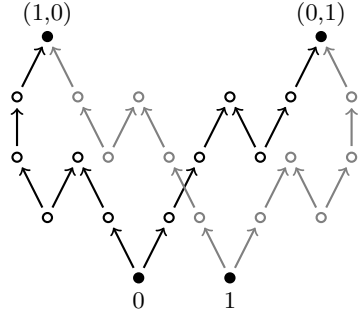
\begin{figure}[h]
\[
\begin{tikzpicture}[thick, on grid, node distance=.8cm and
  .4cm,
  dot/.style={circle,draw,outer sep=2.5pt,inner sep=0pt,minimum size=1.2mm,fill},
  emptydot/.style={circle,draw,outer sep=2.5pt,inner sep=0pt,minimum size=1.2mm,stroke},
  dot2/.style={circle,draw,outer sep=2.5pt,inner sep=0pt,minimum size=1.2mm,fill,color=gray},
  emptydot2/.style={circle,draw,outer sep=2.5pt,inner sep=0pt,minimum size=1.2mm,stroke,color=gray}
]
\node [dot] (a) at (-.6,0) {};
\node [below=3mm of a]{$0$};
\node [emptydot,above right=of a] (b)  {};
\node [emptydot,above right=of b] (e)  {};
\node [emptydot,above right=of e] (f)  {};
\node [emptydot,below right=of f] (g)  {};
\node [emptydot,above right=of g] (h)  {};
\node [dot,above right=of h] (i)  {};
\node [above=3mm of i]{(0,1)};

\draw[->] (a) -- (b);
\draw[->] (b) -- (e);
\draw[->] (e) -- (f);
\draw[<-] (f) -- (g);
\draw[->] (g) -- (h);
\draw[->] (h) -- (i);

\node [dot] (a2) at (.6,0) {};
\node [below=3mm of a2]{$1$};
\node [emptydot2,above left=of a2] (b2)  {};
\node [emptydot2,above left=of b2] (e2)  {};
\node [emptydot2,above left=of e2] (f2)  {};
\node [emptydot2,below left=of f2] (g2)  {};
\node [emptydot2,above left=of g2] (h2)  {};
\node [dot,above left=of h2] (i2)  {};
\node [above=3mm of i2]{(1,0)};

\draw[->,color=gray] (a2) -- (b2);
\draw[->,color=gray] (b2) -- (e2);
\draw[->,color=gray] (e2) -- (f2);
\draw[<-,color=gray] (f2) -- (g2);
\draw[->,color=gray] (g2) -- (h2);
\draw[->,color=gray] (h2) -- (i2);

\node [emptydot2,above right=of a2] (b2)  {};
\node [emptydot2,above right=of b2] (e2)  {};
\node [emptydot2,below right=of e2] (f2)  {};
\node [emptydot2,above right=of f2] (g2)  {};
\node [emptydot2,above =of g2] (h2)  {};

\draw[->,color=gray] (a2) -- (b2);
\draw[->,color=gray] (b2) -- (e2);
\draw[->,color=gray] (f2) -- (e2);
\draw[<-,color=gray] (g2) -- (f2);
\draw[->,color=gray] (g2) -- (h2);
\draw[->,color=gray] (h2) -- (i);

\node [emptydot,above left=of a] (b2)  {};
\node [emptydot,above left=of b2] (e2)  {};
\node [emptydot,below left=of e2] (f2)  {};
\node [emptydot,above left=of f2] (g2)  {};
\node [emptydot,above =of g2] (h2)  {};

 \draw[->] (a) -- (b2);
\draw[->] (b2) -- (e2);
\draw[->] (f2) -- (e2);
\draw[<-] (g2) -- (f2);
\draw[->] (g2) -- (h2);
\draw[->] (h2) -- (i2);

\end{tikzpicture}
\]
\caption{$\mathcal D(\mathbb A)$ where $\mathbb A$ is the
  $2$-cycle}\label{fig:cycle}
\end{figure}
\end{example}

\begin{remark}\label{rem:number}
  The number of vertices in $\mathcal D(\mathbb A)$ is
  $(3k+1)|R||A|+(1-2k)|R|+|A|$ and the number of edges is
  $(3k+2)|R||A|-2k|R|$. The construction of
  $\mathcal D(\mathbb A)$ can be performed in logspace \up(under any
  reasonable encoding\up).
\end{remark}
\begin{proof} The vertices of $\mathcal D(\mathbb A)$ consist of the
  elements of $A\cup R$, along with vertices from the connecting
  paths.  The number of vertices lying strictly within the connecting
  paths would be $(3k+1)|R||A|$ if every $\mathbb P_e$ was $\mathbb
  Q$. We need to deduct~$2$ vertices whenever there is a single edge
  instead of a zigzag and there are $\sum_{(a,\mathbf r)\in A\times
    R}|\{i:a=r_i\}|=k|R|$ such instances.  The number of edges is
  counted very similarly.\hfill$\qed$
\end{proof}

\begin{remark}
  Note that if we apply this construction to itself (that is,
  $\mathcal D(\mathcal D(\mathbb A))$) then we obtain balanced
  digraphs of height~$4$. When applied to digraphs, the $\mathcal{D}$
  construction is identical to that given by Feder and
  Vardi~\cite[Theorem 13]{fedvar}.
\end{remark}
The following lemma, together with Lemma~\ref{lem:logspace_reduction},
shows that $\CSP(\mathbb{A})$ reduces to $\CSP(\mathcal D(\mathbb A))$
in logspace.
\begin{lemma}\label{lem:forwardreduction}
$\mathbb{A}$ is pp-definable from $\mathcal D(\mathbb A)$.
\end{lemma}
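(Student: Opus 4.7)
The plan is as follows. By Example~\ref{example:1}, $\mathbb A=(A;R_1,\dots,R_n)$ is pp-equivalent to $(A;R)$ where $R=R_1\times\cdots\times R_n$; since pp-definability is transitive, it suffices to pp-define $(A;R)$ inside $\mathcal D(\mathbb A)$. The tool I would use throughout is the fact (noted in Section~\ref{sec:definitions}) that for any oriented path $\mathbb P$ the binary walking relation $x\stackrel{\mathbb P}{\longrightarrow}y$ is itself pp-definable in $\mathcal D(\mathbb A)$ by existentially quantifying over the intermediate vertices of the walk. I will apply this for $\mathbb P=\mathbb Q$ and for $\mathbb P=\mathbb Q_{\{i\}}$.

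First I would isolate a structural ``trapping'' observation about walks in $\mathcal D(\mathbb A)$: this digraph is balanced of height $k+2$, the level-$0$ vertices are exactly $A$, the level-$(k+2)$ vertices are exactly $R$, and every other vertex lies at a level in $\{1,\dots,k+1\}$ inside a unique path $\mathbb P_e$. Since each $\mathbb Q_{\mathcal J}$ is balanced with levels in $\{0,\dots,k+2\}$ attained uniquely at the endpoints, any homomorphism from $\mathbb Q_{\mathcal J}$ into $\mathcal D(\mathbb A)$ preserves level. Therefore a walk $x\stackrel{\mathbb Q_{\mathcal J}}{\longrightarrow}y$ forces $x\in A$, $y\in R$, and all intermediate vertices to lie in the interior of a single $\mathbb P_e$ (since such an interior vertex has all its neighbours inside that one path). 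Hence the walk factors as a homomorphism $\mathbb Q_{\mathcal J}\to\mathbb P_{(x,y)}=\mathbb Q_{\mathcal I}$ with $\mathcal I=\{i:x=y_i\}$, and by Observation~\ref{obs:1} such a walk exists if and only if $\mathcal J\subseteq\mathcal I$.

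With this in hand I would pp-define $A$ in $\mathcal D(\mathbb A)$ by
\[
\alpha(x)\;:=\;\exists y\;\bigl(x\stackrel{\mathbb Q}{\longrightarrow}y\bigr),
\]
noting that $\mathbb Q=\mathbb Q_\emptyset$ and $\emptyset\subseteq\mathcal I$ always, so for $x\in A$ any $y\in R$ (which exists by nonemptiness of $R$) witnesses $\alpha$, while for $x$ not at level $0$ no witness can exist. Next I would pp-define $R$ on $A$ by
\[
\rho(x_1,\dots,x_k)\;:=\;\exists y\;\bigwedge_{i=1}^{k}\bigl(x_i\stackrel{\mathbb Q_{\{i\}}}{\longrightarrow}y\bigr).
\]
If $(x_1,\dots,x_k)\in R$, take $y=(x_1,\dots,x_k)$; then $\mathbb P_{(x_i,y)}=\mathbb Q_{\mathcal I_i}$ with $i\in\mathcal I_i$, so each walk exists by Observation~\ref{obs:1}. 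Conversely, any witness $y=\mathbf r=(r_1,\dots,r_k)$ lies in $R$ by the trapping observation, and each walk $x_i\stackrel{\mathbb Q_{\{i\}}}{\longrightarrow}\mathbf r$ factors through $\mathbb P_{(x_i,\mathbf r)}=\mathbb Q_{\mathcal I_i}$, forcing $\{i\}\subseteq\mathcal I_i$, i.e.\ $r_i=x_i$, so $\mathbf r=(x_1,\dots,x_k)\in R$.

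The real content is the trapping observation: showing that a walk along $\mathbb Q_{\mathcal J}$ started at an element of $A$ cannot leak from one $\mathbb P_e$ into another, and cannot revisit $A$ or reach $R$ prematurely. Once this level/disjointness argument is in place, everything else reduces to applying Observation~\ref{obs:1} and assembling the two pp-formulas above.
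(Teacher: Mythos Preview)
Your proof is correct and follows essentially the same approach as the paper: you reduce to $(A;R)$ via Example~\ref{example:1} and then give exactly the same pp-definitions $A=\{x\mid(\exists y)\,x\stackrel{\mathbb Q}{\longrightarrow}y\}$ and $R=\{(x_1,\dots,x_k)\mid(\exists y)\bigwedge_i x_i\stackrel{\mathbb Q_{\{i\}}}{\longrightarrow}y\}$. The paper's proof is terser---it records only that $\mathbb Q\to\mathbb P_e$ always and $\mathbb Q_{\{i\}}\to\mathbb P_{(a,\mathbf r)}$ iff $a=r_i$---whereas you spell out the level/trapping argument needed for the converse directions; but this is elaboration, not a different route.
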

\begin{proof}
  Example~\ref{example:1} demonstrates that $\mathbb A$ is
  pp-equivalent to $(A;R)$. We now show that $\mathcal D(\mathbb A)$
  pp-defines $(A;R)$, from which it follows that $\mathcal D(\mathbb
  A)$ pp-defines $\mathbb{A}$.

  Note that $\mathbb Q\to\mathbb P_e$ for all $e\in A\times R$, and
  $\mathbb Q_{\{i\}}\to\mathbb P_{(a,\mathbf r)}$ if and only if
  $a=r_i$.  The set $A$ is pp-definable in $\mathcal D(\mathbb A)$ by
  $A=\{x\mid (\exists y)(x\stackrel{\mathbb Q}{\longrightarrow} y)\}$
  and the relation~$R$ can be defined as the set
  $\{(x_1,\dots,x_k)\mid (\exists y)(x_i\stackrel{\mathbb
    Q_{\{i\}}}{\longrightarrow}y \text{ for all }i\in[k])\}$, which is
  also a primitive positive definition.\hfill$\qed$
\end{proof}

It is not, in general, possible to pp-define $\mathcal D(\mathbb A)$
from $\mathbb{A}$.\footnote[1]{Using the definition of pp-definability
  as described in this paper, this is true for cardinality
  reasons. However, a result of Kazda \cite{kaz} can be used to show
  that the statement remains true even for more general definitions of
  pp-definability.}  Nonetheless the following lemma is true.

\begin{lemma}\label{lem:reversereduction} 
  $\CSP(\mathcal D(\mathbb A))$ reduces in logspace to
  $\CSP(\mathbb{A})$.
\end{lemma}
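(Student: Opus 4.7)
By Example~\ref{example:1} we may assume $\mathbb A=(A;R)$ has a single $k$-ary relation. Given an instance $\mathbb X$ of $\CSP(\mathcal D(\mathbb A))$, the plan is to build in logspace an instance $\mathbb Y$ of $\CSP(\mathbb A)$ whose variables come from (a quotient of) the level-$0$ vertices of $\mathbb X$ and whose constraints are indexed by the level-$(k+2)$ vertices together with the walks of shape $\mathbb Q_{\{i\}}$.

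\textbf{Preprocessing.} First, I would handle connected components of $\mathbb X$ separately. Using Reingold's theorem, each component can be tested for being balanced, and its level function computed, in logspace; components that are unbalanced or of height greater than $k+2$ are replaced by a trivial no-instance, since $\mathcal D(\mathbb A)$ is balanced of height $k+2$. Any homomorphism $\mathbb X\to\mathcal D(\mathbb A)$ respects levels up to a component-wise shift, so I would pad each component to exactly height $k+2$ in order to pin the shift, thereby forcing the level-$0$ vertices $X_0$ into $A$ and the level-$(k+2)$ vertices $X_{k+2}$ into $R$.

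\textbf{Core observation and construction of $\mathbb Y$.} The key fact, immediate from Observation~\ref{obs:1} together with the internal disjointness of the paths $\mathbb P_e$ across distinct $e\in A\times R$, is that a walk of type $\mathbb Q_\mathcal I$ from level~$0$ to level~$k+2$ in $\mathcal D(\mathbb A)$ is forced to lie within a single $\mathbb P_e$; consequently for $a\in A$ and $\mathbf r\in R$ one has $a\stackrel{\mathbb Q_\mathcal I}{\longrightarrow}\mathbf r$ iff $a=r_i$ for every $i\in\mathcal I$, and in particular $a\stackrel{\mathbb Q_{\{i\}}}{\longrightarrow}\mathbf r$ iff $a=r_i$. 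For each $y\in X_{k+2}$ and $i\in[k]$ I define $X_i^y:=\{x\in X_0:x\stackrel{\mathbb Q_{\{i\}}}{\longrightarrow}y\text{ in }\mathbb X\}$, which is logspace-computable from the pp-definition of $\mathbb Q_{\{i\}}$. The variables of $\mathbb Y$ are the classes of the equivalence relation on $X_0$ generated by $\{(x,x'):\exists y,i\text{ with }x,x'\in X_i^y\}$, together with an auxiliary variable $v_i^y$ for each $(y,i)$ having $X_i^y=\emptyset$; for every $y\in X_{k+2}$ I include the constraint $(z_1^y,\dots,z_k^y)\in R$ where $z_i^y$ is the class of a representative of $X_i^y$ or else $v_i^y$.

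\textbf{Correctness and the main obstacle.} The forward direction is immediate: given $\varphi:\mathbb X\to\mathcal D(\mathbb A)$ with $\varphi(y)=\mathbf r$, the core observation forces $\varphi(x)=r_i$ whenever $x\in X_i^y$, so $\varphi$ descends to a solution of $\mathbb Y$ after assigning each $v_i^y$ the required value. The main obstacle is the reverse direction, where I must extend a solution $\psi:\mathbb Y\to\mathbb A$ from the endpoint levels of $\mathbb X$ to all intermediate vertices while preserving every edge. The plan is: for each intermediate vertex $z$, locate a walk in $\mathbb X$ of some type $\mathbb Q_\mathcal I$ from some $x\in X_0$ through $z$ to some $y\in X_{k+2}$; the constraints imposed on $y$ force $\mathcal I\subseteq\{i:\psi([x])=r_i\}$ (where $\mathbf r=\varphi(y)$), so by the core observation the walk embeds into $\mathbb P_{(\psi([x]),\mathbf r)}$ and thereby specifies a canonical target for $z$. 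Verifying that this target is independent of the chosen walk and consistent across every edge of $\mathbb X$ is the most delicate part of the argument, and will proceed by a careful case analysis exploiting the rigid internal structure of the paths $\mathbb P_e$ at intermediate levels.
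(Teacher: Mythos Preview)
Your $\mathbb Y$ is too weak: there are NO-instances $\mathbb X$ of $\CSP(\mathcal D(\mathbb A))$ whose $\mathbb Y$ is a YES-instance, so the reverse direction cannot go through. The missing idea is that the internal vertices of $\mathcal D(\mathbb A)$ (levels $1,\dots,k+1$) are \emph{partitioned} by the paths $\mathbb P_e$, so any connected subgraph of $\mathbb X$ lying entirely in levels $1,\dots,k+1$ --- an \emph{internal component} in the paper's terminology --- must map into a single $\mathbb P_e$. In particular all level-$0$ neighbours of one internal component are forced to the same element of $A$, and all level-$(k+2)$ neighbours to the same element of $R$. Your equivalence on $X_0$, generated only by co-membership in some $X_i^y$, does not capture this. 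Concretely, take $k=2$, $A=\{0,1\}$, $R=\{(0,1)\}$; put $x_1,x_2$ at level $0$ and $y,y'$ at level $4$; attach a copy of $\mathbb Q_{\{1\}}$ from $x_1$ to $y$ and a copy of $\mathbb Q_\emptyset$ from $x_2$ to $y$, identified at the level-$3$ vertex adjacent to $y$, and separately a copy of $\mathbb Q_{\{2\}}$ from $x_2$ to $y'$. The internal component below $y$ has both $x_1$ and $x_2$ as base vertices, so any homomorphism forces $\varphi(x_1)=\varphi(x_2)$; this common value must equal $r_1=0$ (from the $\mathbb Q_{\{1\}}$ branch) and also $r'_2=1$ (from the $\mathbb Q_{\{2\}}$ path to $y'$), so $\mathbb X\not\to\mathcal D(\mathbb A)$. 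But one checks that there is no $\mathbb Q_{\{i\}}$-walk from $x_2$ to $y$ for any $i$ (any attempt dead-ends inside the $\mathbb Q_\emptyset$ branch), so in your construction $X_1^y=\{x_1\}$, $X_2^y=\emptyset$, and $x_1,x_2$ lie in distinct classes; your $\mathbb Y$ then carries the two independent constraints $([x_1],v_2^y)\in R$ and $(v_1^{y'},[x_2])\in R$, solved by $[x_1]\mapsto 0$, $[x_2]\mapsto 1$. The paper's reduction avoids this by working with internal components $C$: it computes for each the minimal $\Gamma(C)\subseteq[k]$ with $C\to\mathbb Q_{\Gamma(C)}$ (via logspace tests $C\in\CSP(\mathbb Q_{[k]\setminus\{i\}})$), places \emph{all} base vertices of $C$ into every position $i\in\Gamma(C)$, and explicitly records equalities between base vertices (and between top vertices) sharing an internal component.

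A secondary problem is your handling of short components. ``Padding to height $k+2$'' is unsound: a component of height $<k+2$ may only be satisfiable in $\mathcal D(\mathbb A)$ at a nonzero level shift, or only inside a fan of paths $\mathbb P_e$ meeting at some $a\in A$ or some $\mathbf r\in R$; padding at level $0$ kills those solutions. The paper instead tests each short component directly for membership in $\CSP$ of the finitely many relevant paths and fans (which are logspace-solvable), rejecting if any fails and discarding them otherwise.
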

The proof of Lemma \ref{lem:reversereduction} is rather technical,
though broadly follows the polynomial process described in the proof
of \cite[Theorem 13]{fedvar} (as mentioned, our construction coincides
with theirs in the case of digraphs).  Details of the argument will be
presented in a subsequent expanded version of this article.

Lemma \ref{lem:forwardreduction} and Lemma \ref{lem:reversereduction}
complete the proof of part (i) of Theorem \ref{thm:1}.  As this
improves the oft-mentioned polynomial time equivalence of general
CSPs with digraph CSPs, we now present it as stand-alone statement.
\begin{theorem}\label{thm:logspace}
  Every fixed finite template CSP is logspace equivalent to the CSP
  over some finite digraph.
\end{theorem}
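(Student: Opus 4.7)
The plan is to observe that Theorem~\ref{thm:logspace} is a direct repackaging of part~(i) of Theorem~\ref{thm:1}, which has already been proved above via Lemmas~\ref{lem:forwardreduction}, \ref{lem:logspace_reduction} and~\ref{lem:reversereduction}. Given an arbitrary finite template $\mathbb{A}$, the witnessing digraph is $\mathcal{D}(\mathbb{A})$ from Section~\ref{sec:reduction}.

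First I would note that, by Remark~\ref{rem:number}, the digraph $\mathcal{D}(\mathbb{A})$ is itself computable from $\mathbb{A}$ in logspace under any reasonable encoding. This detail is needed because $\mathbb{A}$ is a \emph{fixed} template, so strictly speaking only the instance part of $\CSP(\mathbb{A})$ varies, but one wants to be sure that the two CSPs are genuinely equivalent as decision problems rather than accidentally dependent on a costly precomputation.

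For the reduction $\CSP(\mathbb{A})\leq_{\mathrm{L}}\CSP(\mathcal{D}(\mathbb{A}))$, I would apply Lemma~\ref{lem:forwardreduction}, which supplies a primitive positive definition of $\mathbb{A}$ in $\mathcal{D}(\mathbb{A})$, and then invoke Lemma~\ref{lem:logspace_reduction} to convert this pp-definability into a logspace reduction between the associated CSPs. The pp-formulas involved depend only on the fixed template $\mathbb{A}$ (and hence on $\mathcal{D}(\mathbb{A})$), so their size is constant in the instance and translation is clearly feasible in logspace. For the opposite reduction $\CSP(\mathcal{D}(\mathbb{A}))\leq_{\mathrm{L}}\CSP(\mathbb{A})$, I would simply cite Lemma~\ref{lem:reversereduction}.

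Since both directions are already in hand, there is no genuine obstacle at this stage; the hard work is entirely concentrated in the proof of Lemma~\ref{lem:reversereduction}, whose argument the authors defer to a longer version. Accordingly I would keep the proof of Theorem~\ref{thm:logspace} to two or three lines that assemble the named lemmas and point to Remark~\ref{rem:number} for the logspace constructibility of the template $\mathcal{D}(\mathbb{A})$ itself.
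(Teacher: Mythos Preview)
Your proposal is correct and matches the paper's approach exactly: the paper simply states that Lemmas~\ref{lem:forwardreduction} (together with Lemma~\ref{lem:logspace_reduction}) and~\ref{lem:reversereduction} complete the proof of Theorem~\ref{thm:1}(i), and then records Theorem~\ref{thm:logspace} as a stand-alone restatement. One small remark: since $\mathbb{A}$ is a fixed template, the logspace constructibility of $\mathcal{D}(\mathbb{A})$ from Remark~\ref{rem:number} is not actually needed for the reductions---$\mathcal{D}(\mathbb{A})$ is a constant---so you can safely drop that aside.
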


\section{Preserving cores}
In what follows, let $\mathbb A$ be a fixed finite relational
structure. Without loss of generality we may assume that $\mathbb
A=(A;R)$, where $R$ is a $k$-ary relation
(see Example~\ref{example:1}).

\begin{lemma}\label{lem:end} 
  The endomorphisms of $\mathbb A$ and $\mathcal{D}(\mathbb A)$ are in
  one-to-one correspondence.
\end{lemma}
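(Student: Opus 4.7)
The plan is to construct a bijection between endomorphisms: every endomorphism $\psi$ of $\mathcal{D}(\mathbb{A})$ restricts to the level-$0$ vertex set (which equals $A$) to give an endomorphism $\varphi$ of $\mathbb{A}$, and every endomorphism $\varphi$ of $\mathbb{A}$ extends uniquely to an endomorphism of $\mathcal{D}(\mathbb{A})$. The two ingredients are that $\mathcal{D}(\mathbb{A})$ is a connected balanced digraph of height $k+2$ in which the level-$0$ and level-$(k+2)$ classes are exactly $A$ and $R$, together with the Observation above, which controls homomorphisms between the paths $\mathbb{Q}_{\mathcal{I}}$.

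For the forward direction, I would first show that any endomorphism $\psi$ of $\mathcal{D}(\mathbb{A})$ preserves levels. The quantity $\lvl(\psi(v)) - \lvl(v)$ is constant along each edge (since edges raise the level by exactly one), hence constant on the connected digraph $\mathcal{D}(\mathbb{A})$, and this constant is squeezed to $0$ by evaluating it at a vertex of level $0$ and at a vertex of level $k+2$. Thus $\psi(A) \subseteq A$ and $\psi(R) \subseteq R$; set $\varphi := \psi|_A$. To show $\varphi$ preserves $R$, fix $\mathbf{r} \in R$, write $\psi(\mathbf{r}) = \mathbf{s} \in R$, and note that for each $i \in [k]$ the inclusion $\{i\} \subseteq \{j : r_i = r_j\}$ is trivial, so the Observation gives $r_i \stackrel{\mathbb{Q}_{\{i\}}}{\longrightarrow} \mathbf{r}$ in $\mathcal{D}(\mathbb{A})$. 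Applying $\psi$ yields a walk $\varphi(r_i) \stackrel{\mathbb{Q}_{\{i\}}}{\longrightarrow} \mathbf{s}$; since $\mathbb{Q}_{\{i\}}$ only reaches level $0$ at its initial vertex and level $k+2$ at its terminal vertex, this walk is trapped inside the unique path $\mathbb{P}_{(\varphi(r_i), \mathbf{s})}$. The Observation then forces $\{i\} \subseteq \{j : \varphi(r_i) = s_j\}$, i.e.\ $\varphi(r_i) = s_i$; hence $\mathbf{s} = (\varphi(r_1), \ldots, \varphi(r_k)) \in R$, proving $\varphi$ is an endomorphism of $\mathbb{A}$.

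For the reverse direction, given an endomorphism $\varphi$ of $\mathbb{A}$, I set $\hat{\varphi}(a) := \varphi(a)$ on $A$ and $\hat{\varphi}(\mathbf{r}) := (\varphi(r_1), \ldots, \varphi(r_k))$ on $R$, and on each internal path $\mathbb{P}_{(a,\mathbf{r})} = \mathbb{Q}_{\{i : a = r_i\}}$ I take the unique homomorphism into $\mathbb{P}_{(\varphi(a), \varphi(\mathbf{r}))} = \mathbb{Q}_{\{i : \varphi(a) = \varphi(r_i)\}}$ supplied by the Observation (which applies because $a = r_i$ implies $\varphi(a) = \varphi(r_i)$). These local pieces agree at the shared endpoints and so glue into an endomorphism $\hat{\varphi}$ of $\mathcal{D}(\mathbb{A})$; the uniqueness clause in the Observation makes restriction and extension mutually inverse. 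The main obstacle is the preservation-of-$R$ step: one must combine level preservation with the fact that any walk of length $k+2$ in $\mathcal{D}(\mathbb{A})$ between a level-$0$ and a level-$(k+2)$ vertex is trapped inside a single path $\mathbb{P}_e$, so that the combinatorial content of the Observation can pin down $\varphi(r_i) = s_i$ and hence $\psi(\mathbf{r}) = \varphi(\mathbf{r})$.
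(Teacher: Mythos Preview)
Your proposal is correct and follows essentially the same strategy as the paper: extend an endomorphism of $\mathbb{A}$ along each path $\mathbb{P}_e$ via the unique homomorphism supplied by the Observation, and conversely show that any endomorphism of $\mathcal{D}(\mathbb{A})$ is determined by its restriction to $A$ by analysing how it must act on the connecting paths.

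The one noteworthy difference is in the restriction step. The paper invokes Lemma~\ref{lem:forwardreduction} (pp-definability of $\mathbb{A}$ in $\mathcal{D}(\mathbb{A})$) together with Lemma~\ref{lem:ppdef_polymorphisms} to conclude immediately that $\Phi|_A$ is an endomorphism of $\mathbb{A}$, and then argues separately that $\Phi|_R = (\Phi|_A)^{(k)}$ by looking at single edges in $\mathbb{P}_{(r_l,\mathbf{r}),l}$. You instead give a self-contained argument: you prove level preservation directly from connectedness and height, and then use the $\mathbb{Q}_{\{i\}}$-walk argument to show simultaneously that $\varphi$ preserves $R$ and that $\psi(\mathbf{r}) = \varphi^{(k)}(\mathbf{r})$. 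Your route avoids the forward reference to pp-definability and is slightly more elementary; the paper's route is shorter once those lemmas are in hand. The ``trapped inside a single path'' observation you isolate is exactly what the paper uses implicitly when it asserts that $\Phi$ maps $\mathbb{P}_e$ onto $\mathbb{P}_{(\varphi(a),\Phi(\mathbf{r}))}$.
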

\begin{proof} 
  We first show that every endomorphism $\varphi$ of $\mathbb A$ can
  be extended to an endomorphism $\overline{\varphi}$ of $\mathcal
  D(\mathbb A)$.  Let $\overline{\varphi}(a)=\varphi(a)$ for $a\in A$,
  and let $\overline{\varphi}(\mathbf r)=\varphi^{(k)}(\mathbf r)$ for
  $\mathbf r\in R$. Note that $\varphi^{(k)}(\mathbf r)\in R$ since
  $\varphi$ is an endomorphism of $\mathbb A$.

  Let $c\in\mathcal D(\mathbb A)\setminus(A\cup R)$ and let
  $e=(a,\mathbf r)$ be such that $c\in\mathbb P_e$. Define
  $e'=(\varphi(a),\varphi^{(k)}(\mathbf r))$. If $\mathbb P_{e,l}$ is
  a single edge for some $l\in[k]$, then $r_l=a$ and
  $\varphi(r_l)=\varphi(a)$, and therefore $\mathbb P_{e',l}$ is a
  single edge. Thus there exists a (unique) homomorphism $\mathbb
  P_e\to\mathbb P_{e'}$. Define $\overline{\varphi}(c)$ to be the
  image of $c$ under this homomorphism, completing the definition of
  $\overline{\varphi}$.

  We now show that every endomorphism $\Phi$ of $\mathcal{D}(\mathbb
  A)$ is of the form $\overline{\varphi}$, for some endomorphism
  $\varphi$ of $\mathbb A$.  Let $\Phi$ be an endomorphism of
  $\mathcal D(\mathbb A)$. Let $\varphi$ be the restriction of $\Phi$
  to $A$. By Lemma \ref{lem:ppdef_polymorphisms} and Lemma
  \ref{lem:forwardreduction}, $\varphi$ is an endomorphism of $\mathbb
  A$.  For every $e=(a,\mathbf r)$, the endomorphism $\Phi$ maps
  $\mathbb P_e$ onto $\mathbb P_{(\varphi(a),\Phi(\mathbf r))}$. If we
  set $a=r_l$, then $\mathbb P_{e,l}$ is a single edge. In this case
  it follows that $\mathbb P_{(\varphi(a),\Phi(\mathbf r)),l}$ is also
  a single edge. Thus, by the construction of $\mathcal D(\mathbb A)$
  the $l^\text{th}$ coordinate of $\Phi(\mathbf r)$ is
  $\varphi(a)=\varphi(r_l)$. This proves that the restriction of
  $\Phi$ to $R$ is $\varphi^{(k)}$ and therefore
  $\Phi=\overline{\varphi}$.\hfill$\qed$
\end{proof}

The following corollary is Theorem~\ref{thm:1} (iii).

\begin{corollary}\label{cor:core} 
  $\mathbb{A}$ is a core if and only if $\mathcal D(\mathbb A)$ is a
  core.
\end{corollary}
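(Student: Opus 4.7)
The plan is to deduce the corollary directly from Lemma~\ref{lem:end}. That lemma gives a one-to-one correspondence $\varphi\leftrightarrow\overline{\varphi}$ between endomorphisms of $\mathbb{A}$ and those of $\mathcal D(\mathbb A)$. Since both structures are finite, being a core amounts to every endomorphism being surjective, so it suffices to show that $\varphi$ is surjective on $A$ if and only if $\overline{\varphi}$ is surjective on $\mathcal D(\mathbb A)$.

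For the forward direction, I would suppose that $\varphi$ is an automorphism of $\mathbb{A}$. Injectivity of $\varphi$ guarantees that, for every $e=(a,\mathbf r)\in A\times R$, the sets $\{i:a=r_i\}$ and $\{i:\varphi(a)=\varphi(r_i)\}$ coincide; hence with $e'=(\varphi(a),\varphi^{(k)}(\mathbf r))$ the paths $\mathbb P_e$ and $\mathbb P_{e'}$ are equal as oriented paths, and by Observation~\ref{obs:1} the canonical homomorphism between them is an isomorphism. Moreover, since $\varphi$ is bijective on $A$ and $\varphi^{(k)}$ is bijective on $R$ (with inverse $(\varphi^{-1})^{(k)}$), the assignment $e\mapsto e'$ permutes $A\times R$. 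Using the disjoint decomposition of $\mathcal D(\mathbb A)$ into $A$, $R$, and the interiors of the paths $\mathbb P_e$, these bijections assemble into a global bijection $\overline{\varphi}$.

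For the converse, I would suppose that $\overline{\varphi}$ is surjective, hence bijective by finiteness. By the construction of $\overline{\varphi}$ in Lemma~\ref{lem:end}, it sends $A$ into $A$, $R$ into $R$, and the interior of each $\mathbb P_e$ into the interior of $\mathbb P_{e'}$; since these three pieces of $\mathcal D(\mathbb A)$ are pairwise disjoint, the preimage of any vertex of $A$ must lie entirely in $A$. Bijectivity of $\overline{\varphi}$ then forces $\varphi=\overline{\varphi}|_A$ to be a bijection on $A$, so $\varphi$ is a surjective endomorphism of $\mathbb{A}$. There is no substantial obstacle here; the only point that warrants care is verifying that interior path vertices never map outside the interior of the target path, which is immediate from the explicit description of $\overline{\varphi}$ via the path homomorphism of Observation~\ref{obs:1}.
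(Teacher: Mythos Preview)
Your proposal is correct and follows essentially the same approach as the paper: both reduce the corollary to Lemma~\ref{lem:end} and then show that $\varphi$ is surjective on $A$ if and only if $\overline{\varphi}$ is surjective on $\mathcal D(\mathbb A)$. The only minor difference is in the forward direction: you argue directly that injectivity of $\varphi$ forces the index sets $\{i:a=r_i\}$ and $\{i:\varphi(a)=\varphi(r_i)\}$ to coincide (hence $\mathbb P_e\cong\mathbb P_{e'}$), whereas the paper obtains the same isomorphism by applying the construction of Lemma~\ref{lem:end} to the inverse $\varphi^{-1}$ to get a homomorphism $\mathbb P_{e'}\to\mathbb P_e$ in the reverse direction.
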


\begin{proof}
  To prove the corollary we need to show that an endomorphism
  $\varphi$ of $\mathbb A$ is surjective if and only if
  $\overline{\varphi}$ (from Lemma~\ref{lem:end}) is
  surjective. Clearly, if $\overline{\varphi}$ is surjective then so
  is $\varphi$.

  Assume $\varphi$ is surjective (and therefore an automorphism of
  $\mathbb A$). It follows that $\varphi^{(k)}$ is surjective on $R$ and therefore
  $\overline{\varphi}$ is a bijection when restricted to the set
  $A\cup R$.  Let $a\in A$ and $\mathbf r \in R$. By definition we
  know that $\overline{\varphi}$ maps $\mathbb{P}_{(a,\mathbf r)}$
  homomorphically onto $\mathbb{P}_{(\varphi(a),\varphi^{(k)}(\mathbf
    r))}$. Since $\varphi$ has an inverse $\varphi^{-1}$, it follows
  that $\overline{\varphi^{-1}}$ maps
  $\mathbb{P}_{(\varphi(a),\varphi^{(k)}(\mathbf r))}$ homomorphically
  onto $\mathbb{P}_{(a,\mathbf r)}$.  Thus $\mathbb{P}_{(a,\mathbf
    r)}$ and $\mathbb{P}_{(\varphi(a),\varphi^{(k)}(\mathbf r))}$ are
  isomorphic, completing the proof.\hfill$\qed$
\end{proof}

To complete the proof of Theorem~\ref{thm:1}, it remains to show that
our reduction preserves WNUs.

\section{The reduction preserves WNUs}
For $m>0$ let $\Delta_m$ denote the connected component of the digraph
$\mathcal D(\mathbb A)^m$ containing the \emph{diagonal} (i.e., the
set $\{(c,c,\dots,c)\mid c\in\mathcal D(\mathbb A)\}$). 
\begin{lemma}\label{obs:2}
  The elements of the diagonal are all connected in
  $\mathcal{D}(\mathbb A)^m$. Furthermore, $A^m\subseteq\Delta_m$ and
  $R^m\subseteq\Delta_m$.
\end{lemma}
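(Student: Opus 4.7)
The key tool for the entire lemma is Observation~\ref{obs:1}: the ``trivial'' oriented path $\mathbb Q = \mathbb Q_\emptyset$ admits a (unique, surjective) homomorphism to every $\mathbb P_e$. Since both $\mathbb Q$ and each $\mathbb P_e$ are balanced oriented paths of height $k+2$, this homomorphism must send the unique level-$0$ vertex $\iota\mathbb Q$ to the unique level-$0$ vertex of $\mathbb P_{(a,\mathbf r)}$, namely $a$, and similarly send $\tau\mathbb Q$ to $\mathbf r$. Thus for \emph{any} $a\in A$ and $\mathbf r\in R$ we have a walk $a\stackrel{\mathbb Q}{\longrightarrow}\mathbf r$ in $\mathcal D(\mathbb A)$. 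The plan is to parlay this single fact into all three claims by exploiting that an oriented walk $(c_1,\dots,c_m)\stackrel{\mathbb P}{\longrightarrow}(c_1',\dots,c_m')$ in the product digraph $\mathcal D(\mathbb A)^m$ is exactly a simultaneous walk $c_i\stackrel{\mathbb P}{\longrightarrow}c_i'$ in each coordinate along the same path~$\mathbb P$.

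First I would show that the diagonal is connected in $\mathcal D(\mathbb A)^m$. This reduces to checking that $\mathcal D(\mathbb A)$ itself is (weakly) connected, which is immediate from the construction: every $a\in A$ and every $\mathbf r\in R$ are joined by the oriented path $\mathbb P_{(a,\mathbf r)}$, and every intermediate vertex of $\mathcal D(\mathbb A)$ by definition lies on some such $\mathbb P_e$. Given any two diagonal elements $(c,\dots,c)$ and $(c',\dots,c')$, choose an oriented path $\mathbb P$ with $c\stackrel{\mathbb P}{\longrightarrow}c'$ in $\mathcal D(\mathbb A)$; walking $\mathbb P$ identically in all $m$ coordinates yields the required walk in the product. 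This also justifies treating $\Delta_m$ as \emph{the} connected component containing the diagonal.

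For $A^m\subseteq\Delta_m$, given $(a_1,\dots,a_m)\in A^m$, I would fix any $\mathbf r\in R$ and apply the boxed observation in each coordinate: $a_i\stackrel{\mathbb Q}{\longrightarrow}\mathbf r$ for every $i$, so $(a_1,\dots,a_m)\stackrel{\mathbb Q}{\longrightarrow}(\mathbf r,\mathbf r,\dots,\mathbf r)$ in $\mathcal D(\mathbb A)^m$, and the target is on the diagonal. For $R^m\subseteq\Delta_m$ the argument is symmetric: given $(\mathbf r^1,\dots,\mathbf r^m)\in R^m$, pick any $a\in A$ and note $a\stackrel{\mathbb Q}{\longrightarrow}\mathbf r^i$ for every $i$, yielding $(a,\dots,a)\stackrel{\mathbb Q}{\longrightarrow}(\mathbf r^1,\dots,\mathbf r^m)$, which connects the latter tuple to the diagonal.

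There is no serious obstacle here; the only subtlety worth stating explicitly is the level argument that pins down the endpoints of the homomorphism $\mathbb Q\to\mathbb P_{(a,\mathbf r)}$ (so that we really do walk from $a$ to $\mathbf r$ rather than between other vertices), which follows from the uniqueness of extremal-level vertices in the balanced oriented paths $\mathbb Q$ and $\mathbb P_{(a,\mathbf r)}$. Everything else is bookkeeping about product digraphs.
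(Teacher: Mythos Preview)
Your proof is correct and follows essentially the same approach as the paper: both reduce connectedness of the diagonal to connectedness of $\mathcal D(\mathbb A)$, and both establish $A^m,R^m\subseteq\Delta_m$ by fixing one coordinate-wise endpoint and using the homomorphisms $\mathbb Q\to\mathbb P_{(a,\mathbf r^i)}$ (respectively $\mathbb Q\to\mathbb P_{(a_i,\mathbf r)}$) to walk simultaneously along $\mathbb Q$ in the product. Your added remark pinning down the endpoints via levels is a welcome clarification the paper leaves implicit.
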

\begin{proof}
  The first statement follows from the fact that $\mathcal{D}(\mathbb
  A)$ is connected. To see that $A^m\subseteq\Delta_m$ and
  $R^m\subseteq\Delta_m$, fix $a\in A$, and so by definition,
  $(a,a,\dots,a)\in \Delta_m$. Let $(\mathbf r^1,\dots,\mathbf r^m)\in
  R^m$ and for every $i\in[m]$ let $\varphi_i:\mathbb Q\to\mathbb
  P_{(a,\mathbf r^i)}$. The homomorphism defined by $x\mapsto
  (\varphi_1(x),\dots,\varphi_m(x))$ witnesses
  $(a,\dots,a)\stackrel{\mathbb Q}{\longrightarrow}(\mathbf
  r^1,\dots,\mathbf r^m)$ in $\mathcal D(\mathbb A)^m$. Hence
  $R^m\subseteq\Delta_m$. A similar argument gives
  $A^m\subseteq\Delta_m$.~$\qed$
\end{proof}

The following lemma shows that there is only one non-trivial connected
component of $\mathcal{D}(\mathbb{A})^m$ that contains tuples (whose
entries are) on the same level in $\mathcal{D}(\mathbb{A})$; namely
$\Delta_m$. 

\begin{lemma}\label{lem:diagonalcomponent}
  Let $m>0$ and let $\Gamma$ be a connected component of $\mathcal
  D(\mathbb A)^m$ containing an element $\mathbf c$ such that
  $\lvl(c_1)=\dots=\lvl(c_m)$. Then every element
  $\mathbf d \in \Gamma$ is of the form $\lvl(d_1)=\dots = 
  \lvl (d_m)$ and the following hold.
\begin{enumerate}[(i)]
\item If $\mathbf c\to\mathbf d$ is an edge in $\Gamma$ such that
  $\mathbf c\notin A^m$ and $\mathbf d\notin R^m$, then there exist
  $e_1,\dots,e_m\in A\times R$ and $l\in[k]$ such that $\mathbf
  c,\mathbf d\in\prod_{i=1}^m\mathbb P_{e_i,l}$.
\item Either $\Gamma=\Delta_m$ or $\Gamma$ is one-element.
\end{enumerate}
\end{lemma}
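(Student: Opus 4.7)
The argument rests on the balanced structure of $\mathcal{D}(\mathbb{A})$: every edge $(x, y)$ of $\mathcal D(\mathbb A)$ satisfies $\lvl(y) = \lvl(x) + 1$, so every edge of $\mathcal{D}(\mathbb{A})^m$ raises each coordinate's level by exactly one. The pairwise differences $\lvl(x_i) - \lvl(x_j)$ are therefore invariant along every directed edge, and hence along every undirected path. Since $\mathbf c$ has all such differences zero, every element of $\Gamma$ inherits equal levels, establishing the preamble.

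For (i), set $L := \lvl(c_1) = \dots = \lvl(c_m)$. The hypothesis $\mathbf c \notin A^m$ forces $L \ge 1$, while $\mathbf d \notin R^m$ forces $L + 1 \le k + 1$; thus $L \in [k]$ and I put $l := L$. Each edge $c_i \to d_i$ sits inside the unique path $\mathbb P_{e_i}$ containing the interior vertex $c_i$ (interior vertices of the paths $\mathbb P_e$ belong to exactly one path). It cannot sit in the initial or the final single edge of $\mathbb P_{e_i}$, whose endpoints would live in $A$ or $R$; so it sits in some internal segment $\mathbb P_{e_i, l_i}$. Every edge inside a single edge or a zigzag goes strictly from level $l_i$ to $l_i + 1$, so $l_i = L = l$.

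For (ii), the plan is to show that if $|\Gamma| \ge 2$ then $\Gamma$ meets $A^m \cup R^m$, whence $\Gamma = \Delta_m$ by Lemma~\ref{obs:2}. Assume otherwise and let $\mathbf c \in \Gamma$ have minimal level $L$. Since $|\Gamma| \ge 2$, $\mathbf c$ has a neighbour in $\mathcal D(\mathbb A)^m$; it cannot have a predecessor (one would lie at level $L - 1$ in $\Gamma$, contradicting minimality), so it has a successor, forcing each $c_j$ to possess an outgoing edge. Labelling the four vertices of a zigzag $v_0, v_1, v_2, v_3$ from left to right (so the edges are $v_0 \to v_1$, $v_2 \to v_1$, $v_2 \to v_3$), this rules out any $c_j$ being an interior ``$v_1$''. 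If $L = k+1$, the successor of $\mathbf c$ lies in $R^m$, contradicting $\Gamma \cap R^m = \emptyset$. So $L \le k$, and for each $j$ I choose $d_j$ to be the target of the first edge of the segment $\mathbb P_{e_j, L}$---the level-$(L+1)$ vertex reached by stepping once forward from $\iota \mathbb P_{e_j, L}$. The tuple $\mathbf d$ is then a valid successor of $\mathbf c$---for source-like $c_j = v_2$ of a zigzag this uses the zigzag edge $v_2 \to v_1$---and by construction $\mathbf d$ admits the ``all-boundary'' tuple $\mathbf c' := (\iota \mathbb P_{e_1, L}, \dots, \iota \mathbb P_{e_m, L})$ as a predecessor. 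Hence $\mathbf c' \in \Gamma$; but every boundary $\iota \mathbb P_{e_j, L}$ has an incoming edge in $\mathcal D(\mathbb A)$, either from the preceding internal segment or, when $L = 1$, from $A$ via the initial single edge. So $\mathbf c'$ has a predecessor at level $L - 1$ in $\Gamma$, contradicting either the minimality of $L$ (if $L \ge 2$) or the hypothesis $\Gamma \cap A^m = \emptyset$ (if $L = 1$). The main obstacle is producing the ``all-$v_1$'' successor $\mathbf d$ compatibly across all coordinates at once; this relies on the minimality-plus-non-singleton hypothesis eliminating sink-like ``$v_1$'' coordinates, after which the construction is essentially forced.
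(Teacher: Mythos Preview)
Your argument is correct. The preamble and part~(i) match the paper's proof essentially verbatim (with a bit more detail on your side about why the edge lands in a segment with index $l=L$).

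For part~(ii), both proofs share the crucial step of walking from $\mathbf c$ to the ``boundary'' tuple $\mathbf c'=(\iota\mathbb P_{e_1,l},\dots,\iota\mathbb P_{e_m,l})$ via the path $\bullet\to\bullet\leftarrow\bullet$; your case analysis of why each $c_j$ admits an edge to the chosen $d_j$ is exactly what justifies the paper's one-line assertion that this walk exists. The difference lies in the descent from $\mathbf c'$ to $A^m$. The paper does it in one stroke: it observes that for every $i$ there is a homomorphism $\varphi_i:\mathbb Q\to\mathbb P_{e_i}$ sending $\iota\mathbb Q$ to $a_i$ and $\iota\mathbb Q_l$ to $\iota\mathbb P_{e_i,l}$, so the product map $x\mapsto(\varphi_1(x),\dots,\varphi_m(x))$ connects $\mathbf c'$ directly to $(a_1,\dots,a_m)\in A^m$. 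You instead wrap the descent into a minimal-counterexample argument, going down a single level to reach a contradiction. The paper's route is shorter and avoids the contradiction framing; yours is more self-contained in that it never needs to invoke the global path $\mathbb Q$, only local edges. Either is perfectly adequate.
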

\begin{proof}
  First observe that if an element $\mathbf d$ is connected in
  $\mathcal{D}(\mathbb A)^m$ to an element $\mathbf c$ with
  $\lvl (c_1)= \dots = \lvl (c_m)$, then there is an
  oriented path $\mathbb{Q}'$ such that $\mathbf c \stackrel{\mathbb
    Q'}\to \mathbf d$ from which it follows that $\lvl (d_1)=
  \dots = \lvl (d_m)$.

    To prove (i), let $\mathbf c\to\mathbf d$ be an edge in $\Gamma$
  such that $\mathbf c\notin A^m$ and $\mathbf d\notin R^m$. For
  $i=1,\dots, m$ let $e_i$ be such that $c_i\in\mathbb P_{e_i}$ and
  let $l=\lvl (c_1)$. The claim now follows immediately from the
  construction of $\mathcal D(\mathbb A)$.
  
  It remains to prove (ii).  If $|\Gamma|>1$, then there is an edge
  $\mathbf c\to\mathbf d$ in $\Gamma$. If $\mathbf c\in A^m$ or
  $\mathbf d\in R^m$, then the result follows from
  Lemma~\ref{obs:2}. Otherwise, from~(i), there exists $l\in[k]$ and
  $e_i=(a_i,\mathbf r^i)$ such that $\mathbf c,\mathbf
  d\in\prod_{i=1}^m\mathbb P_{e_i,l}$.

  For every $i\in[m]$ we can walk from $c_i$ to $\iota\mathbb
  P_{e_i,l}$ following the path
  \mbox{$\bullet\boldsymbol\rightarrow\bullet\boldsymbol\leftarrow\bullet$};
  and so $\mathbf c$ and $(\iota\mathbb P_{e_1,l},\dots,\iota\mathbb
  P_{e_m,l})$ are connected. For every $i\in[m]$ there exists a
  homomorphism $\varphi_i:\mathbb Q\to\mathbb P_{e_i}$ such that
  $\varphi_i(\iota\mathbb Q)=a_i$ and $\varphi_i(\iota\mathbb
  Q_l)=\iota\mathbb P_{e_i,l}$. The homomorphism $\mathbb Q\to
  \mathcal D(\mathbb A)^m$ defined by $x\mapsto
  (\varphi_1(x),\dots,\varphi_m(x))$ shows that $(a_1,\dots,a_m)$ and
  $(\iota\mathbb P_{e_1,l},\dots,\iota\mathbb P_{e_m,l})$ are
  connected. By transitivity, $(a_1,\dots,a_m)$ is connected to
  $\mathbf c$ and therefore $(a_1,\dots,a_m)\in \Gamma$. Using (i) we
  obtain $\Gamma=\Delta_m$.\hfill$\qed$
\end{proof}

We are now ready to prove the main ingredient of Theorem~\ref{thm:1}
(ii).  The proof of Lemma~\ref{lem:preserves_WNUs} is similar in
essence to the proof of Lemma~\ref{lem:end}, although more
complicated.

\begin{lemma}\label{lem:preserves_WNUs} 
  If $\mathbb A$ has an $m$-ary WNU polymorphism, then $\mathcal
  D(\mathbb A)$ has an $m$-ary WNU polymorphism.
\end{lemma}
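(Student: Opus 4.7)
The plan is to mimic the one-variable extension from Lemma~\ref{lem:end} and build $\Omega$ in two stages: first set $\Omega(\mathbf a)=\omega(\mathbf a)$ for $\mathbf a\in A^m$ and $\Omega(\mathbf r)=\omega^{(k)}(\mathbf r)$ for $\mathbf r\in R^m$ (the latter lies in $R$ because $\omega$ preserves $R$), then extend through the connecting oriented paths. Idempotence of $\Omega$ on the diagonal of $\mathcal D(\mathbb A)$, as well as the WNU equations restricted to tuples from $A^m$ or $R^m$, are then immediate from the corresponding identities for $\omega$.

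For each $\mathbf e=(e_1,\dots,e_m)\in(A\times R)^m$ with $e_i=(a_i,\mathbf r^i)$, the endpoint definitions already pin down $\Omega$ on the level-$0$ vertex $(a_1,\dots,a_m)$ and the level-$(k+2)$ vertex $(\mathbf r^1,\dots,\mathbf r^m)$ of $\prod_i\mathbb P_{e_i}$, which are exactly the endpoints of $\mathbb P_{e^*}$ where $e^*=(\omega(\mathbf a),\omega^{(k)}(\mathbf r))$. The core task is to exhibit a homomorphism $\prod_i\mathbb P_{e_i}\to\mathbb P_{e^*}$ respecting these endpoints, done levelwise across the $k$ interior pieces. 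The key structural observation is that $\mathbb P_{e^*,l}$ is a single edge iff $\omega(\mathbf a)=\omega(r^1_l,\dots,r^m_l)$; whenever every source piece $\mathbb P_{e_i,l}$ is a single edge (i.e.\ $a_i=r^i_l$ for all $i$) the two arguments of $\omega$ agree coordinatewise, forcing the target to also be a single edge. Hence whenever the target piece is a zigzag, at least one source piece is a zigzag too, which provides the extra level-$l$ and level-$(l+1)$ vertices needed to absorb the zigzag. Concretely, if the target piece is a single edge send every level-$l$ (resp.\ level-$(l+1)$) product tuple to $\iota\mathbb P_{e^*,l}$ (resp.\ $\tau\mathbb P_{e^*,l}$); if it is a zigzag $s^*\to p^*\leftarrow q^*\to t^*$, send the all-$\iota$ level-$l$ tuple to $s^*$, the all-$\tau$ level-$(l+1)$ tuple to $t^*$, every remaining level-$l$ tuple to $q^*$, and every remaining level-$(l+1)$ tuple to $p^*$. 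A short case check against the three zigzag edge types confirms this is a homomorphism, consistency across piece boundaries follows from the identification $\tau\mathbb P_{e_i,l}=\iota\mathbb P_{e_i,l+1}$, and distinct $\mathbf e$'s overlap only inside $A^m\cup R^m$ where $\Omega$ is already fixed; so these level-wise maps paste into a well-defined $\Omega$ on all of $\Delta_m$.

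For components of $\mathcal D(\mathbb A)^m$ outside $\Delta_m$, Lemma~\ref{lem:diagonalcomponent} tells us each contains no level-constant tuple. Given $x,y\in\mathcal D(\mathbb A)$ with $\lvl(x)\neq\lvl(y)$, the $m$ tuples $(y,x,\dots,x),(x,y,x,\dots,x),\dots,(x,\dots,x,y)$ lie in $m$ pairwise distinct components related by the coordinate-swap automorphisms of $\mathcal D(\mathbb A)^m$; fixing $\Omega$ on one orbit representative using a homomorphism that is invariant under the stabiliser of the representative (for example a suitable coordinate projection) and transporting along these automorphisms to the rest of the orbit forces the $m$ WNU-images to coincide. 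Singleton components can be sent to any fixed vertex.

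The main obstacle is the levelwise case analysis in the second paragraph: one must verify that the $s^*/p^*/q^*/t^*$ assignment on mixed level-$l$ tuples (those in which only some source pieces are zigzags) sends every edge of the product piece to one of the three edges of the target zigzag. Once that is in hand, consistency across piece boundaries, across different $\mathbf e$'s, and the treatment of off-diagonal components is largely bookkeeping, together with the two already-noted observations that the $A^m$ and $R^m$ restrictions of $\Omega$ inherit WNU from $\omega$.
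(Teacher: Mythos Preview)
Your approach on $\Delta_m$ is essentially the paper's: set $\Omega=\omega$ on $A^m$, $\Omega=\omega^{(k)}$ on $R^m$, and extend through each product piece $\prod_i\mathbb P_{e_i,l}$ by a homomorphism into $\mathbb P_{e^*,l}$. Your specific piecewise map (send all-$\iota$ to $s^*$, all-$\tau$ to $t^*$, everything else to $q^*$ or $p^*$ by level) is not the same function as the paper's min-distance rule---for instance on $(s_1,q_2)$ you return $q^*$ while the paper returns $s^*$---but both are easily checked to be homomorphisms, idempotent on the diagonal, and invariant under permuting the positions of $c$ and $d$ in a WNU tuple, so either works.

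There is, however, a genuine gap in your treatment off $\Delta_m$. You propose to pick, for each $S_m$-orbit of components, a stabiliser-invariant coordinate projection and transport it around the orbit. But the stabiliser of a component need not fix any coordinate: for $m=4$ a component whose tuples have level pattern $(\ell,\ell,\ell',\ell')$ can have stabiliser $S_2\times S_2$, and then no $\pi_j$ is invariant, so your transport is not well defined. The paper sidesteps this with a much simpler case split: for level-inconstant tuples, project to the unique ``odd-level'' coordinate when one exists (this is exactly the case covering all WNU tuples with $\lvl(x)\neq\lvl(y)$), and otherwise just project to $c_1$; level-constant non-$\Delta_m$ tuples are singletons and are handled analogously. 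The point you are missing is that only the orbits containing WNU tuples need any symmetry, and those always have a distinguished level-singleton coordinate fixed by the stabiliser; on all other orbits a plain $\pi_1$ per component suffices, with no transport. Your scheme is salvageable with exactly this observation, but as written it is incomplete. (A minor related slip: Lemma~\ref{lem:diagonalcomponent} does allow level-constant tuples outside $\Delta_m$, namely as singletons; you do treat singletons separately, so this does not actually hurt you.)
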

\begin{proof}
  Let $\omega$ be an $m$-ary WNU polymorphism of $\mathbb{A}$. We
  construct an $m$-ary operation $\overline{\omega}$ on $\mathcal
  D(\mathbb A)$. We split the definition into several cases and
  subcases.  Let $\mathbf c\in \mathcal D(\mathbb A)^m$.

\medskip

\noindent\textbf{Case 1.} $|\{\lvl (c_1),\dots,\lvl (c_m)\}|>1$.

\smallskip

\noindent\underline{1a}\quad If there exists $i\in[m]$ such that
$|\{\lvl (c_j)\mid j\neq i\}|=1$, we define $\overline{\omega}(\mathbf
c)=c_i$.

\smallskip

\noindent\underline{1b}\quad Else let $\overline{\omega}(\mathbf c)=c_1$.

\smallskip

\noindent\textbf{Case 2.} $\lvl (c_1)=\dots=\lvl (c_m)$,
but $\mathbf c\notin\Delta_m$.

\smallskip

\noindent\underline{2a}\quad If there exists $i\in[m]$ such that $|\{c_j\mid j\neq i\}|=1$,
  we define $\overline{\omega}(\mathbf c)=c_i$.

\smallskip

\noindent\underline{2b}\quad Else let $\overline{\omega}(\mathbf c)=c_1$.

\smallskip

\noindent\textbf{Case 3.} $\mathbf c\in\Delta_m$.

\smallskip

\noindent\underline{3a}\quad If $\{c_1,\dots,c_m\}\subseteq A$, we
define
  $\overline{\omega}(\mathbf c)=\omega(\mathbf c)$.

\smallskip

\noindent\underline{3b}\quad If $\{c_1,\dots,c_m\}\subseteq R$, we
define
  $\overline{\omega}(\mathbf c)=\omega^{(k)}(\mathbf c)$.

\smallskip

\noindent\underline{3c}\quad Else, there exists $\mathbf
d\in\Delta_m\setminus(A^m\cup R^m)$ such that $\mathbf
c\rightarrow\mathbf d$ or $\mathbf d\rightarrow\mathbf c$ in $\mathcal
D(\mathbb A)^m$.  By Lemma \ref{lem:diagonalcomponent} (ii), there
exist $l\in[k]$ and $e_i=(a_i,\mathbf r^i)$ such that $\mathbf
c\in\prod_{i=1}^m\mathbb P_{e_i,l}$. Let $e=(a,\mathbf r)$, where
$a=\omega(a_1,\dots,a_m)$ and $\mathbf r=\omega^{(k)}(\mathbf
r^1,\dots,\mathbf r^m)$. We set $\overline{\omega}(\mathbf
c)=\Phi(\mathbf c)$, where $\Phi:\prod_{i=1}^m\mathbb
P_{e_i,l}\to\mathbb P_{e,l}$ is defined as follows.
\begin{enumerate}
\item If $\mathbb P_{e,l}$ is a single edge, then we set
  \[
    \Phi(\mathbf u)=\begin{cases}\iota\mathbb P_{e,l} & \text{ if }
      \lvl (u_1)=\dots=\lvl (u_m)=\lvl (\iota\mathbb P_{e,l})\\
      \tau\mathbb P_{e,l} & \text{ otherwise.}
    \end{cases}
    \]
  \item If $\mathbb P_{e,l}$ is a zigzag, then let
    $I=\{i\in[m]\mid\mathbb P_{e_i,l}\ \text{is a zigzag}\}$. For
    every $i\in I$ let $\phi_i:\mathbb P_{e_i,l}\to\mathbb P_{e,l}$ be
    the unique isomorphism. We define $\Phi(\mathbf u)$ to be the
    vertex from $\{\phi_i(u_i):i\in I\}$ with minimal distance from
    $\iota\mathbb P_{e,l}$.
\end{enumerate}

\medskip

Let us first comment on correctness of the definition. In subcase
\underline{3b}\,, $\omega^{(k)}(\mathbf c)\in R$ follows from the fact
that $\omega$ preserves $R$. In subcase \underline{3c}\,, if $\mathbb
P_{e,l}$ is a zigzag, then $I\neq\emptyset$. Indeed, if all the
$\mathbb P_{e_i,l}$'s were single edges, then
$r_l=\omega(r^1_l,\dots,r^m_l)=\omega(a_1,\dots,a_m)=a$ and so
$\mathbb P_{e,l}$ would also be a single edge. The $e_i$'s are
uniquely determined by $\mathbf c$, and the choice of $l$ is unique as
well, with one exception: if $l<k$ and $c_i=\tau\mathbb
P_{e_i,l}=\iota\mathbb P_{e_i,l+1}$ for every $i\in[m]$, then we have
$\mathbf d\to\mathbf c\to\mathbf{d'}$ for some $\mathbf
d,\mathbf{d'}\in\mathcal D(\mathbb A)^m$ and we can choose $l+1$
instead of $l$. However, it is not hard to see that the value assigned
to $\overline{\omega}(\mathbf c)$ is the same in both cases, namely it
is the vertex $\tau\mathbb P_{e,l}=\iota\mathbb P_{e,l+1}$ (see
property (b) below).

We need the following properties of the mapping $\Phi$ defined in
\underline{3c}\,.
\begin{enumerate}[(a)]
\item $\Phi$ is a homomorphism.
\item $\Phi(\iota\mathbb P_{e_1,l},\dots,\iota\mathbb
  P_{e_m,l})=\iota\mathbb P_{e,l}$ and $\Phi(\tau\mathbb
  P_{e_1,l},\dots,\tau\mathbb P_{e_m,l})=\tau\mathbb P_{e,l}$.
\item $\Phi$ does not depend on the ordering of the tuple $(e_1,\dots,e_m)$.
\item If $e_1=\dots=e_m=e$, then $\Phi:\mathbb P_{e,l}^m\to\mathbb
  P_{e,l}$ is idempotent, i.e., $\Phi(u,\dots,u)=u$ for all
  $u\in\mathbb P_{e,l}$.
\end{enumerate}
All of the above properties follow easily from the definition of
$\Phi$. We leave the verification to the reader.  It remains to prove
that $\overline{\omega}$ is a WNU polymorphism of $\mathcal D(\mathbb
A)$.
\begin{claim}
  $\overline{\omega}$ is idempotent and satisfies the WNU equations.
\end{claim}
Let $c,d\in\mathcal D(\mathbb A)$. Note that all of the tuples
$(c,\dots,c,d),(c,\dots,d,c),\dots,(d,c,\dots,c)$ fall into the same
subcase; the possibilities are \underline{3a}\,, \underline{3b}\,,
\underline{3c} and if $c\neq d$, then also \underline{1a} or
\underline{2a}\,. In subcases \underline{1a} and \underline{2a} the
definition does not depend on the ordering of the input tuple at all;
therefore the WNU equations hold (and since $c\neq d$, idempotency
does not apply). 

In case 3 we use the fact that $\omega$ and $\omega^{(k)}$ are
idempotent and satisfy the WNU equations. In \underline{3a} and
\underline{3b} the claim follows immediately. In \underline{3c} note
that $e$ is the same for all of the tuples
$(c,\dots,c,d),\dots,(d,c,\dots,c)$, and if $c=d$, then
$e_1=\dots=e_m=e$. The WNU equations follow from property (c) of the
mapping $\Phi$, and idempotency follows from (d).

\begin{claim}
$\overline{\omega}$ is a polymorphism of $\mathcal D(\mathbb A)$.
\end{claim}
Let $\mathbf c,\mathbf d\in\mathcal D(\mathbb A)^m$ be such that
$\mathbf c\rightarrow\mathbf d$ is an edge in $\mathcal D(\mathbb
A)^m$, that is, $c_i\to d_i$ for all $i\in[m]$. Both the tuples
$\mathbf c$ and $\mathbf d$ fall into the same case and, by Lemma
\ref{lem:diagonalcomponent}\,(iii), it cannot be case 2. If it is case
1, then they also fall into the same subcase and it is easily seen
that $\overline{\omega}(\mathbf c)\to\overline{\omega}(\mathbf d)$.

If $\mathbf c$ falls into subcase \underline{3a}\,, then $\mathbf d$
falls into \underline{3c}\,. Let $e_i$ be such that $d_i\in\mathbb
P_{e_i}$. As $c_i\to d_i$, it follows that $e_i=(c_i,\mathbf r^i)$ for
some $\mathbf r^i\in R$ and $d_i=\iota\mathbb P_{e_i,1}$. Let us
define $c=\omega(\mathbf c)$, $\mathbf r=\omega^{(k)}(\mathbf
r^1,\dots,\mathbf r^m)$ and $e=(c,\mathbf r)$. Now
$\overline{\omega}(\mathbf d)$ is the result of the mapping $\Phi$
applied to the tuple of initial vertices of the $\mathbb P_{e_i,1}$'s,
which (by property (b)) is the initial vertex of $\mathbb P_{e,1}$. So
$\overline{\omega}(\mathbf c)=c\to\iota\mathbb
P_{e,1}=\overline{\omega}(\mathbf d)$. The argument is similar if
$\mathbf d$ falls into \underline{3b}\,.

It remains to verify that $\overline{\omega}(\mathbf
c)\to\overline{\omega}(\mathbf d)$ if both $\mathbf c$ and $\mathbf d$
fall into subcase \underline{3c}\,. Both $\overline{\omega}(\mathbf
c)$ and $\overline{\omega}(\mathbf d)$ are defined using the mapping
$\Phi:\prod_{i=1}^m\mathbb P_{e_i,l}\to\mathbb P_{e,l}$. Since $\Phi$
is a homomorphism, we have $\overline{\omega}(\mathbf c)=\Phi(\mathbf
c)\to\Phi(\mathbf d)=\overline{\omega}(\mathbf d)$, concluding the
proof.\hfill$\qed$
\end{proof}

\section{Discussion}
The algebraic dichotomy conjecture proposes a polymorphism
characterisation of tractability for core CSPs.  A number of other
algorithmic properties are also either proved or conjectured to
correspond to the existence of polymorphisms with special equational
properties.  For instance, solvability by the few subpowers algorithm
(a generalization of Gaussian elimination) as described in Idziak et
al.~\cite{IMMVW} has a polymorphism characterisation \cite{BIMMVW}, as
well as problems of bounded width (see Theorem~\ref{thm:many_WNUs}).
The final statement in Theorem \ref{thm:1} already shows that
$\mathbb{A}$ has bounded width if and only if
$\mathcal{D}(\mathbb{A})$ has bounded width. Kazda~\cite{kaz} showed
that every digraph with a Maltsev polymorphism must have a majority
polymorphism, which is not the case for finite relational structures
in general. In a later version of the present article we will show
that Theorem \ref{thm:1}(ii) extends to include almost all commonly
encountered polymorphism properties aside from Maltsev.  For instance
the CSP over $\mathbb{A}$ is solvable by the few subpowers algorithm
if and only if the same is true for $\mathcal D(\mathbb A)$. Among
other conditions preserved under our reduction to digraphs is that of
arc consistency (or width 1 problems~\cite{width1}) and problems of
bounded strict width~\cite{fedvar}.  The following example is a
powerful consequence of the result.
\begin{example}\label{eg:fsp}
Let $\mathbb{A}$ be the structure on $\{0,1\}$ with a single $4$-ary relation 
\[
\{(0,0,0,1),(0,1,1,1),(1,0,1,1),(1,1,0,1)\}.
\]  
Clearly $\mathbb A$ is a core. The polymorphisms of $\mathbb{A}$ can
be shown to be the idempotent term functions of the two element group,
and from this it follows that $\CSP(\mathbb{A})$ is solvable by the
few subpowers algorithm of \cite{IMMVW}, but is not bounded width.
Then the CSP over the digraph $\mathcal{D}(\mathbb{A})$ is also
solvable by few subpowers but is not bounded width (that is, is not
solvable by local consistency check).
\end{example}
Prior to the announcement of this example it had been temporarily
conjectured by some researchers that solvability by the few subpowers
algorithm implied solvability by local consistency check in the case
of digraphs (this was the opening conjecture in Mar{\' o}ti's keynote
presentation at the Second International Conference on Order, Algebra
and Logics in Krakow 2011 for example).  With 78 vertices and 80
edges, Example \ref{eg:fsp} also serves as a simpler alternative to
the 368-vertex, 432-edge digraph whose CSP was shown by Atserias
in~\cite[\S4.2]{ats} to be tractable but not solvable by local
consistency check.

There are also conjectured polymorphism classifications of the
property of solvability within nondeterministic logspace and within
logspace; see Larose and Tesson \cite{lartes}.  The required
polymorphism conditions are among those we can show are preserved
under the transition from $\mathbb{A}$ to $\mathcal{D}(\mathbb{A})$.
It then follows that these conjectures are true provided they can be
established in the restricted case of CSPs over digraphs.

\noindent\textbf{Open problems.}
We conclude our paper with some further research directions. \\
It is possible to show that the logspace reduction in Lemma
\ref{lem:reversereduction} \emph{cannot} be replaced by first order
reductions.  Is there a different construction that translates general
CSPs to digraph CSPs with first order reductions in both directions?

Feder and Vardi~\cite{fedvar} and Atserias~\cite{ats} provide
polynomial time reductions of CSPs to digraph CSPs.  We vigorously
conjecture that their reductions preserve the properties of possessing
a WNU polymorphism (and of being cores; but this is routinely
verified).  Do these or other constructions preserve the precise arity
of WNU polymorphisms?  What other polymorphism properties are
preserved? Do they preserve the bounded width property? Can they
preserve \emph{conservative} polymorphisms (the polymorphisms related
to list homomorphism problems)?  The third and fourth authors with
Kowalski \cite{JKN13} have recently shown that a minor variation of
the $\mathcal{D}$ construction in the present article preserves
$k$-ary WNU polymorphisms (and can serve as an alternative to
$\mathcal{D}$ in Theorem \ref{thm:1}), but always fails to preserve
many other polymorphism properties (such as those witnessing strict
width).

\section{Acknowledgements}
\noindent The authors would like to thank Libor Barto, Marcin Kozik,
Mikl\'os Mar\'oti and Barnaby Martin for their thoughtful comments and
discussions.

\bibliography{bibCP2013}

\appendix

\section{The logspace reduction of $\CSP(\mathcal{D}(\A))$ to $\CSP(\A)$.}
In this appendix, we give the proof of Lemma
\ref{lem:reversereduction}, by showing that $\CSP(\mathcal{D}(\A))$
reduces in logspace to $\CSP(\A)$.  A sketch of a \emph{polymomial
  time} reduction is given in the proof of \cite[Theorem 13]{fedvar};
technically, that argument is for the special case where $\mathbb{A}$
is itself already a digraph, but the arguments can be broadened to
cover our case.  To perform this process in logspace is rather
technical, with many of the difficulties lying in details that are
omitted in the polymomial time description in the proof of
\cite[Theorem 13]{fedvar}.  We wish to thank Barnaby Martin for
encouraging us to pursue Lemma \ref{lem:reversereduction}.

\subsection{Outline of the algorithm.}
We first assume that $\CSP(\mathbb{A})$ is itself not trivial (that
is, that there is at least one no instance and one yes instance): this
uninteresting restriction is necessary because
$\CSP(\mathcal{D}(\mathcal{A}))$ will have no instances always.  Now
let $\G$ be an instance of $\CSP(\mathcal{D}(\A))$.  Also, let $n$
denote the height of $\mathcal{D}(\A)$ and $k$ the arity of the single
fundamental relation $R$ of $\A$: so, $n=k+2$. Recall that the
vertices of $\mathcal{D}(\A)$ include those of $A$ as well as the
elements in $R$.  The rough outline of the algorithm is as follows.

\begin{enumerate}
\item[] (Stage 1.) Some initial analysis of $\G$ is performed to decide if it is broadly of the right kind of digraph to be a possible yes instance.  If not, some fixed no instance of $\CSP(\A)$ is output.  
\item[] (Stage 2.) It is convenient to remove any components of $\G$
  that are too small.  These are considered directly, and in logspace
  we determine whether or not they are YES or NO instances of
  $\CSP(\mathcal{D}(\A))$.  If all are YES we ignore them.  If one
  returns NO we reject the entire instance and return some fixed NO
  instance of $\CSP(\A)$.
\item[] (Stage 3.) Now it may be assumed that $\G$ is roughly similar
  to a digraph of the form $\mathcal{D}(\B)$, but where some vertices
  at level $0$ have been lost, and other vertices at this level and at
  level $n$ have been split into numerous copies, with each possibly
  containing different parts of the information in the connecting
  edges of $\mathcal{D}(\B)$.  Essentially, the required $\B$ is a
  kind of quotient of an object definable from $\G$, though some extra
  vertices must be added (this is similar to the addition of vertices
  to account for existentially quantified variables in a primitive
  positive definition of a relation: only new vertices are added, and
  they are essentially unconstrained beyond the specific purpose for
  which they are added).  To construct $\B$ in logspace, we work in
  two steps: we describe a logspace construction of some intermediate
  information.  Then we describe a logspace reduction from strings of
  suitable information of this kind to $\B$.  The overall process is
  logspace because a composition of two logspace reductions is a
  logspace reduction.
\item[] (Stage 3A.)  From $\G$ we output a list of ``generalised
  hyperedges'' consisting of $k$-tuples of sets of vertices, with some
  labelling to record how these were created. These generalised
  hyperedges arise from $\G$ but each position in the generalised
  hyperedge consists of a set of vertices of $\G$, possibly including
  some new vertices, instead of individual
  vertices. 
\item[] (Step 3B.) The actual structure $\B$ is constructed from the
  generalised hyperedges in the previous step.  The input consists of
  generalised hyperedges.  To create $\B$ numerous undirected graph
  reachability checks are performed.  The final ``vertices'' of $\B$
  are in fact sets of vertices of $\G$, so that the generalised
  hyperedges become actual hyperedges in the conventional sense
  ($k$-tuples of ``vertices'', now consisting of sets of vertices of
  $\G$).  This may be reduced to an adjacency matrix description as a
  separate logspace process, but that is routine.
\end{enumerate}

Stage 1 is described in Subsection \ref{subsec:1}, while Stage 2 is
described in Subsection \ref{subsec:2}.  The most involved part of the
algorithm is stage 3A.  In Subsection \ref{subsec:3} we give some
preliminary discussion on how the process is to proceed: an
elaboration on the item listed in the present subsection.  In
particular a number of definitions are introduced to aid the
description of Stage 3A.  The actual algorithm is detailed in
Subsection \ref{subsec:3A}.  Step 3B is described in Subsection
\ref{subsec:3B}.  After a brief discussion of why the algorithm is a
valid reduction from $\CSP(\mathcal{D}(\A))$ to $\CSP(\A)$, we present
an example of Stages 3A and 3B in action.  This example may be a
useful reference while reading Subsection \ref{subsec:3A} and
\ref{subsec:3B}.

Before we begin describing the algorithm we recall some basic logspace
process that we will use frequently.
\subsection{Subroutines}
The algorithm we describe makes numerous calls on other logspace
computable processes.  Our algorithm may be thought of as running on
an oracle machine, with several query tapes.  Each query tape verifies
membership in some logspace solvable problem.  It is well known that
$\texttt{L}^\texttt{L}=\texttt{L}$, and this enables all of the query
tapes to be eliminated within logspace.  For the sake of clarity, we
briefly recall some basic information on logspace on an oracle
machine.  A oracle program with logspace query language $U$, has
access to an input tape, a working tape (or tapes) and an output tape.
Unlimited reading may be done from the input, but no writing.
Unlimited writing may be done to the output tape, but no reading.
Unlimited writing may be done to the query tape, but no reading.  Once
the query state is reached however, the current word written to the
query tape is tested for membership in the language $U$ (at the cost
of one step of computation), and a (correct) answer of either yes or
no is received by the program, and the query tape is immediately
erased.  The space used is measured \emph{only from the working tape},
where both reading and writing is allowed.  If such a program runs in
logspace, then it can be emulated by an actual logspace program (with
no oracle), so that $\texttt{L}^\texttt{L}=\texttt{L}$.  The argument
is essentially the fact that a composition of logspace reductions is a
logspace reduction: each query to the oracle (of a string $w$ for
instance) during the computation is treated as a fresh instance of a
reduction to the membership problem of $U$, which is then composed
with the logspace algorithm for $U$ (which is, as usual, done without
ever writing any more than around one symbol of $w$ at a time---plus a
short counter---which is why space used on the oracle tape does not
matter in the oracle formulation of logspace, and why we may assume
that the query tape may be erased after completion of the query).

In this subsection we describe the basic checks that are employed
during or algorithm.

{\bf Undirected reachability.}  Given an undirected graph and two
vertices $u,v$, there is a logspace algorithm to determine if $u$ is
reachable from $v$ (Reingold \cite{rei05,rei08}).  In the case of an
undirected graph we may use this to determine if two vertices are
connected by some oriented path (simply treat the graph as an
undirected graph, and use undirected reachability).  This means, for
example, that we can construct, in logspace, the \emph{smallest
  equivalence relation containing some input binary relation}.
Starting with $i=0$ and $j=i+1$, consider the $i^{\rm th}$ vertex.
Check to see if any earlier vertices are weakly connected using
queries to the undirected reachability test.  If so, then increment
$i$ and $j$.  If not, then initiate the output of the block of the
equivalence relation containing vertex $i$.  We may output $i$
(possibly ``$\{i$'' if we wish to output the blocks of the equivalence
relation in set notation), and then query whether the $j^{\rm th}$
vertex is undirected reachable from the $i^{\rm th}$.  If so, then
output the $j^{\rm th}$ vertex, otherwise ignore the $j^{\rm th}$
vertex.  Then increment $j$ and repeat these checks until all vertices
have been checked (and close the brackets~$\}$).  Now increment $i$
and repeat.

A second process we frequently perform is undirected reachability
checks in graphs whose edges are not precisely those of the current
input graph.  A typical instance might be where we have some fixed
vertex $u$ in consideration, and we wish to test if some vertex $v$
can be reached from $u$ by an oriented path avoiding all vertices
failing some property $\mathcal{Q}$, where $\mathcal{Q}$ is a logspace
testable property.  This is basically undirected graph reachability,
except as well as ignoring the edge direction, we must also ignore any
vertex failing property $\mathcal{Q}$.  This is again logspace,
because it can be performed in logspace on an oracle machine running
an algorithm for undirected graph reachability and whose query tape
tests property $\mathcal{Q}$.

{\bf Component checking.} Undirected graph reachability is also
fundamental to checking properties of induced subgraphs.  In a typical
situation we have some induced subgraph $C$ of $\G$ (containing some
vertex $u$, say) and we want to test if it satisfies some property
$\mathcal{P}$.  Membership of vertices in $C$ is itself determined by
some property $\mathcal{Q}$, testable in logspace.  It is convenient
to assume that the query tape for $\mathcal{P}$ expects inputs that
consist of a list of directed edges (if adjacency matrix is preferred,
then this involves one further nested logspace process, but the
argument is routine).  We may construct a list of the directed edges
in the component $C$ on a logspace machine with a query tape for
$\mathcal{P}$, for undirected graph reachability and for
$\mathcal{Q}$.  We write $C$ to the query tape for $\mathcal{P}$ as
follows.  Systematically enumerate pairs of vertices $v_1,v_2$ of $\G$
(re-using some fixed portion of work tape for each pair), in each case
testing for undirected reachability of both $v_1$ and $v_2$ from $u$,
and also for satisfaction of property $\mathcal{Q}$.  If both are
reachable, and if $(v_1,v_2)$ is an edge of $\G$ then we output the
edge $(v_1,v_2)$ to the query tape for $\mathcal{P}$.  After the last
pair has been considered, we may finally query~$\mathcal{P}$.

{\bf Testing for path satisfaction.} The basic properties we wish to
test of components are usually satisfiability within some fixed finite
family of directed paths.  We consider the paths $\Q_S$, where $S$ is
some subset of $[k]=\{1,\dots,k\}$: recall that these have zigzags in
a position $i$ when $i\notin S$.  We first argue that it suffices to
show that such queries are logspace in the the case of singleton $S$.
Consider $\Q_{[k]\backslash\{i\}}$, endowed with not just the edge
relation, but also all singleton unary constraints.  Because
$\Q_{[k]\backslash\{i\}}$ is a core digraph, it is well known that the
CSP over $\Q_{[k]\backslash\{i\}}$ is logspace equivalent to the CSP
over $\mathbb{Q}_{[k]\backslash\{i\}}$ with singleton unary relations
added.

For a typical input digraph $\H$, there will be a smallest set
$S\subseteq\{1,\dots,k\}$ such that $H$ can be satisfied in
$\Q_S$. This can be found by testing for satisfaction in
$\mathbb{Q}_{[k]\backslash\{i\}}$, where $i$ ranges from $1$ to $k$.
The set $S$ will consist of those $i$ for which
$\H\notin\CSP(\Q_{[k]\backslash\{i\}})$.  A technicality: if the
height of $\H$ is less than $n$, then it may be satisfiable within
some $\Q_{[k]\backslash\{i\}}$ at different levels.  In order to
combine the positive $i$ into the set $S$, it is necessary that these
interpretations be done at the same level.  This can be done by fixing
some vertex of $\H$, and using a unary singleton constraint to force
it to be interpreted at some fixed level; testing all
$\Q_{[k]\backslash\{i\}}$.  In almost every case we encounter however,
we will know in advance that certain vertices of $\H$ \emph{must} be
satisfied at either the unique vertex adjacent to $\iota\Q_S$, or to
$\tau\Q_S$.  Thus we only look for such interpretations.
\begin{lemma}\label{lem:pathassign}
\begin{enumerate}
\item $\CSP(\Q_{[k]\backslash\{i\}})$ is solvable in logspace for any
  $i\in \{1,\dots,k\}$, even when singleton unary relations are added.
\item For any $S\subseteq\{1,\dots,k\}$ the problem $\CSP(\Q_S)$ is solvable in logspace, including when particular vertices are constrained to be satisfied at particular points.
\item If $\Q_{S_1}$, $\Q_{S_2}$,\dots, $\Q_{S_\ell}$ is a family of
  connecting paths in $\mathcal{D}(\A)$, then the CSP over the digraph
  formed by amalgamating the $\Q_{S_i}$ at either all the initial
  points, or at all the terminal points is logspace solvable.
\end{enumerate}
\end{lemma}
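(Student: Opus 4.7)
The plan is to prove the three parts in sequence, using that part~(3) reduces to part~(2) and part~(2) reduces to part~(1), so that the genuine combinatorial content lies in part~(1).

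For part~(1), given $\G$, I would first verify in logspace (using Reingold's theorem on undirected reachability) that $\G$ is balanced with height at most $k+2$ and compute relative levels of its vertices. The template $\Q_{[k]\setminus\{i\}}$ has a single zigzag at position $i$, so every level except $i$ and $i+1$ has a unique image; at the two flexible levels each vertex of $\G$ faces only a binary choice, namely $\{v_0,v_2\}$ at level $i$ and $\{v_1,v_3\}$ at level $i+1$ (zigzag edges $v_0\to v_1$, $v_2\to v_1$, $v_2\to v_3$). Any level-$i$ vertex of $\G$ with an incoming edge from level $i-1$ is forced to $v_0$ (since the unique level-$(i-1)$ target vertex connects only to $v_0$), and dually any level-$(i+1)$ vertex with an outgoing edge to level $i+2$ is forced to $v_3$; the only forbidden edge across the zigzag is $v_0\to v_3$. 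Hence $\G\not\to\Q_{[k]\setminus\{i\}}$ at a given alignment exactly when $\G$ contains a straight directed path $w\to u\to v\to x$ at levels $(i-1,i,i+1,i+2)$, which is a constant-size pattern detectable in logspace. Singleton unary constraints merely add further forced images without changing the algorithm.

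For part~(2), I would use the equivalence flagged in the text immediately preceding the lemma: with a level alignment fixed by a singleton constant pinning one base vertex of $\G$, $\G\to\Q_S$ iff $\G\to\Q_{[k]\setminus\{i\}}$ for every $i\notin S$. There are only constantly many possible placements of $\G$ within the height-$(k+2)$ target, so the algorithm enumerates them and for each invokes part~(1) for every $i\notin S$, composing constantly many logspace procedures. For part~(3), an initial amalgamation has a common root $\star$ at level $0$; any homomorphism sends level-$0$ vertices of $\G$ to $\star$ and must map each connected component of $\G\setminus\{\text{level-}0\}$ entirely into a single branch $\Q_{S_j}$, since the branches meet only at $\star$. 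I would compute these components by undirected reachability, and for each component $C$ test via part~(2) whether some $j\in\{1,\dots,\ell\}$ admits $C\to\Q_{S_j}$, constraining the level-$1$ vertices of $C$ adjacent to level-$0$ to the corresponding level-$1$ vertex of $\Q_{S_j}$. Since $\ell$ depends only on the fixed template $\A$, this enumeration stays in logspace; the terminal case follows by reversing all edges.

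The main obstacle is to justify rigorously the sufficiency direction of the equivalence used in part~(2): that absence of the forbidden straight-path pattern around each zigzag position independently yields a joint homomorphism $\G\to\Q_S$, even when two or more zigzag positions are adjacent so that their level-ranges overlap and the target has three (not two) possible images at shared levels. I plan to resolve this by a greedy level-by-level assignment, using that the two-valued propagation at any single zigzag has depth one: whenever no 4-vertex obstruction exists at position $i$, a forced $v_0$-image and a forced $v_3$-image cannot be adjacent in $\G$, and the remaining choices can be fixed arbitrarily (e.g.\ to $v_0$ and $v_1$) without violating any edge, yielding a consistent global homomorphism.
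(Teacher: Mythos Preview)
Your approach is essentially the paper's: the same forbidden directed $4$-path characterisation in part~(1), the same per-zigzag reduction in part~(2), and the same branch-wise component decomposition in part~(3). The obstacle you flag in part~(2) is genuine, and in fact the paper's own proof glosses over exactly this point (it simply asserts that testing each $\Q_{[k]\setminus\{i\}}$ suffices, and even writes ``$i\in S$'' where ``$i\notin S$'' is clearly meant); your greedy level-by-level resolution works, and an alternative clean argument is that the level-consistent connected component of $\prod_{i\notin S}\Q_{[k]\setminus\{i\}}$ containing the common initial vertex is isomorphic to $\Q_S$.

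One small omission in your part~(3): you tacitly assume the instance $\G$ has full height $k+2$, so that its level-$0$ vertices are forced to $\star$. For shorter instances no vertex need land at $\star$ at all, and the level alignment is not unique. The paper handles this by first testing (via part~(2)) whether $\G$ maps into a single branch $\Q_{S_j}$, and if not, enumerating the constantly many candidate heights for a fixed vertex of $\G$ before performing the component decomposition you describe.
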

\begin{proof}
  The first item is a special case of the second, however the proof of
  the second case will use the first case.

  (1). This is crucial and not trivial ($\Q_{[k]\backslash\{i\}}$ does
  not have a Maltsev polymorphism, so it is not covered under the most
  widely used polymorphism techniques for establishing logspace
  solvability). Note that as $\Q_{[k]\backslash\{i\}}$ is a core, we
  have $\CSP(\Q_{[k]\backslash\{i\}})$ logspace equivalent to the CSP
  over $\Q_{[k]\backslash\{i\}}$ with all unary singletons added.

  Given an input digraph $\H$, we first test if $\H$ is satisfiable in
  the directed path $\Pp_n$ of height $n$ (checking that $\H$ is
  balanced, and of sufficiently small height).  Reject if NO.
  Otherwise we may assume that $\H$ is a single component (because it
  suffices to satisfy each component, and component checking has been
  described as logspace in earlier discussion).

  Fix any vertex $u$ of $\H$.  It is trivial that $\H$ is satisfiable
  in $\Q_{[k]\backslash\{i\}}$ if and only it is satisfied with $u$
  interpreted at \emph{some height} in $\Q_{[k]\backslash\{i\}}$.  We
  will fix a possible height for $u$, and successively try new heights
  if this fails until all are exhausted or $\H$ is found to be a Yes
  instance.  Once we have fixed a target height $j$ for $u$, this
  determines the height of all other vertices in $\H$ (this is the
  only place where $\H$ being one component is used). For example, to
  determine the height of another vertex $v$, we may successively test
  for satisfaction of $\H$ in $\Q_{[k]}$, with $u$ constrained at
  fixed height $j$, and $v$ constrained at height $0$, then $1$, then
  $2$ and so on, until a YES is returned.  There are only $n+1$
  heights, and each is just a query to a logspace oracle.  Because
  $\H$ is balanced and connected, there is at most one height at which
  $v$ will produce a YES.  If none is produced, then $u$ cannot
  interpret in $\Q_{[k]}$ at height $j$, so that $j$ itself must be
  incremented.

  In this way, we may assume within logspace that we have access to
  the heights of all vertices during the following argument.

  Now, in any satisfying interpretation of $\H$ in
  $\Q_{[k]\backslash\{i\}}$, any vertices of the same height $j\notin
  \{i,i+1\}$ are identified.  We need only ensure there is no path of
  vertices of height $i-1,i,i+1,i+2$.  So it suffices to enumerate all
  $4$-tuples of vertices $u_1,\dots,u_4$, check if $u_1\rightarrow u_2
  \rightarrow u_3\rightarrow u_4$, and if so, check that the height of
  $u_1$ is not $i-1$.  If it is, then reject.  Otherwise accept.

  (2).  For a typical instance $\H$ of $\CSP(\Q_S)$, select some
  arbitrary vertex $h$ and then, for each possible interpretation of
  $h$ in $\Q_S$, test membership in $\CSP(\Q_{[k]\backslash\{i\}})$
  for each $i\in S$, ensuring $h$ is constrained to lie a consistent
  location in each.

  (3).  Consider some instance $\H$.  As above, we may assume that
  $\H$ is connected, balanced and is of sufficiently small height.  If
  $\H$ has a solution that does not involve the amalgamated vertex,
  then it has a solution within one of the paths $\Q_{S_i}$, in which
  case this can be discovered by systematically applying part (1) to
  each of the paths $\Q_{S_1},\Q_{S_2},\dots$.  This is a fixed finite
  number of applications of part (1).

  Now we consider solutions that do involve the amalgamated vertex,
  which we denote by $a$.  (In typical instances where the lemma is
  used, $a$ will be either an element of $A$ or an element of $R$
  and the $\Q_{S_i}$ will be the paths emanating from $a$.)  Without
  loss of generality, we will assume that $a$ is of height $0$ in the
  fan (rather than of height $n$).  For each vertex $h\in H$ we are
  going to first check that $h$ can be interpreted at height $0$ using
  the subroutine for checking height described earlier.  We now
  consider the components of the induced subgraph of $\H$ that avoids
  all vertices at height $0$ (given that $h$ has been set to height
  $0$).  Each such component $C$ is systematically tested for
  satisfaction in $\Q_{S_1}$ (with the lowest level vertices of $C$
  constrained to lie at height $1$ in $\Q_{S_1}$), then in $\Q_{S_2}$,
  and asking that it be constraint by the singleton relation $\{a\}$).
  If each component $C$ test positive for satisfaction in some
  $\Q_{S_i}$ (where $i$ possibly varies), then $\H$ can be satisfied
  in the fan.  Otherwise, replace $h$ by the next vertex of $\H$,
  until either a positive check is identified, or all vertices are
  shown to fail, and $\H$ is discovered as a NO instance.
\end{proof}

\subsection{Stage 1: Verification that $\G$ is balanced and a test for
  height.}\label{subsec:1} Let $\Pp_n$ denote the directed path on
vertices $0,1,\dots,n$: it is just $\Q_{\{1,\dots,k\}}$.  The CSP over
the directed path $\Pp_n$ on vertices is solvable in logspace, so as
an initial check, it is verified in logspace that $\G$ is balanced and
of height at most $n$, the height of $\mathcal{D}(\A)$.  Moreover, the
precise height of $\G$ can be recorded, by testing for membership in
$\CSP(\mathbb{P}_{n-1})$, $\CSP(\mathbb{P}_{n-2})$ and so on until
some minimal height is recorded.  Because singleton unary relations
can be added to $\mathbb{P}_k$ without changing the logspace
solvability of the CSP, a more refined test can be performed to
determine the minimal height of any chosen vertex $u$ (with heights
starting at height $0$): simply ask for satisfaction of $\G$ in
$\CSP(\mathbb{P}_{\hgt{\G}})$ with the vertex $u$ constrained by the
relation $\{0\}$, then $\{1\}$, and so on until some minimal height is
achieved.  This subroutine will be performed frequently (but without
further detailed discussion) throughout the remaining steps of the
argument.

\subsection{Stage 2: Elimination of ``short
  components''.} \label{subsec:2} If $\G$ contains some component of
height strictly less than $n$, then we can test directly whether or
not it is a YES or NO instance of $\CSP(\mathcal{D}(\A))$ (this is
explained in the next paragraph).  If any are NO instances, then so is
$\G$ and we can output some fixed NO instance of $\CSP(\A)$.
Otherwise (if all are YES instances), we may simply ignore these short
components.  If $\G$ itself has height less than $n$, then instead of
ignoring all components of $\G$ we simply output some fixed YES
instance of $\CSP(\A)$, completing the reduction.

The process for testing membership in $\CSP(\mathcal{D}(\A))$ is as
follows.  We consider some connected component $\H$.  Now in any
satisfying interpretation of $\H$ in $\mathcal{D}(\A)$, we must either
interpret within some single path $\Q_S$ connecting $A$ to $R$ in
$\mathcal{D}(\A)$, or at some fan of such paths emanating from some
vertex in $A$ or some vertex in $R$.  There are a fixed finite
number of such subgraphs of $\mathcal{D}(\A)$, and we may use Lemma
\ref{lem:pathassign} for each one.

For the remainder of the algorithm we will assume that \emph{all
  connected components have height $n$.}

\subsection{Stage 3: $\hgt\G=n$ (and all components of $\G$ have
  height $n$).}\label{subsec:3} In this case we eventually output an
actual structure $\B$ with the property that $\G$ is a YES instance of
$\CSP(\mathcal{D}(\A))$ if and only if $\B$ is a YES instance of
$\CSP(\A)$.  In fact we focus on the production of a preliminary
construction $\B'$ that is not specifically a relational structure,
but holds all the information for constructing $\B$ using undirected
graph reachability checks.  The output $\B'$ will consist of a list of
``generalised hyperedges'', that will (in step 3B) eventually become
the actual hyperedges of $\B$.  A typical generalised hyperedge will
consist of a labelled $k$ tuple $[s_1,s_2,\dots,s_k]_e$, where each
$s_i$ is a set of either existing vertices of $\G$, or some newly
constructed vertices, and where the ``label'' $e$ is itself a vertex
from $\G$ used for book-keeping and later amalgamation of the sets of
vertices.  (Some generalised hyperedges will not require the label
$e$.)

For the remainder of the argument, an \emph{internal component} of
$\G$ means a connected component of the induced subgraph of $\G$
obtained by removing all vertices of height $0$ and $n$.  Note that we
have already described that testing for height can be done in
logspace.  A \emph{base vertex} for such a component $C$ is a vertex
at height $0$ that is adjacent to $C$, and a \emph{top vertex} for $C$
is a level $n$ vertex adjacent to $C$.  Note that an internal
component may have none, or more than one base vertices, and similarly
for top vertices.  Every internal component must have at least one of
a base vertex or a top vertex however, because we have already
considered the case of ``short'' components in Stage 2.

We will frequently apply Lemma \ref{lem:pathassign}(1) to internal
components $C$, to discover the smallest $S$ for which $C$ is
satisfiable in $\Q_S$.  Once an internal component has a base or top
(and it must have at least one), then in a satisfying interpretation
of $\G$ in $\mathcal{D}(\A)$, the component $C$ must either be
satisfied within some single path, with any vertices adjacent to the
base of $C$ (or to a top of $C$) being interpreted adjacent to the
initial point of the path (or adjacent to the terminal point of the
path, respectively).  Thus we will tacitly assume that our
applications of Lemma \ref{lem:pathassign}(1) involve suitably
constraining $C$ to be consistent with this.  Also note that any
digraph of height $n$ is satisfiable $\Q_{[k]}$, so we will always
find some smallest $S$ using our test.

Connecting paths in $\mathcal{D}(\A)$ encode positions of base level
vertices in hyperedges, and we may use a check of Lemma
\ref{lem:pathassign}(1) to determine which positions are being
asserted as ``filled'' by any given internal component $C$ (recalling
that $C$ must either have a base vertex or a top vertex, so that
interpretations in paths $\Q_{[k]\backslash\{i\}}$ occur at fixed
positions only).  In this way, an internal component determines a
subset $\Gamma_C$ of $[k]$ by
\[
i\in \Gamma_C\text{ if and only if }C\text{ is satisfiable in }\Q_{[k]\backslash\{i\}}
\]
(Recall, that if $C$ has a base vertex, then we consider only
interpretations that would place the base vertex at
$\iota\Q_{[k]\backslash\{i\}}$, while if $C$ has a top vertex, then we
consider only interpretations that would place the base vertex at
$\tau\Q_{[k]\backslash\{i\}}$.)

\subsection{Stage 3A: Constructing the approximation $\B'$ to $\B$.}\label{subsec:3A}
To being with we do not output $\B$ itself, but rather some
approximation $\B'$ to $\B$.  This piece of information consists of a
list of ``generalised hyperedges'' plus a list of equalities.  These
generalised hyperedges consist of $k$-tuples of lists of vertex names:
vertices in the same list will later be identified to create $\B$, but
this is a separate construction.  Some hyperedges also encode some
extra vertex of $\G$ from which they were created.  So a typical
generalised hyperedge may look like $[V_1,V_2,\dots,V_k]_e$, where $e$
is some vertex of $\G$ (at height $n$) used for book-keeping purposes
and the $V_i$ are lists consisting of some vertices of $\G$ (from
height $0$) and some new vertices we create during the algorithm.
Other hyperedges may not require the special book-keeping subscript.

Note that any \emph{new} vertices created during the algorithm should
be different each time (even though we often use $x$ to denote such a
vertex): we should use some counter on a fixed spare piece of tape for
the entire algorithm; this counter is incremented at each creation of
a new variable, and its value recorded within the new vertex name.
(There will be only polynomially many new variables created, so only
logspace used to store this one counter.)
\begin{enumerate}
\item[1] To output the generalised hyperedges.  There are two causes
  for writing generalised hyperedges to the output: the first is due
  to vertices at height $n$, and the second is due to vertices at
  height $0$ that are the base vertex for some internal component with
  no top vertices.  The generalised hyperedges will be written in such
  a way to record some extra information that will be used for
  identifications.
\item[] For each vertex $e$ at height $n$ we will need to output a
  generalised hyperedge, however there may be many different vertices
  placed at a given position: these vertices will later be identified.
  We will also record in the encoding that the generalised hyperedge
  is created from vertex $e$.  The following process is performed for
  each height $n$ vertex $e$ and in each case, we perform the
  following process for $i=1$ to $k$.
\item[1.1] Systematically search for an internal component $C$ in
  which $i\in\Gamma(C)$ and for which $e$ is a top vertex.  These
  searches involve the following: we systematically search through all
  vertices of $\G$ until some $u$ is found to be undirected reachable
  from $e$ amongst vertices not at height $0$ or $n$.  To avoid
  unnecessary duplication, we may also check that $u$ does not lie in
  the same internal component as some earlier vertex (in which case we
  may ignore $u$: this internal component has already been
  considered).  Then we proceed to systematically search through all
  vertices of $\G$ to identify the internal component $C_u$ of $\G$
  containing $u$.  This component is then checked using
  Lemma~\ref{lem:pathassign} for whether $i\in \Gamma(C_u)$.  If $i\in
  \Gamma(C)$ we go to substep 1.1.1.  If $i\notin \Gamma(C_u)$ we
  increment $u$ and continue our search for an internal component $C$
  with $e$ as top and with $i\in \Gamma(C)$.  If no such components
  are encountered we proceed to substep~1.1.2.
\begin{itemize}
\item[1.1.1] We have identified an internal component $C$ with
  $i\in\Gamma(C)$ and for which $e$ is a top vertex.  If $C$ has base
  vertices $b_1,\dots,b_j$ then these will be written to the vertex
  set for the $i^{\rm th}$ coordinate of the output hyperedge.  If $C$
  does not have base vertices, then we will create some \emph{new}
  vertex $x$ and write the vertex set $\{x\}$ to the $i^{\rm th}$
  coordinate.
\item[1.1.2] No internal component $C$ is found with $i\in\Gamma(C)$
  and for which $e$ is a top vertex.  In this case only one vertex
  will appear in the vertex set for coordinate $i$ of this generalised
  hyperedge: a \emph{new} vertex $x$.
\end{itemize}
\item[1.2] Generalised hyperedges may also be created because of level
  $0$ vertices.  The following is performed for each level $0$ vertex
  $b$ and for each internal component $C$ for which $b$ is the base
  vertex and such that $C$ has no top vertex.  (If none are found
  there is nothing to do and no generalised hyperedge is written at
  step 1.2 for $b$.)  We create a generalised hyperedge by performing
  the following checks for $i=1,\dots,k$.
\begin{itemize}
\item[1.2.1] If $i\in\Gamma(C)$ then $\{b\}$ is placed in position $i$
  of the generalised hyperedge,
\item[1.2.2] If $i\notin\Gamma(C)$ then a new vertex $x$ is created
  and $\{x\}$ is placed in position~$i$ of the generalised hyperedge.
\end{itemize}
\item[2] Finally we output information that will later be used to find
  certain vertices that be forced to be identified in any satisfying
  interpretation of $\G$.
\begin{itemize}
\item[2.1] For each pair of distinct height $n$ vertices $e,f$, if $e$
  and $f$ are the top vertex for the same internal component, then we
  write $e=f$ to the output tape.
\item[2.2] For each pair of distinct height $0$ vertices $b,c$, if $b$
  and $c$ are base vertices for the same internal component we write
  $b=c$.
\end{itemize}
\end{enumerate}
This completes the construction of $\B'$.  There are clearly further
identifications that will be forced: for example, if $b$ appears in
the list of position $i$ vertices for some generalised hyperedge $e$,
and $c$ appears in the list of position $i$ vertices for some
generalised hyperedge $f$, and if $e=f$ has been output, then we must
have $b$ and $c$ identified.  Accounting for these is stage 3B.
\subsection{Stage 3B: construction of $\B$.}\label{subsec:3B}
We now need to construct $\B$ from the list of generalised hyperedges
and equalities.  We describe this as a separate logspace process, and
use the fact that a composition of logspace constructions is itself
logspace.  The actual vertices of $\B$ will consist of sets of the
vertices currently stated.  If desired, this could be simplified as a
later separate logspace process (perhaps by using only the earliest
vertex from each set).  Currently the input consists of generalised
hyperedges where the entry in a given position is a set of vertices of
$\G$ or new vertices.  To create $\B$ we only need to amalgamate these
sets, also taking into account the equality
constraints.  

In the following, a ``vertex'' refers to an element of some set within
the position of some hyperedge.  A ``vertex set'' consists of a set of
vertices.  The actual vertices of $\B$ will be vertex sets, produced
from those appearing within $\B'$ by amalgamation.

The amalgamation process involves considering an undirected graph on
the vertices, which we refer to as the \emph{equality graph}.  The
undirected edges of the equality graph arise in several different
ways.  There will be an undirected edge from a vertex $a$ to a vertex
$b$ if $a$ and $b$ lie within the same vertex set somewhere in the
input list, or if $a=b$ is written as an equality constraint.  For a
further kind reason for an undirected edge, recall that a hyperedge
created from a height $n$ vertex $e$ records the vertex $e$ in its
description.  The role of this is just so that $e$ acts as a place
holder, and we now use this.  There will be an undirected edge from
$a$ and $b$ if they appear in vertex sets at position $i$ of two
generalised hyperedges, either with the same label $e$, or with
different labels $e,f$ but where $e=f$ appears in the input.  There is
a logspace check for these undirected edges, and using the logspace
solvability of undirected graph reachability we may determine if two
vertices identified within our list of generalised hyperedges are
connected in the equality graph, all within logspace.

For each vertex $u$ we first check if there is some lexicographically
earlier vertex $v$ for which $u$ and $v$ are connected in the equality
graph.  If an earlier vertex is discovered, then we ignore $u$ and
continue to the next vertex.  Otherwise, if no earlier vertex is
discovered, we proceed to write down the vertex set of the component
of the equality graph containing $u$. For each $v$ lexicographically
later than $u$, we check whether $v$ is reachable from $u$ (in the
equality graph) and if so include it in vertex set of $u$.  For the
actual hyperedges of $\B$ we may simply write the existing generalised
hyperedges (removing the book-keeping subscript), which can be read in
the following way.  A vertex set $U$ appears in the $i^{\rm th}$
position of a hyperedge $E$ if the intersection of $U$ with the
vertices listed for position $i$ in $E$ is nontrivial.  Some
hyperedges may be repeated in this output and obviously this could
also be neatened by following with a totally new logspace reduction
(even to an adjacency matrix).

\subsection{If and only if.}
Any homomorphism $\phi$ from $\B$ (the amalgamated ``vertex sets'')
into $\A$, determines a function $\Phi$ from the height $0$ vertices
of $\G$ to the height $0$ vertices of $\mathcal{D}(\A)$.  The
construction of the hyperedges of $\B$ exactly reflects the
satisfiability of the internal components of $\G$, so that the
function $\Phi$ extends to cover all of $\G$ (this ignores any ``short
components'' that we considered directly in stage 2A: but they cannot
have been NO instances, as otherwise $\B$ was already created in stage
2 to be some fixed NO instance).

The converse is also true.  First, we only grouped vertices together
in vertex sets and their later amalgamation if they were forced to be
identified in any possible interpretation in $\mathcal{D}(\A)$.  So
any homomorphism $f$ from $\G$ to $\mathcal{D}(\A)$ determines a
function $F$ from the ``vertices'' of $\B$ to the vertices of $\A$.
The hyperedges of $\B$ were determined by the internal components of
$\G$, which $f$ is satisfying within the encoding (in
$\mathcal{D}(\A)$) of the hyperedges of $\A$.  So the $\B$ hyperedges
are preserved by $F$.

\subsection{An example}\label{subsec:example}
The following diagram depicts a reasonably general instance $\G$ of
$\CSP(\mathcal{D}(\A))$ in the case that $\A$ itself is a digraph, so
that $k=2$.  We are considering stage 3, so that $\G$ is a single
connected digraph of height $4$.  The vertices at height $0$ are
$b_1,\dots,b_6$, and the vertices at height $4$ are $e_1,\dots,e_4$.
The shaded regions depict internal components: each is labelled by a
subset of $\{1,2\}$, depicting $\Gamma(C)$.
\begin{center}
  \includegraphics[scale=0.85]{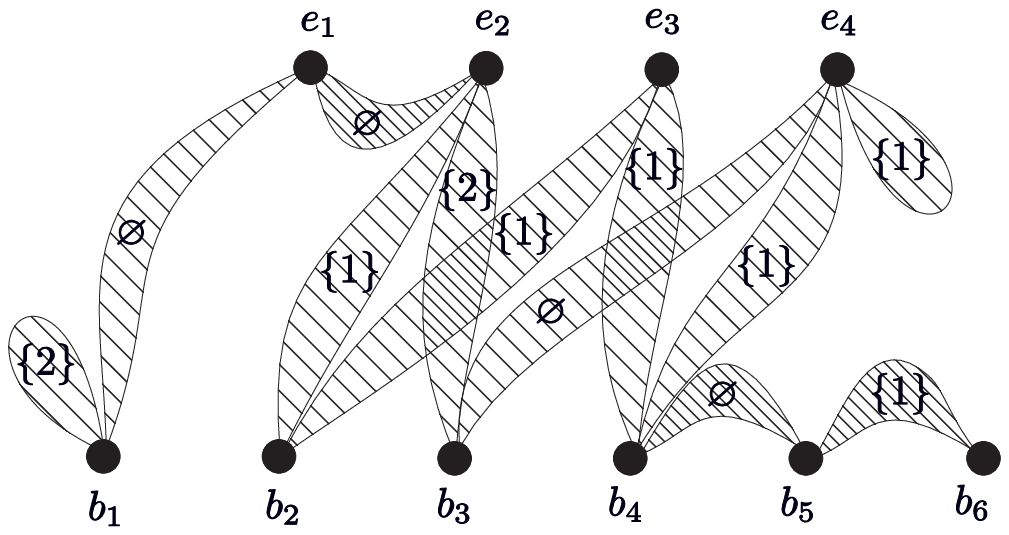}
\end{center}

Let us examine how Stage 3A proceeds.  We arrive at the first height
$4$ vertex $e_1$.  For $i=1$, discover no internal components with
$e_1$ as the top, and with $1\in \Gamma(C)$ (both have
$\Gamma(C)=\varnothing$, so we be in case 1.1.2) and therefore return
$\{x_1\}$ for the vertex set in the coordinate $1$.  For $i=2$, we
have the same outcome, so the edge that is generalised hyperedge that
is actually written is $[\{x_1\},\{x_2\}]_{e_1}$.

Then we proceed to the next height $4$ vertex $e_2$.  We encounter
just one internal component $C$ with $1\in C$, and its base vertices
are $\{b_2\}$ (so this is in case 1.1.1).  For $i=2$ we we also find
just one internal component whose $\Gamma$ value contains $2$, and it
has $\{b_3\}$ as the base vertices (also case 1.1.1).  The generalised
hyperedge $[\{b_2\},\{b_3\}]_{e_2}$ is written.

For $e_3$ and $i=1$ we encounter two internal components producing
base vertices.  We find $b_2$ as the only base vertex of the first,
and $b_4$ for the second (case 1.1.1), so the first coordinate of the
generalised hyperedge is $\{b_2,b_4\}$.  For $i=2$, no internal
components yield a base vertex (case 1.1.2), so we output $\{x_3\}$.
The actual generalised hyperedge written is
$[\{b_2,b_4\},\{x_3\}]_{e_2}$.

The vertex $e_4$ similar results in the generalised hyperedge
$[\{b_4,x_4\},\{x_5\}]_{e_4}$.

This completes step $1.1$ and we continue with step $1.2$.  We
discover the height $0$ vertex $b_1$ as the base of an internal
component $C$ with no top.  We find $1\notin \Gamma(C)$, so $\{x_6\}$
is written to the first coordinate of a generalised hyperedge (step
1.2.2 for $i=1$).  For $i=2$ we find $2\in\Gamma(C)$ so return
$\{b_1\}$ for the second coordinate.  The actual output written is
$[\{x_6\},\{b_1\}]$ (there are no subscripts to hyperedges from step
$1.2$).  Level $0$ vertices $b_4$ and $b_5$ also lead to the creation
of generalised hyperedges.  The overall output after the completion of
steps 1.1 and~1.2 is
\begin{align*}
&[\{x_1\},\{x_2\}]_{e_1}\qquad&\text{(from $e_1$, step 1.1)}\\
&[\{b_2\},\{b_3\}]_{e_2}\qquad&\text{(from $e_2$, step 1.1)}\\
&[\{b_2,b_4\},\{x_3\}]_{e_3}\qquad&\text{(from $e_3$, step 1.1)}\\
&[\{b_4,x_4\},\{x_5\}]_{e_4}\qquad&\text{(from $e_4$, step 1.1)}\\
&[\{x_6\},\{b_1\}]\qquad&\text{(from $b_1$, step 1.2)}\\
&[\{x_7\},\{x_8\}]\qquad&\text{(from $b_4$, step 1.2)}\\
&[\{x_9\},\{x_{10}\}]\qquad&\text{(from $b_5$, step 1.2)}\\
&[\{b_5\},\{x_{11}\}]\qquad&\text{(from $b_5$, step 1.2)}\\
&[\{b_6\},\{x_{12}\}]\qquad&\text{(from $b_6$, step 1.2)}
\end{align*}

For step 2 of the algorithm, we output the following equalities
\begin{align*}
&e_1=e_2,\qquad\text{(from 2.1)}\\
&b_4=b_5,b_5=b_6 \qquad\text{(from 2.2)}
\end{align*}
This completes Stage 3A: the list just given is $\B'$.  We note that
hyperedges such as $[\{x_7\},\{x_8\}]$ will be no hinderance to
satisfiability of $\B$ in $\A$, and we could word our algorithm to
avoid writing these altogether.

Stage 3B then produces the digraph $\B$ with hyperedges
\begin{align*}
&[\{b_2,b_4,b_5,b_6,x_1,x_4\},\{b_3,x_2\}]\\
&[\{b_2,b_4,b_5,b_6,x_1,x_4\},\{b_3,x_2\}]\\
&[\{b_2,b_4,b_5,b_6,x_1,x_4\},\{x_3\}]\\
&[\{b_2,b_4,b_5,b_6,x_1,x_4\},\{x_5\}]\\
&[\{x_6\},\{b_1\}]\\
&[\{x_7\},\{x_8\}]\\
&[\{x_9\},\{x_{10}\}]\\
&[\{b_2,b_4,b_5,b_6,x_1,x_4\},\{x_{11}\}]\\
&[\{b_2,b_4,b_5,b_6,x_1,x_4\},\{x_{12}\}]
\end{align*}
Which is a digraph with 12 vertices (namely, the 12 different sets of
vertices appearing in hyperedges).

The algorithm itself is the composite of stage 3A and stage 3B: we
have used the fact that a composition of logspace reductions is
logspace.

\section{Preserving Maltsev conditions}

Given a finite relational structure $\mathbb A$, we are interested in the following question: How similar are the algebras of polymorphisms of $\mathbb A$ and $\mathcal D(\mathbb A)$? More precisely, which equational properties (or \emph{Maltsev conditions}) do they share? In this section we give a partial answer to this question.


\subsection{The result}

We start by an overview and statement of the main result of this section. All the new notions are introduced later, in Subsection \ref{subsection:preliminaries}.

Since $\mathbb A$ is pp-definable from the digraph $\mathcal D (\mathbb A)$ (see Lemma \ref{lem:forwardreduction}),  it follows that $A$ and $R$ are subuniverses of $\mathcal D(\mathbb A)$ and for any $f\in\Pol\mathcal D (\mathbb A)$, the restriction $f|_A$ is a polymorphism of $\mathbb A$. Consequently, for any set of identities $\Sigma$, 
$$
\mathcal D (\mathbb A)\models\Sigma\text{ implies that }\mathbb A\models\Sigma.
$$

Theorem \ref{thm:1} (or, more precisely, Lemma \ref{lem:preserves_WNUs}) already shows that the above implication is in fact an equivalence, if $\Sigma$ is a Maltsev condition describing having a Taylor term, a WNU operation, or generating a congruence meet semidistributive variety. The list of such conditions, for which there is an equivalence, can be greatly expanded.

\begin{theorem}\label{thm:preserved_conditions}
Let $\mathbb A$ be a finite relational structure. Let $\Sigma$ be a linear idempotent set of identities such that the algebra of polymorphisms of the zigzag satisfies $\Sigma$ and each identity in $\Sigma$ is either balanced or contains at most two variables. Then 
$$
\mathcal D(\mathbb A)\models\Sigma\text{ if and only if }\mathbb A\models\Sigma.
$$
\end{theorem}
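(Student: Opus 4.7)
The direction ``$\mathcal D(\mathbb A)\models\Sigma\Rightarrow\mathbb A\models\Sigma$'' is immediate, as the authors note in the paragraph preceding the statement: by Lemma~\ref{lem:forwardreduction}, $A$ is pp-definable from $\mathcal D(\mathbb A)$, so Lemma~\ref{lem:ppdef_polymorphisms} lets us restrict each witnessing polymorphism to $A$ while preserving every identity of $\Sigma$. All the substance lies in the converse, which I would approach by generalising the construction used in Lemma~\ref{lem:preserves_WNUs}.

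Assume $\mathbb A\models\Sigma$, and for each operation symbol $t$ of arity $m_t$ appearing in $\Sigma$ fix a witnessing polymorphism $t^{\mathbb A}$. I would build extensions $\bar t$ on $\mathcal D(\mathbb A)$ \emph{simultaneously} for all $t\in\Sigma$, repeating the case split of Lemma~\ref{lem:preserves_WNUs} uniformly. In Cases 1 and 2 (tuples whose coordinates do not share a level, or share a level but lie outside $\Delta_{m_t}$) I would set $\bar t(\mathbf c)=c_{j(\pi)}$ where $\pi$ is the partition of $[m_t]$ induced by the level function (Case 1) or by equality of coordinates (Case 2), and $j(\pi)$ is a fixed selection rule, uniform over $t\in\Sigma$, returning the least index in the largest block of $\pi$. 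Linearity of $\Sigma$ together with the hypothesis that each identity is either balanced or uses at most two variables is precisely what makes this rule respect $\Sigma$: balanced linear identities $t(x_{\sigma(1)},\dots,x_{\sigma(m_t)})\approx s(x_{\tau(1)},\dots,x_{\tau(m_s)})$ yield the same level/value-pattern on both sides, so the rule returns the same variable, while two-variable identities reduce to the ``distinguished coordinate'' device already used in subcases 1a/2a of Lemma~\ref{lem:preserves_WNUs}.

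On the diagonal component I would split as in Lemma~\ref{lem:preserves_WNUs}: on $A^{m_t}$ put $\bar t=t^{\mathbb A}$, on $R^{m_t}$ put $\bar t=(t^{\mathbb A})^{(k)}$; both inherit $\Sigma$ directly from $\mathbb A$. On the interior of products of paths $\prod_{i=1}^{m_t}\mathbb P_{e_i,l}$ (subcase 3c) I would define $\bar t$ via a homomorphism $\Phi_t\colon\prod_i\mathbb P_{e_i,l}\to\mathbb P_{e,l}$ with $e=(t^{\mathbb A}(\mathbf a),(t^{\mathbb A})^{(k)}(\mathbf r))$. This is where the zigzag hypothesis enters: since each factor $\mathbb P_{e_i,l}$ is either a single edge or a zigzag, one fixes once and for all a $\Sigma$-witnessing family $\{\widetilde t\}_{t\in\Sigma}$ of polymorphisms of the zigzag and assembles every $\Phi_t$ from the $\widetilde t$. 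The key point is that choosing the same zigzag family throughout guarantees that the entire system $\{\Phi_t\}_{t\in\Sigma}$ jointly satisfies $\Sigma$, not merely each $t$ in isolation. Properties (a)--(d) of Lemma~\ref{lem:preserves_WNUs} generalise verbatim, the crucial one being the analogue of (b), which makes the values of $\Phi_t$ agree with those of $t^{\mathbb A}$ and $(t^{\mathbb A})^{(k)}$ at $\iota\mathbb P_{e,l}$ and $\tau\mathbb P_{e,l}$, gluing Case 3c to Cases 3a/3b.

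The verifications then proceed exactly as in Lemma~\ref{lem:preserves_WNUs}. Edge-preservation and idempotency are checked case-by-case using Lemma~\ref{lem:diagonalcomponent}. For each identity $t\approx s$ in $\Sigma$ and each substitution, Lemma~\ref{lem:diagonalcomponent} forces all participating tuples into the same case, so the check reduces to identities in $\mathbb A$ (Case 3a, 3b), identities in the zigzag (Case 3c), or combinatorics of the selection rule (Cases 1, 2). I expect the main obstacle to be the Case 3c coordination: convincing oneself that a single choice of zigzag family $\{\widetilde t\}$ propagates through all the $\Phi_t$'s so that every identity of $\Sigma$ holds simultaneously---this is, ultimately, precisely what the zigzag hypothesis provides, and the balanced-or-two-variable restriction on $\Sigma$ is exactly what is needed so that the off-diagonal projection rule does not ruin the identities that were secured on the diagonal.
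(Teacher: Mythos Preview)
Your overall architecture is right, and the diagonal part (your Case~3) matches the paper. The gap is in your off-diagonal Cases~1 and~2. A projection rule of the form $\bar t(\mathbf c)=c_{j(\pi)}$, with $j$ depending only on the partition of the index set, cannot validate arbitrary two-variable linear identities. Take for instance $p_2(x,x,y)\approx y$ from congruence $3$-permutability: if $x$ and $y$ lie on different levels, the level partition of $[3]$ is $\{\{1,2\},\{3\}\}$, your rule picks index~$1$ and returns $x$, not $y$. The ``distinguished coordinate'' device of Lemma~\ref{lem:preserves_WNUs}~(1a/2a) works for WNU only because the WNU identities equate terms with one another rather than with a bare variable; it does not survive identities of the shape $t(\mathbf u)\approx x$. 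Your rule also fails for balanced identities such as cyclic ones: with three pairwise distinct levels every block is a singleton and ``least index in the largest block'' returns position~$1$ on each side, giving different values for $f(x,y,z)$ and $f(y,z,x)$.

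The paper fixes both problems by pushing the zigzag hypothesis into the off-diagonal case as well. When the input tuple has exactly two levels (or a single level but exactly two $\epsilon$-values), it encodes the dichotomy as a two-element set $\{00,10\}\subseteq\mathbb Z$, applies the chosen zigzag polymorphism $f_\lambda^{\mathbb Z}$ there, and only then takes a $\sqsubseteq$-minimum over the resulting fibre. Because $\mathbb Z\models\Sigma$, this correctly selects the ``$y$-side'' in identities like $p_2(x,x,y)\approx y$. The naive $\sqsubseteq$-minimum of the \emph{set of values} (not of indices) is used only in the residual case of three or more levels or $\epsilon$-values, which two-variable substitutions can never reach; and being value-based it respects balanced identities automatically. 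So the missing idea is that the zigzag polymorphisms are needed off the diagonal too, and the fallback selection must be on values, not positions.
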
 

The following corollary lists some popular properties that can be expressed as sets of identities satisfying the above assumptions. Indeed, they include many commonly encountered Maltsev conditions.

\begin{corollary}\label{cor:preserved_conditions}
Let $\mathbb A$ be a finite relational structure.  Then each of the following hold equivalently on $\mathbb A$ and $\mathcal D(\mathbb A)$.
\begin{enumerate}[(1)]
\item Taylor polymorphism or equivalently weak near-unanimity (WNU)
  polymorphism \cite{maroti-mckenzie} or equivalently cyclic
  polymorphism~\cite{b-k} \up(conjectured to be equivalent to being
  tractable if $\mathbb{A}$ is a core~\cite{b-j-k}\up)\up;
\item Polymorphisms witnessing $\text{SD}(\wedge)$
  \up(equivalent to bounded width~\cite{b-k2}\up)\up;
\item \up(for $k\geq 4$\up) $k$-ary edge polymorphism
  \up(equivalent to few subpowers \cite{BIMMVW}, \cite{IMMVW}\up)\up;
\item $k$-ary near-unanimity polymorphism \up(equivalent to strict
  width~\cite{fedvar}\up)\up;
\item totally symmetric idempotent (TSI) polymorphisms of all arities
  \up(equivalent to width $1$~\cite{width1}, \cite{fedvar}\up)\up;
\item Hobby-McKenzie polymorphisms \up(equivalent to the
  corresponding variety satisfying a non-trivial congruence lattice
  identity\up)\up;
\item Gumm polymorphisms witnessing congruence modularity;
\item J\'onsson polymorphisms witnessing congruence distributivity;
\item polymorphisms witnessing $\text{SD}(\vee)$ \up(conjectured to be
  equivalent to \up{\texttt{NL}}~\cite{LT:conj}\up);
\item \up(\up for $n\geq 3$\up) polymorphisms witnessing congruence
  $n$-permutability \up(together with \up{(9)} is conjectured to be equivalent
  to \up{\texttt{L}}~\cite{LT:conj}\up).
\end{enumerate}
\end{corollary}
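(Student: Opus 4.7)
My plan is to obtain the corollary directly from Theorem \ref{thm:preserved_conditions}: for each of the ten listed conditions (C), I would exhibit an equivalent system $\Sigma$ of identities that is linear, idempotent, and with each identity either balanced or in at most two variables, and verify that the polymorphism algebra of the zigzag satisfies $\Sigma$. Given these two ingredients the theorem delivers $\mathbb A\models\Sigma \iff \mathcal D(\mathbb A)\models\Sigma$ immediately, which is the desired equivalence.

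The identity-form checks are mostly bookkeeping from standard presentations. For (1), I would use the cyclic characterisation of \cite{b-k}: the single balanced, linear, idempotent identity $c(x_1,\ldots,x_n)=c(x_2,\ldots,x_n,x_1)$; equivalently the two-variable WNU equations already in play for Theorem~\ref{thm:1}. The defining identities for (3) $k$-ary edge, (4) $k$-ary NU, and the idempotent part of (5) TSI are finite conjunctions of two-variable linear idempotent identities by construction; the symmetry half of (5) is the balanced scheme $t(x_1,\ldots,x_n)=t(x_{\pi(1)},\ldots,x_{\pi(n)})$ over all permutations $\pi$. The standard term sequences for (7) Gumm, (8) J\'onsson, (6) Hobby--McKenzie, and (10) Hagemann--Mitschke are ternary, linear, idempotent and in at most two variables once the boundary terms of each sequence are replaced by the appropriate projections (which are polymorphisms of every structure). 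For (2) and (9) I would invoke, respectively, the known two-variable linear idempotent Kozik--Krokhin--Valeriote--Willard characterisation of $\text{SD}(\wedge)$ and a directed Hobby--McKenzie characterisation of $\text{SD}(\vee)$.

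The main content of the argument is verifying that the zigzag's polymorphism algebra satisfies each of these systems, and for items (1)--(9) this follows from a single observation: the zigzag admits a ternary majority polymorphism. Writing the zigzag on vertices $v_1,v_2,v_3,v_4$ with edges $(v_1,v_2),(v_3,v_2),(v_3,v_4)$ and identifying each level with $\{0,1\}$ by $v_1,v_2\mapsto 0$ and $v_3,v_4\mapsto 1$, the edge relation becomes $\{(a,b)\in\{0,1\}^2 : a\geq b\}$, which is preserved by every monotone Boolean operation applied levelwise. In particular the ternary median gives a majority polymorphism; the $(k-1)$-out-of-$k$ monotone threshold gives a $k$-ary near-unanimity operation for every $k\geq 3$ (and via the standard trick $e(x_1,\ldots,x_{k+1}):=n(x_2,\ldots,x_{k+1})$ a $(k+1)$-ary edge operation); and conjunction of each arity gives TSI polymorphisms. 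Having a near-unanimity term already places the zigzag's polymorphism clone inside the congruence-distributive class, which subsumes congruence modularity, $\text{SD}(\wedge)$, $\text{SD}(\vee)$, and the Hobby--McKenzie condition, so all of (1)--(9) are settled.

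The one genuine obstacle is (10): congruence $n$-permutability is not implied by congruence distributivity, and indeed no monotone Boolean term can satisfy $p_1(x,y,y)=x$ nontrivially (a quick check forces $p_1$ to be the first projection, which then breaks the remaining Hagemann--Mitschke identities). I would therefore construct $3$-permutability terms $p_1,p_2$ for the zigzag by exploiting non-level-preserving freedom. The key observation is that in the cube of the zigzag, any triple with entries at different levels is isolated --- it has neither incoming nor outgoing edges --- so $p_1,p_2$ can be defined arbitrarily on mixed-level inputs; the actual polymorphism constraints reduce to two coupled Boolean restrictions $p_j^{L_0},p_j^{L_1}\colon\{0,1\}^3\to\{0,1\}$ subject only to the compatibility that $x\geq y$ componentwise implies $p_j^{L_0}(x)\geq p_j^{L_1}(y)$. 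A short case analysis produces explicit $p_1,p_2$ meeting the Hagemann--Mitschke identities together with all polymorphism constraints; $3$-permutability then yields $n$-permutability for every $n\geq 3$ by padding the sequence with projections.
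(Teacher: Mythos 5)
Your proposal is correct and follows essentially the same route as the paper: reduce to Theorem~\ref{thm:preserved_conditions}, check that each condition is given by a linear idempotent system whose identities are balanced or two-variable, and verify the zigzag via lattice-type (monotone Boolean/median/conjunction) polymorphisms for items (1)--(9) and a separate construction of Hagemann--Mitschke terms for item (10), exactly as in Lemmas~\ref{lem:Zdistrlat}--\ref{lem:Z3perm} and Corollary~\ref{cor:ZdistrlatCorollary}. The only place you are lighter than the paper is item (10), where you correctly reduce the problem to a pair of Boolean functions on the two levels subject to a cross-level monotonicity constraint but leave the ``short case analysis'' unexecuted; the paper's explicit $p_1,p_2$ in Lemma~\ref{lem:Z3perm} instantiate precisely that reduction, so the claim is sound (and your (1)--(9) argument should be read as acting coordinatewise on the pairs (label, level) so that mixed-level inputs, which the identities do quantify over, are covered).
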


Note that the list includes all six conditions for omitting types in
the sense of Tame Congruence Theory~\cite{hobbymckenzie}.


We will prove Theorem \ref{thm:preserved_conditions} and Corollary \ref{cor:preserved_conditions} in subsection \ref{subsection:proofs}.

\subsection{Preliminaries} \label{subsection:preliminaries}


Given a finite relational structure $\mathbb A$, let $\Pol\mathbb A$ denote the set of all polymorphisms of $\mathbb A$. The \emph{algebra of polymorphisms} of $\mathbb A$ is simply the algebra with the same universe whose operations are all polymorphisms of $\mathbb A$. A subset $B\subseteq A$ is a \emph{subuniverse} of $\mathbb A$, denoted by $B\leq\mathbb A$, if it is a subuniverse of the algebra of polymorphisms of $\mathbb A$, i.e., it is closed under all $f\in\Pol\mathbb A$.

An \emph{(operational) signature} is a (possibly infinite) set of operation symbols with arities assigned to them. By an \emph{identity} we mean an expression $u\approx v$ where $u$ and $v$ are terms in some signature. An identity $u\approx v$ is \emph{linear} if both $u$ and $v$ involve at most one occurrence of an operation symbol (e.g. $f(x,y)\approx g(x)$, or $h(x,y,x)\approx x$); and \emph{balanced} if the sets of variables occuring in $u$ and in $v$ are the same (e.g. $f(x,x,y)\approx g(y,x,x)$). 

A set of identities $\Sigma$ is \emph{linear} if it contains only
linear identities; \emph{balanced} if all the identities in $\Sigma$
are balanced; and \emph{idempotent} if for each operation symbol $f$
appearing in an identity of $\Sigma$, the identity
$f(x,x,\dots,x)\approx x$ is in $\Sigma$. \footnote{We can relax this
  condition and require the identity $f(x,x,\dots,x)\approx x$ only to
  be a \emph{consequence} of identities in $\Sigma$.} For example, the
identities $p(y,x,x)\approx y,\ p(x,x,y)\approx y,\ p(x,x,x)\approx x$
(defining the so called \emph{Maltsev} term) form a linear idempotent
set of identities which is not balanced.

The \emph{strong Maltsev condition}, a notion usual in universal
algebra, can be defined in this context as a finite set of
identities. A \emph{Maltsev condition} is an increasing chain of
strong Maltsev conditions, ordered by syntactical consequence. In all
results from this section, ``set of identities'' can be replaced with
``Maltsev condition''.

Let $\Sigma$ be a set of identities in a signature with operation
symbols $\mathcal{F}=\{ f_\lambda\mid\lambda\in\Lambda\}$. We say that
a relational structure $\mathbb{A}$ \emph{satisfies} $\Sigma$ (and
write \emph{$\mathbb A\models\Sigma$}), if for every
$\lambda\in\Lambda$ there is a polymorphism
$f^\mathbb{A}_\lambda\in\Pol\mathbb A$ such that the identities in
$\Sigma$ hold universally in $\mathbb A$ when for each
$\lambda\in\Lambda$ the symbol $f_\lambda$ is interpreted as
$f^\mathbb A_\lambda$.

\subsection{Polymorphisms of the zigzag}

In the following, let $\mathbb Z$ be a zigzag with vertices $00$, $01$, $10$ and $11$, i.e., the oriented path  \mbox{$00\boldsymbol\rightarrow 01\boldsymbol\leftarrow 10\boldsymbol\rightarrow 11$}. The digraph $\mathbb Z$ satisfies most of the important Maltsev conditions (an exception being the Maltsev term). We need the following.

\begin{lemma} \label{lem:Zdistrlat} 
The digraph $\mathbb Z$ satisfies any set of identities which holds in the variety of distributive lattices.
\end{lemma}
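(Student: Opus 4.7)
The plan is to exhibit two binary polymorphisms $\wedge,\vee\in\Pol\mathbb Z$ that turn the four-element vertex set of $\mathbb Z$ into a distributive lattice. Once such operations are found, every term built from $\wedge$ and $\vee$ is automatically a polymorphism of $\mathbb Z$ (since $\Pol\mathbb Z$ is closed under composition and contains the projections), and so any set of identities $\Sigma$ that holds in the variety of distributive lattices can be witnessed on $\mathbb Z$ by interpreting each operation symbol appearing in $\Sigma$ as the corresponding distributive-lattice term in $\wedge,\vee$. Thus the entire content of the lemma is reduced to supplying these two polymorphisms.

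To produce them I would identify the vertices of $\mathbb Z$ with $\{0,1\}^2$ via
\[
00\leftrightarrow(0,1),\quad 01\leftrightarrow(1,1),\quad 10\leftrightarrow(0,0),\quad 11\leftrightarrow(1,0),
\]
and define $\wedge$ and $\vee$ on $\mathbb Z$ as coordinatewise $\min$ and $\max$. Under this labelling, $(\mathbb Z;\wedge,\vee)$ is literally the square of the two-element lattice, which is distributive, so the algebraic side is automatic. The encoding is rigged so that the three edges of $\mathbb Z$, namely $00\to01$, $10\to01$ and $10\to11$, become the pairs $(a,b)\to(c,d)$ in $\{0,1\}^2$ with $a=0$, $c=1$ and $b\le d$; equivalently, $u\to v$ in $\mathbb Z$ iff the first coordinate jumps from $0$ to $1$ and the second coordinate is weakly monotone. (The missing pair $00\to 11$ fails precisely because $1\not\le 0$ on the second coordinate.)

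With that observation the polymorphism check is essentially one line. If $u_i\to v_i$ in $\mathbb Z$ for $i=1,2$, then both first coordinates of $u_1,u_2$ equal $0$ and both of $v_1,v_2$ equal $1$, so the first coordinates of $\vee(u_1,u_2)$ and $\vee(v_1,v_2)$ are $0$ and $1$; and since the second coordinate is weakly monotone along each edge, so is its coordinatewise maximum, giving an edge $\vee(u_1,u_2)\to\vee(v_1,v_2)$ in $\mathbb Z$. The same argument works for $\wedge$ with $\min$ in place of $\max$. The only ``obstacle'' is guessing the right labelling of the four vertices of $\mathbb Z$ as elements of $\{0,1\}^2$; once this is done, all that remains is the trivial verification just described together with the formal remark of the first paragraph, and the lemma follows.
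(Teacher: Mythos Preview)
Your proof is correct. Both you and the paper reduce the lemma to exhibiting binary polymorphisms $\wedge,\vee$ making $\mathbb Z$ a distributive lattice, but you choose genuinely different lattice structures. The paper simply orders the four vertices by their distance from $00$ along the path (so $00<01<10<11$) and takes $\wedge,\vee$ to be $\min,\max$ in this \emph{chain}; the polymorphism check is then a quick case analysis. You instead realise $\mathbb Z$ as the Boolean square $\{0,1\}^2$ via a carefully chosen bijection, so that $\wedge,\vee$ become coordinatewise $\min,\max$ and the edge relation acquires the neat description ``first coordinate jumps $0\to 1$, second coordinate weakly increases''. Your labelling requires a moment's thought to find, but it buys a uniform one-line verification of the polymorphism property (monotone functions preserve monotone relations), whereas the paper's chain is immediate to write down but leaves the polymorphism check to the reader. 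Either route suffices.
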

\begin{proof}
  Define the operations $\wedge$ and $\vee$ in the following way. Let
  $x\wedge y$ by the vertex from $\{x,y\}$ closer to $00$ and $x\vee
  y$ the vertex closer to $11$. These two operations are polymorphisms
  and they form a distributive lattice; the rest follows immediately.
\end{proof}

\begin{corollary}  \label{cor:ZdistrlatCorollary}
The digraph $\mathbb Z$ has a majority polymorphism, and it satisfies any balanced set of identities.
\end{corollary}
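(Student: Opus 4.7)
The plan is to derive both parts of the corollary directly from Lemma~\ref{lem:Zdistrlat}, which lets us treat any distributive-lattice term as a polymorphism of $\mathbb Z$.

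For the majority polymorphism, I would take the standard lattice median
\[
m(x,y,z) \;=\; (x\wedge y)\vee(x\wedge z)\vee(y\wedge z).
\]
This is a term in the language of $\{\wedge,\vee\}$, hence by Lemma~\ref{lem:Zdistrlat} it defines a polymorphism of $\mathbb Z$. The majority identities $m(x,x,y)=m(x,y,x)=m(y,x,x)=x$ are valid in every distributive lattice, so they transfer to~$\mathbb Z$.

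For the second claim, given a balanced set of identities $\Sigma$ using operation symbols $\{f_\lambda : \lambda\in\Lambda\}$ with arities $n_\lambda$, I would interpret each symbol uniformly as the $n_\lambda$-ary meet
\[
f_\lambda^{\mathbb Z}(x_1,\dots,x_{n_\lambda}) \;:=\; x_1\wedge x_2\wedge\dots\wedge x_{n_\lambda}.
\]
Each such operation is a lattice term, hence a polymorphism of $\mathbb Z$ by Lemma~\ref{lem:Zdistrlat}. A straightforward induction on term structure, using commutativity, associativity, and idempotence of $\wedge$ (all valid in distributive lattices), shows that under this interpretation every term $t$ with variable set $X$ evaluates to $\bigwedge_{x\in X} x$. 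Therefore any balanced identity $u\approx v$, in which the variable sets of $u$ and $v$ coincide, is satisfied by $\mathbb Z$.

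There is no serious obstacle here; the only subtlety is choosing the interpretation so that both sides of every balanced identity collapse to the same normal form, and the ``meet everything'' interpretation does exactly this. Non-balanced identities like $f(x,y)\approx f(y,x)$ do survive this interpretation, which is consistent with the stronger fact (implicit in the corollary) that $\mathbb Z$ actually has totally symmetric idempotent polymorphisms of every arity.
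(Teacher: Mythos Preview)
Your proof is correct and follows essentially the same approach as the paper: the median term for the majority polymorphism, and interpreting every operation symbol as the meet of its arguments to satisfy balanced identities. One small slip in your closing remark: $f(x,y)\approx f(y,x)$ is in fact balanced (both sides have variable set $\{x,y\}$), so it is not an example of a non-balanced identity surviving the interpretation; this does not affect the proof itself.
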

\begin{proof}
The ternary operation defined by $m(x,y,z)=(x\wedge y)\vee(x\wedge z)\vee(y\wedge z)$ (the \emph{median}) is a majority polymorphism.

Let $\Sigma$ be a balanced set of identities. For every operation symbol $f$ (say $k$-ary) occurring in $\Sigma$, we define $f^\mathbb Z(x_1,\dots,x_k)=\bigwedge_{i=1}^k x_i$. It is easy to check that $f^\mathbb Z$ is a polymorphism and that such a construction satisfies any balanced identity.
\end{proof}



\begin{lemma} \label{lem:Z3perm}
The digraph $\mathbb Z$ is congruence 3-permutable.
\end{lemma}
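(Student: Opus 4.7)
By the Hagemann--Mitschke theorem, proving congruence $3$-permutability of $\mathbb Z$ reduces to exhibiting ternary polymorphisms $p,q$ of $\mathbb Z$ satisfying
\[
p(x,y,y)=x,\qquad p(x,x,y)=q(x,y,y),\qquad q(x,x,y)=y.
\]
The plan is to construct such $p,q$ explicitly, working on the two levels of $\mathbb Z$ separately and then extending.

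The key structural observation is that $\mathbb Z$ is balanced of height~$1$, so every edge of $\mathbb Z^3$ connects an all-level-$0$ tuple to an all-level-$1$ tuple, and every mixed-level tuple is isolated. A ternary operation on $\mathbb Z$ is therefore a polymorphism exactly when its restrictions $f_0,f_1$ to the two levels are compatible, and carries no constraint on mixed-level inputs. Identifying each level with $\{0,1\}$ so that the up-edges of $\mathbb Z$ correspond to ``$\ge$'' in the encoding, compatibility becomes the inequality $f_1(\mathbf b)\le f_0(\mathbf a)$ for every componentwise pair $\mathbf b\le\mathbf a$ in $\{0,1\}^3$.

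I would then exhibit explicit level restrictions. On level~$0$ take
\[
p_0(x,y,z)=x\vee(z\wedge\bar y)\quad\text{and}\quad q_0(x,y,z)=z\vee(x\wedge\bar y),
\]
and on level~$1$ let $p_1(x,y,z)$ equal $x$ when $y=z$ and $0$ otherwise, and $q_1(x,y,z)$ equal $z$ when $x=y$ and $0$ otherwise (with $\bar y$ the Boolean complement on $\{0,1\}$). A direct computation shows that on each level $p_i(x,y,y)=x$, $q_i(x,x,y)=y$, and the common value $p_i(x,x,y)=q_i(x,y,y)$ is $x\vee y$ on level~$0$ and $x\wedge y$ on level~$1$. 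A routine case analysis over the eight componentwise pairs $\mathbf b\le\mathbf a$ in $\{0,1\}^3$ confirms that both $(p_0,p_1)$ and $(q_0,q_1)$ satisfy the compatibility inequality. Finally, $p$ and $q$ are extended to mixed-level inputs by setting $p(x,y,y):=x$ and $q(x,x,y):=y$ for mixed inputs of these shapes, and then choosing a common value for each pair $p(x,x,y),q(x,y,y)$ forced equal by the intermediate identity; since mixed-level inputs are isolated in $\mathbb Z^3$, this extension is unobstructed.

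The subtle point of the argument is the choice of $p_1$ and $q_1$. The natural attempt to take Maltsev-like restrictions on both levels fails: one can exhibit a componentwise pair $\mathbf b\le\mathbf a$ (for example $\mathbf a=(1,1,0)$ and $\mathbf b=(1,0,0)$) at which the Maltsev identities force $f_0(\mathbf a)=0$ and $f_1(\mathbf b)=1$, violating compatibility. The asymmetric level-$1$ formulas above give up the Maltsev identity at off-diagonal inputs while preserving the required H--M identities $p_1(x,y,y)=x$ and $q_1(x,x,y)=y$, and are pointwise small enough on level~$1$ to be dominated by the level-$0$ restriction.
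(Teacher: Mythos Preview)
Your proof is correct and, like the paper's, proceeds by exhibiting explicit Hagemann--Mitschke terms $p,q$ on $\mathbb Z$. The paper does not decompose by level, however: it writes down global case-by-case formulas on all of $\mathbb Z$ (roughly, ``return $01$ if $01$ occurs among the arguments and the relevant pair of coordinates disagrees; else $10$ under the analogous condition; else a projection'') and then checks directly that these preserve the edge relation and satisfy the three identities. Your approach is more structured: reducing the polymorphism condition to the monotonicity inequality $f_1(\mathbf b)\le f_0(\mathbf a)$ between Boolean operations makes the verification transparent and, as you note at the end, explains conceptually why no Maltsev term can exist. The resulting polymorphisms are genuinely different---for instance on level~$0$ the paper's first term sends $(0,1,0)$ to $1$ while your $p_0$ sends it to $0$---but both satisfy what is needed. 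The small price you pay is the bookkeeping for mixed-level inputs, which the paper's global definition handles automatically.
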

\begin{proof}
The ternary polymorphisms $p$ and $q$ witnessing $3$-permutability can be defined as follows:
\begin{align*}
p_1(x,y,z)=&\begin{cases}
01 & \text{if $01\in\{ x,y,z\}$ and $y\neq z$},\\
10 & \text{if $10\in\{x,y,z\}$ and $01\notin\{x,y,z\}$ and $y\neq z$},\\
x & \text{otherwise},
\end{cases}\\
p_2(x,y,z)=&\begin{cases}
01 & \text{if $01\in\{ x,y,z\}$ and $x\neq y$},\\
10 & \text{if $10\in\{x,y,z\}$ and $01\notin\{x,y,z\}$ and $x\neq y$},\\
z & \text{if $x=y$}.\\
x & \text{otherwise}
\end{cases}
\end{align*}
If we have triples $(a_1,a_2,a_3)$ and $(b_1,b_2,b_3)$ such that
$a_i\rightarrow b_i$, for $i=1,2,3$, then
$\{a_1,a_2,a_3\}\subseteq\{00,10\}$ and
$\{b_1,b_2,b_3\}\subseteq\{01,11\}$. Clearly the result follows if
$p_1(a_1,a_2,a_3)=10$ since $p_1(b_1,b_2,b_3)\in\{01,11\}$. The only
case where $p_1(a_1,a_2,a_3)\neq 10$ is when $a_1=00$ and $a_2=a_3=10$
in which case $b_1=01$ and $b_2,b_3\in\{01,11\}$. In any case,
$p_1(b_1,b_2,b_3)=01$. A similar argument works for $p_2$, and hence
$p_1$ and $p_2$ are polymorphisms.

The identities $p_1(x,y,y)\approx x$ and $p_2(x,x,y)\approx y$ follow
directly from the definitions of $p_1$ and $p_2$. To prove the
equation $p_1(x,x,y)\approx p_2(x,y,y)$ we can assume that $x\neq
y$. If $01$ or $10$ are in $\{x,y\}$, then $p_1$ and $p_2$ agree. If
not, then $p_1(x,x,y)=p_2(x,y,y)=x$.
\end{proof}

\subsection{Proofs} \label{subsection:proofs} In this subsection we
prove Theorem \ref{thm:preserved_conditions} and Corollary
\ref{cor:preserved_conditions}. Fix a finite relational structure,
without loss of generality we can assume that $\mathbb A=(A;R)$, where
$R$ is a $k$-ary relation. Fix an arbitrary linear order $\preceq$ on
the set $E=A\times R$. We define the mapping $\epsilon:\mathcal
D(\mathbb A)\to E$ by setting $\epsilon(x)$ to be the
$\preceq$-minimal $e\in E$ such that $x\in\mathbb P_e$. 


We will use the following linear order $\sqsubseteq$ on the vertices of $\mathcal D(\mathbb A)$: Put $x\sqsubset y$ if either of the following is true:
\begin{itemize}
\item $\lvl(x)<\lvl(y)$,
\item $\lvl(x)=\lvl(y)$ and $\epsilon(x)\prec\epsilon(y)$,
\item $\lvl(x)=\lvl(y)$, $\epsilon(x)=\epsilon(y)$, and $x$ is closer to $\iota\mathbb P_{\epsilon(x)}$ than $y$.
\end{itemize}
We will need the following easy fact, we leave the verification to the reader.
\begin{observation} 
Let $C$ and $D$ be subsets of $\mathcal D(\mathbb A)\setminus(A\cup R)$ such that
\begin{itemize}
\item for every $x\in C$ there exists $y\in D$ such that $x\boldsymbol{\rightarrow}y$, and
\item for every $y\in D$ there exists $x\in C$ such that $x\boldsymbol{\rightarrow}y$.
\end{itemize}
If $c$ and $d$ are the $\sqsubseteq$-minimal elemets of $C$ and $D$, respectively, then $c\boldsymbol{\rightarrow}d$.
\end{observation}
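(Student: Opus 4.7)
My plan is to first pin down that $c$ and $d$ lie in the same path $\mathbb{P}_{\epsilon(c)}$ at consecutive levels, and then to carry out a short local case analysis on the segment of that path between those levels.

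For the level comparison I would use that $\mathcal{D}(\mathbb{A})$ is balanced. Picking $y \in D$ with $c \to y$ gives $\lvl(y) = \lvl(c) + 1$, while for any $y' \in D$ the in-neighbor condition produces $x' \in C$ with $\lvl(x') \geq \lvl(c)$ by minimality of $c$, so $\lvl(y') \geq \lvl(c) + 1$. Hence $y$ already sits at the minimum level of $D$, so $\lvl(d) = \lvl(c) + 1$. For the $\epsilon$-value, every vertex of $\mathcal{D}(\mathbb{A}) \setminus (A \cup R)$ lies in a unique path $\mathbb{P}_e$, so $\epsilon(y) = \epsilon(c)$ and hence $\epsilon(d) \preceq \epsilon(c)$ by minimality of $d$. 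Conversely, an in-neighbor $x$ of $d$ in $C$ lies in $\mathbb{P}_{\epsilon(d)}$ at level $\lvl(c)$, and $\epsilon(d) \prec \epsilon(c)$ would give $x \sqsubset c$, contradicting minimality of $c$. So $\epsilon(c) = \epsilon(d)$, and $c, d$ live in the common path $\mathbb{P} := \mathbb{P}_{\epsilon(c)}$ at levels $\ell := \lvl(c)$ and $\ell + 1$ respectively.

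The local step uses two structural facts about $\mathbb{P}$. Any internal level contains at most three vertices, in distance-from-$\iota\mathbb{P}$ order: the peak of the preceding zigzag (if any), the shared endpoint, and the valley of the following zigzag (if any); moreover all edges between levels $\ell$ and $\ell + 1$ lie inside the single segment $\mathbb{Q}_{\mathcal{I},\ell}$, with peaks having no outgoing edge and valleys at level $\ell + 1$ having no incoming edge from below. Because $c$ has an out-neighbor it is not a peak, and minimality of $c$ within $C \cap \mathbb{P}$ at level $\ell$ leaves it equal either to the shared endpoint $S$ or to the valley $V$ of $\mathbb{Q}_{\mathcal{I},\ell}$; dually $d$ is either the peak $P'$ or the shared endpoint $S'$ at level $\ell + 1$. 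A case split on whether $\mathbb{Q}_{\mathcal{I},\ell}$ is a single edge or a zigzag, combined with the two possibilities for $c$, then shows in each instance that $\sqsubseteq$-minimality of $d$ forces it to be an out-neighbor of $c$: for example if $c = S$ and $\mathbb{Q}_{\mathcal{I},\ell}$ is a zigzag, the unique out-neighbor of $S$ is $P'$, which lies in $D$ and is the $\sqsubseteq$-smallest candidate at level $\ell + 1$, so $d = P'$; the other three sub-cases are analogous.

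I expect the only real work is the bookkeeping of this last case analysis: with up to three vertex positions per internal level and two possible segment types, several sub-cases must be inspected, but no individual case is hard. The care is simply in consistently enumerating which of the three positions exist at each level and which edges connect them.
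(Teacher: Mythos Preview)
Your argument is correct. The paper itself does not prove this observation at all—it states ``We will need the following easy fact, we leave the verification to the reader''—so there is nothing to compare against; your two-step plan (first pin down $\lvl(d)=\lvl(c)+1$ and $\epsilon(d)=\epsilon(c)$, then do the local case analysis on the segment $\mathbb{Q}_{\mathcal{I},\ell}$) is exactly the kind of verification the authors had in mind, and each of your cases checks out.
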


Note that the subset $\{00,10\}$ of the zigzag $\mathbb Z$ is closed
under all polymorphisms of $\mathbb Z$ (as it is pp-definable using
the formula $(\exists y)(x\boldsymbol\rightarrow y)$, see Lemma
\ref{lem:ppdef_polymorphisms}). The same holds for $\{01,11\}$. We use
this fact in the construction below, namely in case 3.

\subsubsection*{Proof of Theorem \ref{thm:preserved_conditions}.}

Let $\Sigma$ be a set of identities in operation symbols
$\{f_\lambda:\lambda\in\Lambda\}$ satisfying the assumptions. Let
$\{f^\mathbb A_\lambda\mid\lambda\in\Lambda\}$ and $\{f^\mathbb
Z_\lambda\mid\lambda\in\Lambda\}$ be interpretations of the
operation symbols witnessing $\mathbb A\models\Sigma$ and
$\mathbb Z\models\Sigma$, respectively.

We will now define polymorphisms $\{f^{\mathcal D(\mathbb
  A)}_\lambda\mid\lambda\in\Lambda\}$ witnessing that $\mathcal
D(\mathbb A)\models\Sigma$. Fix $\lambda\in\Lambda$ and assume that
$f_\lambda$ is $m$-ary. We split the definition of $f^{\mathcal
  D(\mathbb A)}_\lambda$ into several cases and subcases. Let $\mathbf
c\in \mathcal D(\mathbb A)^m$.

\medskip

\noindent\textbf{Case 1.} $\mathbf c\in A^m\cup R^m$.

\smallskip

\noindent\underline{1a}\quad If $\mathbf c\in A^m$, we define $f^{\mathcal D(\mathbb A)}_\lambda(\mathbf c)=f^\mathbb A_\lambda(\mathbf c)$. 

\smallskip

\noindent\underline{1b}\quad If $\mathbf c\in R^m$, we define $f^{\mathcal D(\mathbb A)}_\lambda(\mathbf c)=(f^\mathbb A_\lambda)^{(k)}(\mathbf c)$.

\smallskip

\noindent\textbf{Case 2.} $\mathbf c\in\Delta_m\setminus(A^m\cup R^m)$.

\smallskip

\noindent Let $e_i=\epsilon(c_i)$ for $i\in[m]$ and $e=(f^\mathbb
A_\lambda)^{(k+1)}(e_1,\dots,e_m)$. Let $l\in[k]$ be minimal such that
$c_i\in\mathbb P_{e_i,l}$ for all $i\in[m]$. (Its existence is
guaranteed by Lemma \ref{lem:diagonalcomponent} (i).)

\smallskip

\noindent\underline{2a}\quad If $\mathbb P_{e,l}$ is a single edge,
then we define $f^{\mathcal D(\mathbb A)}_\lambda(\mathbf c)$ to be
the vertex from $\mathbb P_{e,l}$ having the same level as all the
$c_i$'s.

\smallskip 

\noindent If $\mathbb P_{e,l}$ is a zigzag, then at least one of the
$\mathbb P_{e_i,l}$'s is a zigzag as well. (This follows from the
construction of $\mathcal D(\mathbb A)$ and the fact that $f^{\mathcal
  D(\mathbb A)}_\lambda$ preserves $R$.) 
For every $i\in[m]$ such that $\mathbb P_{e_i,l}$ is a zigzag let
$\Phi_i:\mathbb P_{e_i,l}\to\mathbb Z$ be the (unique)
isomorphism. Let $\Phi$ denote the isomorphism from $\mathbb P_{e,l}$
to $\mathbb Z$.

\smallskip

\noindent\underline{2b}\quad If all of the $\mathbb P_{e_i,l}$'s are
zigzags, then the value of $f^{\mathcal D(\mathbb A)}_\lambda$ is
defined as follows:
$$
f^{\mathcal D(\mathbb A)}_\lambda(\mathbf c)=\Phi^{-1}(f^\mathbb Z_\lambda(\Phi_1(c_1),\dots,\Phi_m(c_m))).
$$

\smallskip

\noindent\underline{2c}\quad Else, we define $f^{\mathcal D(\mathbb
  A)}_\lambda(\mathbf c)$ to be the $\sqsubseteq$-minimal element from
the set
$$
\{\Phi^{-1}(\Phi_i(c_i))\mid\mathbb P_{e_i,l}\text{ is a zigzag}\}.
$$

\smallskip

\noindent\textbf{Case 3.} $\mathbf c\notin\Delta_m$.

\smallskip

\noindent\underline{3a}\quad If $|\{\lvl(c_i)\mid i\in[m]\}|=1$ and
$|\{\epsilon(c_i)\mid i\in[m]\}|=2$, say that
$\epsilon(c_i)\in\{e,e'\}$ for all $i\in[m]$ and $e\prec e'$
, then
let $\Phi:\{e,e'\}\to\{00,10\}$ be the bijection mapping $e$ to $00$
and $e'$ to $10$. We define $f^{\mathcal D(\mathbb A)}_\lambda(\mathbf
c)$ to be the $\sqsubseteq$-minimal element from the set
$$
\{c_i:\epsilon(c_i)=\Phi^{-1}(f^\mathbb Z_\lambda(\Phi(\epsilon(c_1)),\dots,\Phi(\epsilon(c_m)))\}.
$$

\smallskip

\noindent\underline{3b}\quad If $|\{\lvl(c_i)\mid i\in[m]\}|=2$, say
that $\lvl(c_i)\in\{l,l'\}$ for all $i\in[m]$ and $l<l'$, then let
$\Phi:\{l,l'\}\to\{00,10\}$ be the bijection mapping $l$ to $00$ and
$l'$ to $10$. We define $f^{\mathcal D(\mathbb A)}_\lambda(\mathbf c)$
to be the $\sqsubseteq$-minimal element from the set
$$
\{c_i:\lvl(c_i)=\Phi^{-1}(f^\mathbb Z_\lambda(\Phi(\lvl(c_1)),\dots,\Phi(\lvl(c_m)))\}.
$$

\smallskip

\noindent\underline{3c}\quad In all other cases we define $f^{\mathcal
  D(\mathbb A)}_\lambda(\mathbf c)$ to be the $\sqsubseteq$-minimal
element from the set $\{c_1,\dots,c_m\}$.

\medskip

We need to verify that the operations we constructed are polymorphisms
and that they satisfy all identities from $\Sigma$. We divide the
proof into three claims.

\begin{claim}
For every $\lambda\in\Lambda$, $f^{\mathcal D(\mathbb A)}_\lambda$ is a polymorphism of $\mathcal D(\mathbb A)$.
\end{claim}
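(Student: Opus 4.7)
The plan is to fix $\lambda\in\Lambda$ and verify that $f^{\mathcal{D}(\mathbb{A})}_\lambda$ preserves the edge relation of $\mathcal{D}(\mathbb{A})$: whenever $\mathbf{c}\to\mathbf{d}$ in $\mathcal{D}(\mathbb{A})^m$ (equivalently, $c_i\to d_i$ for every $i$), I need to show $f^{\mathcal{D}(\mathbb{A})}_\lambda(\mathbf{c})\to f^{\mathcal{D}(\mathbb{A})}_\lambda(\mathbf{d})$. The argument will be a case analysis tightly parallel to the one used to define the operation.

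First I would narrow the possibilities. Because $\mathcal{D}(\mathbb{A})$ is balanced, $\lvl{d_i}=\lvl{c_i}+1$ for all $i$; because $\Delta_m$ is a connected component of $\mathcal{D}(\mathbb{A})^m$, we have $\mathbf{c}\in\Delta_m$ iff $\mathbf{d}\in\Delta_m$; and by Lemma~\ref{lem:diagonalcomponent} all entries of a tuple in $\Delta_m$ share a common level. These constraints rule out most case pairings and leave only the transitions $\text{1a}\to\text{2}$, $\text{2}\to\text{2}$ and $\text{2}\to\text{1b}$ within $\Delta_m$, and $\text{3}\to\text{3}$ outside of it.

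For the boundary transitions linking Case~1 and Case~2, the key point is that the element $e=(f^{\mathbb{A}}_\lambda)^{(k+1)}(\epsilon(d_1),\dots,\epsilon(d_m))$ computed in Case~2 lies in $A\times R$ (since $f^{\mathbb{A}}_\lambda$ preserves $R$) and its first coordinate equals $f^{\mathbb{A}}_\lambda(\mathbf{c})=\iota\mathbb{P}_e$, to which the level-$1$ vertex $f^{\mathcal{D}(\mathbb{A})}_\lambda(\mathbf{d})\in\mathbb{P}_{e,1}$ is adjacent; the transition $\text{2}\to\text{1b}$ is dual, using $\tau\mathbb{P}_e$. For $\text{2}\to\text{2}$ one first checks that the witness index $l$ either stays the same or increments by one along the edge, and then, after conjugating by the isomorphisms $\Phi$ and $\Phi_i$ to the zigzag $\mathbb{Z}$, edge preservation reduces to the fact that $f^{\mathbb{Z}}_\lambda$ preserves the edges of $\mathbb{Z}$ (subcase 2b) or to the pre-construction observation that the $\sqsubseteq$-minima of two adjacent sets of non-boundary vertices are themselves adjacent in $\mathcal{D}(\mathbb{A})$ (subcase 2c). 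The mixed 2a/2b/2c transitions work because $\{00,10\}$ and $\{01,11\}$ are subuniverses of $\mathbb{Z}$, so $f^{\mathbb{Z}}_\lambda$ preserves level along an edge and hence cannot map a single-edge image to a zigzag image (or vice versa) in a way that breaks adjacency.

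The main obstacle is Case~$\text{3}\to\text{3}$, where both sides of the edge live outside $\Delta_m$ and there is no direct appeal to $f^{\mathbb{A}}_\lambda$. The principal tool is again the pre-construction observation. For subcase 3b$\to$3b, the bijection $\Phi\colon\{l,l'\}\to\{00,10\}$ used for $\mathbf{c}$ is naturally replaced by the corresponding $\Phi\colon\{l+1,l'+1\}\to\{01,11\}$ for $\mathbf{d}$, and the fact that $f^{\mathbb{Z}}_\lambda$ maps edges of $\mathbb{Z}$ to edges of $\mathbb{Z}$ together with its preservation of the two level-subuniverses arranges for the two selection sets, one defining $f^{\mathcal{D}(\mathbb{A})}_\lambda(\mathbf{c})$ and the other defining $f^{\mathcal{D}(\mathbb{A})}_\lambda(\mathbf{d})$, to satisfy the hypothesis of the observation. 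Subcase 3a$\to$3a is analogous with $\epsilon$-values replacing levels, and the remaining transitions (including those into or out of 3c) reduce in the same way to the preservation of $\{00,10\}$ by $f^{\mathbb{Z}}_\lambda$, which pins down consistently on both sides of the edge the set from which the $\sqsubseteq$-minimum is extracted.
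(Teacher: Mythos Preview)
Your approach matches the paper's: a case analysis along the subcases of the definition, with the $\sqsubseteq$-minimum observation doing the real work in Case~3 and parts of Case~2. Two places where you and the paper diverge are worth flagging.

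You treat $\text{3a}\to\text{3a}$ as a live transition. It is vacuous: in subcase~3a all coordinates of $\mathbf{c}$ share a common level while $\mathbf{c}\notin\Delta_m$, so by Lemma~\ref{lem:diagonalcomponent}(ii) the connected component of $\mathbf{c}$ in $\mathcal D(\mathbb A)^m$ is a singleton and $\mathbf{c}$ has no outgoing edge. The paper disposes of 3a for $\mathbf{c}$ (and of 1b, for the dual reason) at the very start. The same reasoning shows there are no mixed transitions ``into or out of 3c'': if $\mathbf{c}$ lies in 3c and has an outgoing edge then necessarily $|\{\lvl(c_i)\}|\geq 3$, whence $|\{\lvl(d_i)\}|\geq 3$ and $\mathbf{d}$ is in 3c as well. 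So only $\text{3b}\to\text{3b}$ and $\text{3c}\to\text{3c}$ survive.

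For $\text{3b}\to\text{3b}$ you route the computation for $\mathbf{d}$ through a bijection $\{l{+}1,l'{+}1\}\to\{01,11\}$ and appeal to $f^{\mathbb Z}_\lambda$ preserving edges of $\mathbb Z$. But the definition of $f^{\mathcal D(\mathbb A)}_\lambda$ in subcase~3b maps into $\{00,10\}$ for $\mathbf{d}$ just as it does for $\mathbf{c}$, so the edge you produce in $\mathbb Z$ does not by itself identify the level the \emph{definition} selects for $\mathbf{d}$ (indeed, from $10$ there are edges to both $01$ and $11$, so edge-preservation alone is not enough). The paper's argument is more direct: since $\lvl(d_i)=\lvl(c_i)+1$ for every $i$, the $\{00,10\}$-encoding of the level pattern of $\mathbf{d}$ is entrywise identical to that of $\mathbf{c}$, so $f^{\mathbb Z}_\lambda$ returns the \emph{same} element of $\{00,10\}$ in both computations and the selected level for $\mathbf{d}$ is exactly one above that for $\mathbf{c}$. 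The two selection sets are then indexed by the same $I\subseteq[m]$ and the $\sqsubseteq$-minimum observation applies directly. This is the one spot where your write-up has a genuine gap; the rest is over-inclusion rather than error.
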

\begin{proof}
  Let $\mathbf c\boldsymbol\rightarrow\mathbf d$ be an edge in
  $\mathcal D(\mathbb A)^m$. Note that $\mathbf c\in\Delta_m$ if and
  only if $\mathbf d\in\Delta_m$. The tuple $\mathbf c$ cannot fall
  under subcase \underline{1b} or under \underline{3a}, because these
  cases both prevent an outgoing edge from $\mathbf c$ (see Lemma
  \ref{lem:diagonalcomponent} (ii) for why this is true for \underline{3a}).

  We  first consider the situation where $\mathbf c$ falls under
  subcase \underline{1a} of the definition. Then $\mathbf d$ falls
  under case 2 and, moreover, $d_i=\iota\mathbb P_{e_i,1}$ for all
  $i\in[m]$. It is not hard to verify that $f^{\mathcal D(\mathbb
    A)}_\lambda(\mathbf d)=\iota\mathbb P_{e,1}$. (In subcase
  \underline{2b} we need the fact that $f^\mathbb Z_\lambda$ is
  idempotent.) Therefore $f^{\mathcal D(\mathbb A)}_\lambda(\mathbf
  c)=\iota\mathbb P_e\boldsymbol{\rightarrow}\iota\mathbb
  P_{e,1}=f^{\mathcal D(\mathbb A)}_\lambda(\mathbf d)$ and the
  polymorphism condition holds. The argument is similar when $\mathbf
  d$ falls under subcase \underline{1b} (and so $\mathbf c$ under case
  2).

  Consider now that $\mathbf c$ falls under case 2. Then $\mathbf d$
  falls either under subcase \underline{1b}, which was handled in the
  above paragraph, or also under case 2. The choice of $e_1,\dots,e_m$
  and $e$ is the same for both $\mathbf c$ and $\mathbf d$. By Lemma
  \ref{lem:diagonalcomponent} (i), there exists $l\in[k]$ such that
  $c_i,d_i\in\mathbb P_{e_i,l}$ for all $i\in[m]$.

  If the value of $l$ is also the same for both $\mathbf c$ and
  $\mathbf d$, then $f^{\mathcal D(\mathbb A)}_\lambda(\mathbf
  c)\boldsymbol{\rightarrow}f^{\mathcal D(\mathbb A)}_\lambda(\mathbf
  d)$ follows easily; in subcase \underline{2a} trivially, in
  \underline{2b} from the fact that $f^\mathbb Z_\lambda$ is a
  polymorphism of $\mathbb Z$ and in \underline{2c} from the
  observation about $\sqsubseteq$.

  It may be the case that this $l$ is not minimal for the tuple
  $\mathbf c$, that is, that $c_i\in\mathbb P_{e_i,l-1}$ for all
  $i\in[m]$. However, it then follows that $c_i=\tau\mathbb
  P_{e_i,l-1}=\iota\mathbb P_{e_i,l}$ and thus $f^{\mathcal D(\mathbb
    A)}_\lambda(\mathbf c)=\iota\mathbb P_{e_i,l}$ (again, using
  idempotency of $f^\mathbb Z_\lambda$ in subcase
  \underline{2b}). Knowing this allows for the same argument as in the
  above paragraph to apply.

  If $\mathbf c$ falls under one of the subcases \underline{3b} or
  \underline{3c}, then $\mathbf d$ falls under the same subcase. The
  fact that $f^{\mathcal D(\mathbb A)}_\lambda(\mathbf
  c)\boldsymbol{\rightarrow}f^{\mathcal D(\mathbb A)}_\lambda(\mathbf
  d)$ follows from the above observation about $\sqsubseteq$. (In
  subcase \underline{3b} our construction ``chooses'' either the lower
  or the higher level, and it is easy to see that this choice is the
  same for both $\mathbf c$ and $\mathbf d$.)
\end{proof}

\begin{claim}
  The $f^{\mathcal D(\mathbb A)}_\lambda$'s satisfy every balanced
  identity from $\Sigma$.
\end{claim}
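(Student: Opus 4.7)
The plan is to fix a balanced identity $u \approx v$ from $\Sigma$, say $u = f_\lambda(x_{i_1},\dots,x_{i_m})$ and $v = f_\mu(x_{j_1},\dots,x_{j_{m'}})$, and to show that for every assignment $\sigma$ of the variables into $\mathcal D(\mathbb A)$ the two sides evaluate to the same element. Since $u \approx v$ is balanced, the sets $\{x_{i_1},\dots,x_{i_m}\}$ and $\{x_{j_1},\dots,x_{j_{m'}}\}$ coincide, so the argument tuples $\mathbf c = (\sigma(x_{i_1}),\dots,\sigma(x_{i_m}))$ and $\mathbf d = (\sigma(x_{j_1}),\dots,\sigma(x_{j_{m'}}))$, although possibly of different arities and multiplicities, share the \emph{same underlying set of entries}.

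The central observation to be exploited is that the case/subcase chosen in the definition of $f^{\mathcal D(\mathbb A)}_\lambda$ depends only on this common entry set. Membership in $A^m$ or $R^m$ (subcases 1a, 1b) is obviously set-determined. Membership in $\Delta_m$ is as well: any walk connecting $\mathbf c$ to a diagonal element $(a,\dots,a)$ in $\mathcal D(\mathbb A)^m$ projects to a single walk from each entry of $\mathbf c$ to $a$ in $\mathcal D(\mathbb A)$, and this walk can be lifted back to a walk in $\mathcal D(\mathbb A)^{m'}$ connecting $\mathbf d$ to the corresponding diagonal element. The conditions separating subcases 3a, 3b, 3c involve only the sets $\{\lvl(c_i)\}$ and $\{\epsilon(c_i)\}$.

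Within each case I would then verify equality of the outputs. Subcases 1a and 1b reduce immediately to $\mathbb A \models \Sigma$, using that identities pass to direct powers. In case 2 the auxiliary point $e = (f^\mathbb A_\lambda)^{(k+1)}(e_1,\dots,e_m)$ is computed componentwise in $\mathbb A^{k+1}$, so $\mathbb A \models \Sigma$ forces the same $e$ on both sides; the minimal level index $l$ and the resulting subcase (2a, 2b, or 2c) depend only on the set $\{(c_i,e_i)\}$, which is shared by $\mathbf c$ and $\mathbf d$; subcase 2b pulls back to $\mathbb Z$ through the isomorphisms $\Phi_i,\Phi$ and invokes $\mathbb Z \models \Sigma$. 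In subcases 3a and 3b the filtering value in $\{00,10\}$ is produced by $f^\mathbb Z_\lambda$ acting on the common tuple $(\Phi(\epsilon(c_i)))$ or $(\Phi(\lvl(c_i)))$, and $\mathbb Z \models \Sigma$, together with the fact that $\{00,10\}$ is a subuniverse of $\mathbb Z$, again delivers equality. Every remaining step is the $\sqsubseteq$-minimum of a set manifestly determined by the common entry set.

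The principal obstacle I expect is the bookkeeping in case 2: one must confirm that the auxiliary parameters $e$ and $l$ really do coincide on the two sides, and that the occasional ambiguity in the choice of $l$ (when $\mathbf c$ sits at a boundary between two consecutive segments) is resolved identically. Both points reduce to the facts that balanced identities transfer from $\mathbb A$ to the componentwise power $\mathbb A^{k+1}$ acting on $E = A \times R$, and that the structural data $(e_i, l)$ read off from $\mathbf c$ in case 2 depend only on the underlying set of arguments rather than on the specific tuple.
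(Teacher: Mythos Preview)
Your proposal is correct and follows essentially the same approach as the paper: both arguments hinge on the observation that the subcase selected in the definition of $f^{\mathcal D(\mathbb A)}_\lambda$ depends only on the set of entries of the input tuple (together with the auxiliary element $e$ in case~2, which you correctly argue coincides on both sides via $\mathbb A\models\Sigma$), and then verify each subcase separately using $\mathbb A\models\Sigma$, $\mathbb Z\models\Sigma$, or the set-dependence of the $\sqsubseteq$-minimum. Your treatment is in fact slightly more explicit than the paper's in justifying why membership in $\Delta_m$ is set-determined and in flagging the boundary ambiguity for $l$.
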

\begin{proof}
  Let $f_\lambda(\mathbf u)\approx f_\mu(\mathbf v)\in\Sigma$ be a
  balanced identity in $s$ distinct variables $\{x_1,\dots,x_s\}$. Let
  $\mathcal E:\{x_1,\dots,x_s\}\to\mathcal D(\mathbb A)$ be some
  evaluation of the variables. Let $\mathbf u^\mathcal E$ and $\mathbf
  v^\mathcal E$ denote the corresponding evaluation of these tuples.

  Note that both $f^{\mathcal D(\mathbb A)}_\lambda(\mathbf u^\mathcal
  E)$ and $f_\mu(\mathbf v^\mathcal E)$ fall under the same subcase of
  the definition. The subcase to be applied depends only on the set of
  elements occuring in the input tuple, except for case two, where the
  choice of $e$ matters as well. However, since the identity
  $f_\lambda(\mathbf u)\approx f_\mu(\mathbf v)$ holds in $\mathbb A$,
  this $e$ is the same for both $\mathbf u^\mathcal E$ and $\mathbf
  v^\mathcal E$. Therefore, to verify that $f^{\mathcal D(\mathbb
    A)}_\lambda(\mathbf u^\mathcal E)=f^{\mathcal D(\mathbb
    A)}_\mu(\mathbf v^\mathcal E)$, it is enough to consider the
  individual subcases separately.

  In case 1 it follows immediately from the fact that the identity
  holds in $\mathbb A$. In case 2 it is easily seen that both
  $f^{\mathcal D(\mathbb A)}_\lambda(\mathbf u^\mathcal E)$ and
  $f_\mu(\mathbf v^\mathcal E)$ have the same level, and since the
  identity holds in $\mathbb A$, they also lie on the same path
  $\mathbb P_{e,l}$. To see that these two elements are equal, note
  that in subcase \underline{2a} it is trivial, in \underline{2b} it
  follows directly from the fact that the identity holds in $\mathbb
  Z$, and in \underline{2c} we use the fact that the identity is
  balanced: they are both $\sqsubseteq$-minimal element of the same
  set $\{\mathcal E(x_1),\dots,\mathcal E(x_s)\}$. Similar arguments
  can be used in case 3. In \underline{3a} both $f^{\mathcal D(\mathbb
    A)}_\lambda(\mathbf u^\mathcal E)$ and $f_\mu(\mathbf v^\mathcal
  E)$ are chosen from the set $\{e,e'\}$; they are the same since the
  identity holds in $\mathbb Z$. In \underline{3b} the level is the
  same for both of them (since the identity holds in $\mathbb Z$) and
  they are both $\sqsubseteq$-minimal element of the set of elements
  from $\{\mathcal E(x_1),\dots,\mathcal E(x_s)\}$ lying on that
  level. In \underline{3c} both are $\sqsubseteq$-minimal element of
  the set $\{\mathcal E(x_1),\dots,\mathcal E(x_s)\}$.
\end{proof}

\begin{claim}
  The $f^{\mathcal D(\mathbb A)}_\lambda$'s satisfy every identity
  from $\Sigma$ in at most two variables.
\end{claim}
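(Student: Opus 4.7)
The plan is to argue exactly as in the proof of Claim 2, with added care for non-balanced two-variable identities. I would fix an identity $f_\lambda(\mathbf u) \approx f_\mu(\mathbf v) \in \Sigma$ in at most two variables (allowing either side to be a single variable, interpreted via the identity operation) and an evaluation $\mathcal E$. When the identity uses only one variable, or when $c := \mathcal E(x) = \mathcal E(y) =: d$, the result follows from idempotency of each $f^{\mathcal D(\mathbb A)}_\lambda$: on a constant tuple subcases 2c and 3 cannot arise, while idempotency of $f^\mathbb A_\lambda$ (in 1a, 1b, 2a) and $f^\mathbb Z_\lambda$ (in 2b) both deliver the value $c$.

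So assume $c \neq d$. As a preliminary observation, subcase 3c does not arise for a two-variable tuple: the only configuration not handled by 3a or 3b is $|\{\lvl(c_i)\}| = |\{\epsilon(c_i)\}| = 1$, which forces $c$ and $d$ onto a common path $\mathbb P_e$ at a common level; but then $(c, d, c, \dots)$ admits a shared in- or out-neighbour along $\mathbb P_e$, so Lemma~\ref{lem:diagonalcomponent} places it in $\Delta_m$, contradicting case~3.

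When both $\mathbf u$ and $\mathbf v$ use both variables, both evaluated tuples have underlying set $\{c, d\}$ and hence fall in the same primary case; in case~2, $\mathbb A \models \Sigma$ applied coordinatewise in $A \times R$ ensures the auxiliary datum $e$ agrees on both sides, and so do $\mathbb P_{e, l}$ and the subcase. Subcase by subcase, I would dispatch: 1a and 1b from $\mathbb A \models \Sigma$; 2a from the canonical choice at the common level; 2b, 3a, and 3b from $\mathbb Z \models \Sigma$ (using in 3a and 3b that $\{00, 10\}$ and $\{01, 11\}$ are subuniverses of $\mathbb Z$ via Lemma~\ref{lem:ppdef_polymorphisms}); and 2c from the observation that the defining set $\{\Phi^{-1}(\Phi_i(c_i)) \mid \mathbb P_{e_i, l}\text{ is a zigzag}\}$ depends only on $\{c, d\}$ and on the zigzag status of $\mathbb P_{\epsilon(c), l}$ and $\mathbb P_{\epsilon(d), l}$, giving the same $\sqsubseteq$-minimum on both sides.

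The hardest part is the asymmetric situation in which one side, say $\mathbf v$, uses only one variable so that $\mathbf v^\mathcal E = (c, \dots, c)$ evaluates to $c$, while $\mathbf u^\mathcal E$ is non-constant and may fall into a different subcase. Here the key leverage is that $\mathbb A \models f_\lambda(\mathbf u) \approx x$ applied coordinatewise in $A \times R$ forces $e = \epsilon(c)$ in case~2, so $\mathbb P_{e, l} = \mathbb P_{\epsilon(c), l}$. From this I would conclude: in 2a the level-$\lvl(c)$ vertex of that path is exactly $c$; in 2b, $\Phi$ coincides with $\Phi_c$ and idempotency of $f^\mathbb Z_\lambda$ gives $c$; in 2c the condition ``$\mathbb P_{e, l}$ is a zigzag but not all $\mathbb P_{e_i, l}$ are'' forces $\mathbb P_{\epsilon(d), l}$ to be a single edge, so the defining set collapses to $\{c\}$. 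In 3a and 3b, $\mathbb Z \models \Sigma$ forces the output of $f^\mathbb Z_\lambda$ to be $\Phi(\epsilon(c))$ or $\Phi(\lvl(c))$ respectively, so the filtering condition singles out precisely the positions where $c_i = c$, whose $\sqsubseteq$-minimum is $c$.
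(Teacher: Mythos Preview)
Your proposal follows essentially the same approach as the paper's proof. The paper is more economical: it first invokes the previous claim to dispose of all balanced identities, and then uses idempotency of $\Sigma$ to reduce any remaining two-variable identity to the form $f_\lambda(\mathbf u)\approx x$, so only your ``asymmetric'' paragraph is actually needed. Your case analysis for that situation matches the paper's (including the key observation that in subcase~2c one must have $\mathbb P_{\epsilon(d),l}$ a single edge, collapsing the defining set to $\{c\}$), and your explicit justification that subcase~3c cannot arise for a two-element tuple is a welcome detail the paper leaves implicit.

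There is one small slip: in subcase~2b of the asymmetric case you write that ``idempotency of $f^\mathbb Z_\lambda$ gives $c$''. Idempotency is not enough here, since the tuple $(\Phi_1(c_1),\dots,\Phi_m(c_m))$ need not be constant (it takes the value $\Phi(c)$ at $x$-positions and $\Phi_d(d)$ at $y$-positions, and these two elements of $\mathbb Z$ may differ). What you need, exactly as in your treatment of~3a and~3b, is that $\mathbb Z\models f_\lambda(\mathbf u)\approx x$, which forces $f^\mathbb Z_\lambda(\Phi_1(c_1),\dots,\Phi_m(c_m))=\Phi(c)$ and hence $f^{\mathcal D(\mathbb A)}_\lambda(\mathbf c)=\Phi^{-1}(\Phi(c))=c$. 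With that correction your argument is complete.
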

\begin{proof}
  Balanced identities fall under the scope of the previous
  claim. Since $\Sigma$ is idempotent, we may without loss of
  generality consider only identities of the form $f_\lambda(\mathbf
  u)\approx x$, where $\mathbf u\in\{x,y\}^m$. Suppose that $x$ and
  $y$ evaluate to $c$ and $d$ in $\mathbb D_\mathbb A$, respectively,
  and let $\mathbf c\in\{c,d\}^n$ be the evaluation of $\mathbf u$. We
  want to prove that $f^{\mathcal D(\mathbb A)}_\lambda(\mathbf
  c)\approx c$.

  The tuple $\mathbf c$ cannot fall into subcase \underline{3c} of the
  definition of $f^{\mathcal D(\mathbb A)}_\lambda$. If it falls into
  case 1, the equality follows from the fact that the identity holds
  in $\mathbb A$ while in subcases \underline{3a} and \underline{3b}
  we use the fact that it holds in $\mathbb Z$.

  In case 2 it is easily seen that $f^{\mathcal D(\mathbb
    A)}_\lambda(\mathbf c)$ lies on the same path $\mathbb P_{e,l}$ as
  $c$ (using that the identity holds in $\mathbb A$) as well as on the
  same level of this path. In \underline{2a} it is trivial that
  $f^{\mathcal D(\mathbb A)}_\lambda(\mathbf c)=c$ while in
  \underline{2b} it follows from the fact that the identity holds in
  $\mathbb Z$. If $\mathbf c$ falls under subcase \underline{2c}, then
  $\mathbb P_{\epsilon(d),l}$ must be a single edge, and thus
  $f^{\mathcal D(\mathbb A)}_\lambda(\mathbf c)$ is defined to be the
  $\sqsubseteq$-minimal element of the one element set $\{c\}$.
\end{proof}

\subsubsection*{Proof of Corollary \ref{cor:preserved_conditions}.}
All items are expressible by linear idempotent sets of identities. In
all items except (5) they are in at most two variables, in item (5)
the defining identities are balanced. It remains to check that all
these conditions are satisfied in the zigzag, which follows from Lemma
\ref{lem:Z3perm} for item (10) and from Lemma
\ref{cor:ZdistrlatCorollary} for all other items.

\end{document}